\theoremstyle{plain}
\newtheorem{theorem}{Theorem}
\newtheorem{lemma}[theorem]{Lemma}
\newtheorem{assume}[theorem]{Assumption}
\theoremstyle{definition}
\newtheorem{remark}{Remark}
\newcommand{\thmref}[1]{Theorem~\ref{#1}}
\newcommand{\lemref}[1]{Lemma~\ref{#1}}
\newcommand{\secref}[1]{Section~\ref{#1}}
\newcommand{\appref}[1]{Appendix~\ref{#1}}
\newcommand{\figref}[1]{Figure~\ref{#1}}
\newcommand{\bigH}{\overline{H}}
\newcommand{\bigU}{\overline{U}}
\newcommand{\dt}{\Delta t}
\newcommand{\ds}{\Delta s}
\newcommand*{\rom}[1]{\expandafter\@slowromancap\romannumeral #1@}
\newcommand{\wt}[1]{\widetilde{#1}}
\newcommand\myeq[1]{\mathrel{\stackrel{\makebox[0pt]{\mbox{\normalfont\tiny #1}}}{=}}}
\newcommand{\Natural}{\mathbb{N}}
\newcommand{\Real}{\mathbb{R}}
\newcommand{\Complex}{\mathbb{C}}
\newcommand{\ee}{\mathbb{E}}
\newcommand{\eps}{\epsilon}
\newcommand{\exact}{\varphi_{\text{true}}}
\newcommand{\exactorth}{\varphi_{\perp}}
\definecolor{cellbg}{RGB}{255,237,237}
\definecolor{cellbgblue}{RGB}{204,204,255}
\newcommand{\unit}{\mathbb{I}}
\newcommand{\id}{{\normalfont \bf I}}
\newcommand{\hbt}{\mathcal{H}}
\newcommand{\leftmarginnew}{0.75cm}
\newcommand{\h}{{\boldsymbol{h}}}
\newcommand{\DAS}{digitized adiabatic simulation}
\newcommand{\DAScap}{Digitized adiabatic simulation}
\newcommand{\one}{\mathbbm{1}}
\newcommand{\augspace}{\mathscr{X}}
\newcommand{\timedom}{\mathbb{T}}
\newcommand{\exactU}[2]{\mathcal{{U}}(#1,#2)}
\newcommand{\hdu}{\wt{U}}
\newcommand{\eT}[1]{e_{\mathcal{T}}^{#1}}
\newcommand{\expT}[1]{\exp_\mathcal{T}\big(#1\big)}
\newcommand{\HDR}{Huyghebaert and De Raedt}
\newcommand{\SH}{Sambe-Howland}
\newcommand{\dthalf}{\frac{\dt}{2}}
\begin{document}
\allowdisplaybreaks

\title{Unifying framework for quantum simulation algorithms for time-dependent Hamiltonian dynamics}
\author{Yu Cao$^{1,2,3}$ Shi Jin$^{1,2,3}$ Nana Liu$^{1,2,3,4}$}

\affiliation{$^1$ School of Mathematical Sciences, Shanghai Jiao Tong University, Shanghai, 200240, China}
\affiliation{$^2$ Institute of Natural Sciences, Shanghai Jiao Tong University, Shanghai 200240, China}
\affiliation{$^3$ Ministry of Education Key Laboratory in Scientific and Engineering Computing, Shanghai Jiao Tong University, Shanghai 200240, China}
\affiliation{$^4$ Global College, Shanghai 200240, China.}
\email{Corresponding author: nana.liu@quantumlah.org}

\begin{abstract}
Recently, there has been growing interest in simulating time-dependent Hamiltonians using quantum algorithms, driven by diverse applications, such as quantum adiabatic computing. While techniques for simulating time-independent Hamiltonian dynamics are well-established, time-dependent Hamiltonian dynamics is less explored and it is unclear how to systematically organize existing methods and to find new methods. Sambe-Howland's continuous clock elegantly transforms time-dependent Hamiltonian dynamics into time-independent Hamiltonian dynamics, which means that by taking different discretizations, existing methods for time-independent Hamiltonian dynamics can be exploited for time-dependent dynamics. In this work, we systemically investigate how Sambe-Howland's clock can serve as a unifying framework for simulating time-dependent Hamiltonian dynamics. Firstly, we demonstrate the versatility of this approach by showcasing its compatibility with analog quantum computing and digital quantum computing. Secondly, for digital quantum computers, we illustrate how this framework, combined with time-independent methods (e.g., product formulas, multi-product formulas, qDrift, and LCU-Taylor), can facilitate the development of efficient algorithms for simulating time-dependent dynamics. This framework allows us to (a) resolve the problem of finding minimum-gate time-dependent product formulas; (b) establish a unified picture of both Suzuki's and Huyghebaert and De Raedt's approaches; (c) generalize Huyghebaert and De Raedt's first and second-order formula to arbitrary orders; (d) answer an unsolved question in establishing time-dependent multi-product formulas; (e) and recover continuous qDrift on the same footing as time-independent qDrift. Thirdly, we demonstrate the efficacy of our newly developed higher-order Huyghebaert and De Raedt's algorithm through digital adiabatic simulation.
\end{abstract}

\maketitle
\tableofcontents

\newpage
\section{Introduction}

The concept of quantum computers, which aim to surpass the fundamental limits of classical computers, was proposed by Feynman in 1982 \cite{feynman_simulating_1982}. Since then, this direction  has attracted significant attention. 
As quantum computers are designed using the principles of quantum mechanics, and closed quantum systems inherently follow Schrödinger's equation (i.e., Hamiltonian dynamics with unitary evolution), the problem of simulating Hamiltonian dynamics has emerged as a core challenge across multiple disciplines. The pioneering work of Lloyd \cite{lloyd_universal_1996} introduced a foundational approach of developing quantum algorithms to simulate Hamiltonian dynamics in physics, and this study has inspired research on Hamiltonian simulation using quantum devices over the past three decades.
To date, numerous quantum and quantum-classical algorithms have been proposed for simulating time-independent Hamiltonian dynamics. These include, but are not limited to, product formulas
(such as the Lie-Trotter-Suzuki formula and its variants) \cite{trotter_product_1959,suzuki_generalized_1976,suzuki_fractal_1990,suzuki_general_1991}, multi-product formulas \cite{chin_multi_product_2011,childs_hamiltonian_2012,low_well_conditioned_2019,endo_mitigating_2019},
randomized algorithms \cite{poulin_quantum_2011,zhang_randomized_2012,childs_faster_2019,campbell_random_2019},  Taylor's series \cite{berry_simulating_2015} with Linear Combination of Unitaries (LCU) \cite{childs_hamiltonian_2012}, qubitization and quantum signal processing \cite{low_hamiltonian_2019}. Furthermore, detailed methods or analysis have been adapted for various situations, such as multi-scale Hamiltonians \cite{bosse_efficient_2024,sharma_hamiltonian_2024} and unbounded Hamiltonians \cite{suzuki_hybrid_1995}.

In contrast, time-dependent Hamiltonian simulation has received relatively less attention. There are several notable classical approaches. 
One prominent example is the time-dependent product formula developed by Huyghebaert and De Raedt \cite{huyghebaert_product_1990} and contemporaneously, by Suzuki \cite{suzuki_general_1993}. Another popular  approach is the Magnus-based method \cite{blanes_magnus_2009}, which often requires
combination with product formulas or LCU to be implemented on quantum hardware.
Recently, researchers have, for instance, explored time-dependent multi-product formula \cite{watkins_time_2024}, time-dependent qubitization \cite{watkins_time_2024,mizuta_optimal_2023,mizuta_optimal_multiperiodic_2023}, and time-dependent qDrift \cite{berry_time_dependent_2020}, as well as time-dependent algorithm with minimum gate counts \cite{ikeda_minimum_2023}. 
However, it appears that there is a lack of a systematic approach to leverage the benefits and advances in recent studies on time-independent Hamiltonian simulation problems for time-dependent situations within the community.
Several questions raised by some of these works remain partially addressed, leaving open questions or only provide limited solutions for special instances. 

This motivates us to provide a \emph{systemic way} through \SH{}'s clock to (I)  recover many existing time-dependent algorithms using time-independent algorithms; (II) establish the  connection between some popular algorithms in literatures; (III) provide the mathematical machinery of using \SH{}'s clock; (IV) and more importantly, to develop new high-order \HDR{}'s algorithm, which was discussed as an open question in a recent work \cite{ikeda_minimum_2023}.

To the best of our knowledge, \SH{} clock, an elegant and powerful tool, appears to have been under-appreciated for time-dependent Hamiltonian simulation algorithms.
We also observe that technical challenges arise for discrete variants of \SH{}'s clock in designing quantum algorithms. Since it seamlessly connects time-dependent Hamiltonian dynamics with time-independent Hamiltonian dynamics, it allows us to use the richer set of tools from the latter to find new algorithms for the former case. 
 Thus this motivates us to investigate how the \emph{continuous form} of \SH{}'s clock can be exploited as a unifying framework from which existing time-dependent Hamilton simulation algorithms can be derived, and also to enhance existing algorithms by applying existing methods for time-independent Hamiltonian dynamics to time-dependent Hamiltonian dynamics. 

\subsection{Problem setup}
 
 To properly setup the problem, our goal is to simulate the following time-dependent Schrödinger's equation,
\begin{align*}
\partial_t \ket{\psi(t)} = -i H(t) \ket{\psi(t)}, \qquad t\in [0,1].
\end{align*}
Such equations might arise from many scenarios, e.g., \DAS{} \cite{albash_adiabatic_2018}, dynamics in the interaction picture \cite{low_hamiltonian_2019_paper_b}, or from the resulting equation after applying Schrödingerisation \cite{jin2212quantum,jin2022quantum, jin_analog_2024, time_dilation_2023} for quantum simulation of general linear ordinary differential equations (ODEs) and partial differential equations (PDEs). 
The time-evolution operator for $s\le t$ is defined as follows:
\begin{align}
\label{eqn::exactu}
\exactU{t}{s} := \exp_{\mathcal{T}}\big(-i \int_{s}^{t} H(r)\ \dd r\big).
\end{align}
The notation $\mathcal{T}$ is the time-chronological ordering operator.
The ordering of the non-commutative Hamiltonian is conceived as an inconvenience in simulating the time-dependent Schrödinger's equation in a digital quantum circuit. As a result, this has been a very active research topic in recent years to address this issue.

 \begin{assume}
{We consider Hamiltonians with the following form
 \begin{align}
 \label{eqn::decompose}
H(t) = \sum_{k=1}^{\Lambda} H_k(t).
\end{align}
This structure has received much attention and has been used in numerous studies.
We assume that $H_k(t)$ are smooth with respect to the time variable. The time domain $\timedom$ can be either a 1D torus or simply $\Real$. All operators $H_k(t)$ are defined on a finite-dimensional Hilbert space $\hbt$.}
\end{assume}

{For instance, one may consider that}
\begin{align}
\label{eqn::Hk}
H_k(t) = f_k(t) \h_k, \qquad f_k \in C^\infty(\timedom),
\end{align}
and $\h_k$ are operators on a finite-dimensional Hilbert space $\hbt$ with bounded norm.
{Though most algorithms discussed below only relies on the form \eqref{eqn::decompose} and the regularity assumption above,  we remark that the choice \eqref{eqn::Hk} is also not} very restrictive, e.g., for studying \DAS.
The Hamiltonian $H(t)$ with the form \eqref{eqn::Hk} belongs to the linear combination (LC) model. When we further assume that $\h_k$ are unitary, this becomes the linear combination of unitary model (LCU model). Notably, it is well-established that the LC model also encompasses the sparse 
Hamiltonian case, as shown in e.g., \cite{berry_efficient_2007,berry_time_dependent_2020}.

\subsection{\SH{}’s clock}

The \SH{}'s clock Hamiltonian introduces a conjugate pair of the time variable (commonly referred to as the energy in physical literature) 
into the original time-dependent Hamiltonian, effectively transforming a time-dependent Schrödinger's equation into a time-independent one \cite{sambe_steady_1973,howland1974stationary}. This formalism is also known as the $(t,t')$ method in the literature \cite{peskin_solution_1993}, and has been applied to various situations, including steady-state perturbation theory \cite{sambe_steady_1973}, scattering theory \cite{howland1974stationary}, open quantum system \cite{burgarth_control_2023}, quantum computing \cite{watkins_time_2024} (where a discrete version was employed), simulation of time-dependent PDEs using analog quantum computing \cite{time_dilation_2023}.
The \SH{}'s clock Hamiltonian proceeds as follows \cite{sambe_steady_1973,howland1974stationary,reed1975methods,peskin_solution_1993}:
by augmenting the quantum dynamics with a continuous auxiliary state $s$ and defining the \SH{}'s clock Hamiltonian:
\begin{align}
\label{eqn::bigH}
\bigH &= \hat{p}_s \otimes \id + H(\hat{s}).
\end{align}
Here the $s$ mode - with corresponding quadrature operators $\hat{s}$ and $\hat{p}_s$ where $[\hat{s}, \hat{p}_s]=i\mathbf{I}_s$ - is referred to as the clock mode. 

With the extended Hamiltonian $\bigH$, one can evolve the closed Hamiltonian dynamics in the extended space:
\begin{align}
\label{eqn::full_dynamics}
\begin{aligned}
\partial_t \ket{\Psi(t)} &= -i \bigH \ket{\Psi(t)},\\
\ket{\Psi(0)} &= \Big(\int G(s)\ket{s}\Big) \otimes \ket{\psi(0)},
\end{aligned}
\end{align}
which amounts to solving the following PDE 
(see e.g., \cite[Chapter~X.12]{reed1975methods} and also \cite{sambe_steady_1973,howland1974stationary,peskin_solution_1993,time_dilation_2023})
\begin{align}
\begin{aligned}
\partial_t \Psi(t, s, \cdot) &= -\partial_{s} \Psi(t, s, \cdot) - i H(s) \Psi(t, s, \cdot)\\
\Psi(0, s, \cdot) &= G(s) \psi(0).
\end{aligned}
\end{align}
As the first term $-\partial_{s} \Psi(t, s, \cdot)$ plays the role as the transport, 
one can directly use the method of characteristics and verify that the exact solution takes the form \cite[Theorem 1]{howland1974stationary}:
\begin{align}
\label{eqn::Psi}
\ket{\Psi(t)} = \int\ \dd s\ G(s-t) \ket{s}\otimes \exactU{s}{s-t} \ket{\psi(0)}.
\end{align}

\begin{remark}
This expression has tight connection to linear combination of unitaries (LCU), as it has the form of an infinite linear combination (namely, the integration) of unitary evolution with fixed time span $t$ and with various terminal time $s$.
\end{remark}

Note that the state $G$ could be totally artificial (namely, introduced merely for the mathematical purpose), if we only use this formalism as an intermediate step to derive some algorithms.
If the state $G$ needs to be physical (e.g., used for analog quantum computing), a common choice of $G$ is to use the Gaussian state where 
\begin{align}
\label{eqn::G}
G(s) = \frac{1}{(2\pi\omega^2)^{1/4}}e^{-\frac{s^2}{4\omega^2}}.
\end{align}
This formalism is different from the Feynman's clock, which encodes the full-history Hamiltonian simulation problem into the ground state problem \cite{mcclean_feynmans_2013}, whereas \SH{}'s clock only worries about preparing the quantum state at a particular time.
We summarize possible variants as well as our focus in this paper in Table~\ref{table::summary}.
 
\begin{table}
\caption{This table summarizes how various choices of $\omega$ and how different discretizations of the clock mode connects to different quantum algorithms. We colored our results and focus in red. The limit $\omega\to 0$ means it is a localized state and $\omega\to\infty$ is used to denote that $G \equiv 1$.}
\label{table::summary}
\vspace{\baselineskip}
\begin{center}
\begin{NiceTabular}{cc}[first-row,first-col, hvlines, columns-width=auto, name=A]
 & \makecell{\Block[]{1-1}{\bf Hamiltonian (digital)}}  & \makecell{\Block[]{1-1}{{\bf Hamiltonian (analog)}}}  \\
\toprule
 {\bf clock (analog)} & \makecell{\Block[]{1-1}{hybrid system}}   & \makecell{\Block[draw=red, fill=cellbg]{1-1}{{\bf fully analog} \\(see \secref{sec::aqc}) \\ $\omega = \order{\frac{1}{T \norm{\h_2}}}$}} \\
 \makecell{\Block[]{1-1}{{\bf clock (digital)}}} & \makecell{\Block[draw=blue, fill=cellbgblue, rounded-corners]{1-1}{{\bf fully digital} \\(see below)}} & hybrid system \\
\end{NiceTabular}
\end{center}
\vspace{\baselineskip}
\begin{center}
\begin{NiceTabular}{cccc}[first-row, first-col, hlines, columns-width=0.2\textwidth, name=B]
 & \makecell{\Block[]{1-1}{$\boldsymbol{\omega}$}} & {\bf Discretize $s$ mode}  & {\bf Algorithms} & {\bf Relevant Sections} \\
 & \makecell{\Block[]{1-1}{limit $\omega\to 0$\\ ($G$ is a delta measure)\\ or limit $\omega\to \infty$ \\ ($G=1$)}}  & \makecell{\Block[]{1-1}{{operator splitting}}}  & \makecell{\Block[fill=cellbg]{1-1}{product formula and its variants\\ (e.g., multi-product formula,\\ randomized product formula)}}  &  \makecell{\Block[fill=cellbg]{1-1}{\secref{sec::pointwise}, \secref{section::HDR},\\ \secref{sec::mpf}, \secref{sec::qDrift}}} \\
& \makecell{\Block[]{1-1}{limit $\omega\to 0$\\ ($G$ is a delta measure)\\ or limit $\omega\to \infty$ \\ ($G=1$)}} & \makecell{\Block[]{1-1}{{Taylor's expansion}}}  & \makecell{\Block[fill=cellbg]{1-1}{LCU-Taylor based algorithm}}  &  \makecell{\Block[fill=cellbg]{1-1}{ \secref{sec::taylor}}} \\
& when $0<\omega \ll 1$ & \makecell{\Block[]{1-1}{finite difference}}  & \makecell{\Block[]{1-1}{qubitization-based method;\\ so far limited to the first-order \cite{watkins_time_2024}}}  & / \\
& \makecell{\Block[]{1-1}{limit $\omega\to\infty$\\ ($G=1$)}} & \makecell{\Block[]{1-1}{spectral method}}  &  \makecell{\Block[]{1-1}{\vspace{0.3\baselineskip} qubitization-based method;\\ log scale wrt error tolerance $\eps$ \\ under conditions \cite{mizuta_optimal_2023,mizuta_optimal_multiperiodic_2023}}}  &  / \\
\end{NiceTabular}
\end{center}
\tikzset{A/.style = {name prefix = A-} , B/.style = {name prefix = B-}} 
\begin{tikzpicture}[remember picture, overlay, blue, thick, opacity=0.4, dashed]
\draw [A]  (3-|1)  { [B] -- (1-|1) }
           (3-|2) { [B] -- (1-|5) }
;
\end{tikzpicture}
\end{table}

\subsection{Related works for time-dependent Hamiltonian simulation}

In this subsection, we will provide an overview of existing methods for time-independent Hamiltonian simulations, with a particular focus on related works that explore quantum algorithms for time-dependent Hamiltonian dynamics. The following \emph{non-exhaustive discussion} will introduce and examine methods that may be adaptable to the \SH{}'s clock, as well as the Magnus expansion as a parallel framework. Readers may refer to e.g., \cite{miessen_quantum_2023} for a review and additional discussions.

\bigskip
{\noindent {\bf Product formulas:}} The product formula is arguably the most extensively studied family of quantum algorithms for Hamiltonian simulation \cite{trotter_product_1959,suzuki_generalized_1976,de_raedt_product_1987,suzuki_fractal_1990,suzuki_general_1991,huyghebaert_product_1990}, also known as the Lie-Trotter-Suzuki formula. 
In the time-independent case, this approach has been analyzed e.g., in \cite{berry_efficient_2007} and more recently, the error estimate has been meticulously studied in \cite{childs_theory_2021}. 
This family of schemes boasts benefits from commutator scaling \cite{childs_theory_2021}, a feature that many scaling-optimal algorithms (to be discussed later) may not possess.
Regarding time-dependent Hamiltonian simulations,
\HDR{} generalized the Strang-splitting scheme (a second-order scheme) to the time-dependent case \cite{huyghebaert_product_1990} in 1990s. Concurrently, Suzuki proposed a general framework for deriving time-dependent Hamiltonian simulation algorithms up to any order using low-order product formulas \cite{suzuki_general_1993,hatano_finding_2005}. A detailed error analysis for Suzuki's scheme \cite{suzuki_general_1993} was further developed in \cite{wiebe_higher_2010} and adaptive schemes have been explored in e.g., \cite{wiebe_simulating_2011,zhao_adaptive_2024} to enhance efficacy for practical considerations.

\medskip
{\noindent {\bf Multi-Product formulas:}} 
For product-based formulas, the way to achieve arbitrary high-order algorithm is to use a large number of iterative passes of local Hamiltonian evolution operator (namely, the multiplicative construction). The multi-product formula instead uses the idea of Richardson's extrapolation (namely, additive construction) and the time-independent case was studied  in e.g., \cite{chin_multi_product_2011,childs_hamiltonian_2012,low_well_conditioned_2019,endo_mitigating_2019,aftab_multi_product_2024}. 
For the time-dependent case, Geiser combined product-based formulas with extrapolation techniques \cite{geiser_multi_product_2011} to develop multi-product formulas. A different variant was recently posted as an open technical question in \cite{watkins_time_2024} for well-conditioned multi-product formula. Notably, when combined with LCU techniques, 
multi-product formulas are known to achieve logarithmic error scaling \cite{low_well_conditioned_2019,aftab_multi_product_2024}, whereas the error scaling for product-based formulas is always suboptimal and algebraic.

\medskip
{\noindent {\bf Randomized algorithm:}} 
Various randomization ideas have been found to be practical and useful in quantum algorithms for Hamiltonian simulation. 
Examples include randomizing the ordering in product formulas \cite{zhang_randomized_2012,childs_faster_2019}, randomizing the time step \cite{zhang_randomized_2012,poulin_quantum_2011}, and randomly selecting local unitary evolutions in a full sequence (known as the \enquote{qDrift algorithm}) \cite{campbell_random_2019}. In particular, the latter idea has attracted much attention recently, and has been generalized to stochastic Hamiltonian sparsification \cite{ouyang_compilation_2020}, partially random algorithm \cite{jin_partially_2023,hagan_composite_2023}, qSWIFT (a high-order generalization of qDrift) \cite{nakaji_high_order_2024}, cost-informed choice for the importance weight \cite{kiss_importance_2023}, and others. 
As for time-dependent Hamiltonian simulation, a time-dependent analog of qDrift, also known as the \enquote{continuous qDrift}, was proposed in \cite{berry_time_dependent_2020}. 
Although this scheme is insightful, there appears to be a lack of deeper structural connection 
between the time-dependent case and time-independent case, which we will discuss and address later.

\medskip
{\noindent {\bf Taylor's series (TS) method and Dyson's series (DS) method:}} Taylor's expansion for time-independent Hamiltonian simulation algorithm in LCU form was initially studied in \cite{berry_simulating_2015}, which demonstrated that the complexity scales logarithmically with respect to the error precision $\order{T \frac{\log(\nicefrac{T}{\eps})}{\log \log(\nicefrac{T}{\eps})}}$ and this scaling cannot be achieved using product formulas with finite orders. As a remark, for the 1D Heisenberg model, it has been demonstrated in  \cite{childs_toward_2018} that Taylor's series method is not competitive empirically, compared with both product formula and quantum signal processing. The time-dependent generalization of Taylor series method was developed in \cite{kieferova_simulating_2019,low_hamiltonian_2019_paper_b} using Dyson's expansion, further improved in \cite{berry_time_dependent_2020} to obtain $L^1$ norm scaling estimate, and also studied in \cite{chen_quantum_2021} using permutation expansion.

\medskip
{\noindent {\bf Quantum Signal Processing/Qubitization:}} 
The qubitization technique for time-independent Hamiltonian simulation has been shown to 
achieve optimal complexity scaling, matching the theoretical lower bound \cite{low_hamiltonian_2019}. This technique utilized the Jacobi-Anger expansion, differing from other related works that employ Taylor's type expansions \cite{berry_simulating_2015}. 
The initial study on its time-dependent generalization was presented in \cite{watkins_time_2024}, which is limited to a first-order algorithm. Another approach has been presented recently in \cite{mizuta_optimal_2023} for the case where time schedules are periodic (also known as the Sambe's space). A further generalization to multi-periodic time-dependent Hamiltonians has been explored in \cite{mizuta_optimal_multiperiodic_2023}. 
For time-periodic Hamiltonians, their query complexity in \cite{mizuta_optimal_2023} matches the optimal one provided by time-independent qubitization. 
This result has also demonstrated, from a different perspective, the value of \SH{}'s formalism in designing quantum algorithms at least for periodic case. Consequently, \SH{}'s 
clock for quantum algorithms warrants a more systematic study and presentation.

\medskip
{\noindent {\bf Magnus expansion:}}
Magnus expansion has been a widely studied tool for time-dependent dynamics \cite{blanes_magnus_2009}, as well as Hamiltonian simulation problems recently, as seen in works such as \cite{an_time_dependent_2022,fang_time_dependent_2024,bandrauk_exponential_2013,ikeda_minimum_2023,chen_quantum_2023,casares_quantum_2024,bosse_efficient_2024,sharma_hamiltonian_2024}.
 Typically, after performing a finite truncation of the Magnus expansion, it is necessary to additionally employ a quadrature scheme to estimate the integration if required, 
and then apply Hamiltonian simulation techniques like product formulas 
\cite{bandrauk_exponential_2013,ikeda_minimum_2023,chen_quantum_2023,casares_quantum_2024,bosse_efficient_2024,sharma_hamiltonian_2024}, or LCU and quantum signal processing (see below) \cite{an_time_dependent_2022,fang_time_dependent_2024}. 
A key challenge in applying the Magnus expansion to quantum algorithms is the need to implement nested commutators. The community has proposed at least two approaches to address this issue.
One is to use the commutator-free quasi-Magnus (CFQM) operators \cite{blanes_fourth_2006,alvermann_high_order_2011,casares_quantum_2024}. 
As remarked above, while the CFQM method effectively handles time-dependence, it requires additional Hamiltonian simulation algorithms, such as product formulas, to be used in conjunction with it.
Another commutator-free approach involves leveraging 
Magnus-based ideas to directly develop asymmetric product formulas, as demonstrated in
\cite{ikeda_minimum_2023,chen_quantum_2023}. We remark that Magnus' expansion is a parallel framework in the same way as \SH{}'s clock, and it would be interesting to see if there is any chance to further unify them.

\bigskip\bigskip
\begin{table}[h!]
\caption{This Table summaries how \SH{}'s clock (with $\omega \to 0$) can be used as a mathematical tool to uncover existing quantum algorithms and also unravel new quantum algorithms for time-dependent Hamiltonian simulation. For the ordered decomposition, we only list the case where the number of local Hamiltonian is $\Lambda = 2$ for simplicity; \enquote{Gate (general)} means the number of gates for one-step approximation without any assumption on original Hamiltonian, whereas \enquote{Gate (DAS)} means the number of gates  for digital adiabatic simulation where $H_k(s) = f_k(s) \h_k$. We assume that time-independent decomposition has $q$-cycles as in \eqref{eqn::product_2_body} below.}
\medskip
\label{table::digital}
\renewcommand{\arraystretch}{1.5}
\begin{longtable}{p{0.22\textwidth}p{0.2\textwidth}p{0.23\textwidth}p{0.15\textwidth}p{0.15\textwidth}}
\multicolumn{5}{c}{{\bf \SH{}'s Clock with Product Formula $\boldsymbol{\bigH = A_1 + A_2 + A_3}$}} \medskip\\
\toprule
Ordered Decomposition & Time-independent & Time-dependent & Gate (general) & Gate (DAS)\\
\hline
\multirow{2}{*}{\makecell{$A_1 = \hat{p}_s\otimes \id$\\ $A_2 = H_1(\hat{s})$\\ $A_3 = H_2(\hat{s})$}} & Suzuki \cite{suzuki_general_1991} & Suzuki \cite{suzuki_general_1993} & $2\Lambda q - q$ & $2\Lambda q - 2q+1$ \\
\cline{2-5}
& \makecell[l]{any product formula \\ with $q$-cycles} & \thmref{thm::td_pointwise} & $2\Lambda q - q$ & $2\Lambda q - 2q+1$\\
\hline
\makecell[l]{$A_1 = H_1(\hat{s})$ \\ $A_2 = \hat{p}_s\otimes \id$\\ $A_3 = H_2(\hat{s})$} & \makecell[l]{any product formula \\ with $q$-cycles} & \thmref{thm::td_pointwise} & $2\Lambda q - 2q+1$ & $2\Lambda q - 2q+1$ \\
\hline{}
\multirow{3}{*}{\makecell[l]{$A_1 = \hat{p}_s \otimes \id + H_1(\hat{s})$\\
 $A_2 = - \hat{p}_s \otimes \id$\\
 $A_3 = \hat{p}_s \otimes \id + H_2(\hat{s})$}} & first-order Trotter & \makecell[l]{Huyghebaert, \\ De Raedt \cite{huyghebaert_product_1990}} & $\Lambda$ & $\Lambda$\\
 \cline{2-5}
 & Strang splitting ($q=1$) & \makecell[l]{Huyghebaert, \\ De Raedt \cite{huyghebaert_product_1990}} & $2\Lambda-1$ & $2\Lambda -1$ \\
 \cline{2-5}
 & \makecell[l]{any product formula \\ with $q$-cycles} & \thmref{thm::product_td_v2} & $2\Lambda q-2q+1$ &  $2\Lambda q-2q+1$\\
\bottomrule
\end{longtable}

\begin{longtable}{p{0.22\textwidth}p{0.2\textwidth}p{0.2\textwidth}p{0.33\textwidth}}
\multicolumn{4}{c}{{\bf \SH{}'s Clock with Multi-Product Formula $\boldsymbol{\bigH = A_1 + A_2 + A_3}$}} \medskip \\
\toprule
Ordered Decomposition & Base & Time-independent MPF & Time-dependent MPF \\
\hline{}
\makecell[l]{$A_1 = \hat{p}_s\otimes \id$\\ $A_2 =  H_1(\hat{s})$ \\ $A_3 = H_2(\hat{s})$} 
 & \multirow{2}{*}{\makecell[l]{Strang splitting}} & \multirow{2}{*}{\cite{low_well_conditioned_2019}} & \makecell[l]{\thmref{thm::mpf_v1} (using continuous-clock) \\[1.5mm] (Conjectured in \cite{watkins_time_2024} using discrete clock) \\
 (cf. \cite{geiser_multi_product_2011} using Suzuki's time operator)}\\
\cline{1-1}\cline{4-4}
\makecell[l]{$A_1 = \hat{p}_s \otimes \id + H_1(\hat{s})$\\ $A_2 = - \hat{p}_s \otimes \id$\\
$A_3 = \hat{p}_s \otimes \id + H_2(\hat{s})$} & & &  see \thmref{thm::mpf_v2}\\
\bottomrule
\end{longtable}
\begin{longtable}{p{0.2\textwidth}p{0.2\textwidth}p{0.17\textwidth}p{0.15\textwidth}p{0.18\textwidth}}
\multicolumn{5}{c}{\makecell{{\bf\SH{}'s Clock with qDrift $\boldsymbol{\bigH = \sum_{k=1}^{\Lambda}  \int_{0}^{1}\dd r\ \mu(k,r) \Big(\hat{p}_s\otimes \id + \frac{H_k(\hat{s})}{\mu(k,r)}\Big)}$}\\ 
where $\mu$ is any probability measure for $(k,r)\in \{1, 2,\cdots,\Lambda\}\times[0,1]$}} \medskip\\
\toprule{}
Measures & Time-independent qDrift & Time-dependent qDrift & Error Scaling & Notes \\
\hline{}
\makecell[l]{$\mu(k,r) = \mu_k$\\ is independent of $r$} & \multirow{2}{*}{\cite{campbell_random_2019}} & see \eqref{eqn::cqDrift_v1} & see \lemref{lem::qdrift_err_1} & / \\
\cline{1-1}\cline{3-5}
general measure $\mu$ & & see \eqref{eqn::cqDrift_v2} & see \lemref{lem::qdrift_err_2} & \makecell[l]{recover ``continuous \\ qDrift'' \cite{berry_time_dependent_2020};\\
 see \lemref{lem::c_qDrift_equiv}}\\
\bottomrule
\end{longtable}
\end{table}

\newpage

\subsection{Summary of results}

We take \SH{}'s clock as our starting point, and systemically investigate how to leverage the advances in time-independent Hamiltonian simulation problems to tackle the time-dependent case.

\medskip
{\noindent{\bf Various computational models:}} We will demonstrate that this framework 
can be applied to analog quantum computing with detailed analysis on how to choose $\omega$. The primary focus of this work is to develop algorithms suitable for digital quantum computers, which we will elaborate on in details below.

\medskip
{\noindent {\bf Product formula:}} As mentioned above, there are two main approaches for directly handing time-dependence, \HDR{}'s approach (which utilizes the access to time-integration of local Hamiltonians) and Suzuki's approach (which utilizes access to pointwise local Hamiltonians). Specifically, \HDR{} approached the time-dependent problem by drawing on insights and natural generalizations; Suzuki introduced his renowned time-operator to address the problem of time-ordering. These two parallel approaches can be, in fact, unified within the framework of \SH{}'s clock, which offers a straightforward way to design product formulas for time-dependent case. In fact, Suzuki had briefly commented on the possibility of \SH{}'s clock approach in \cite[Sect.~6]{suzuki_general_1993}; unfortunately, it seems that this path has not yet been seriously explored so far.

\begin{itemize}

\item In this work, we show that \SH{}'s clock approach provides a {unified framework} that can lead to  both \HDR{}'s  and Suzuki's approaches, just via different decomposition of $\bigH$ in \eqref{eqn::bigH}.

\item By applying time-independent scheme directly to \SH{}'s clock and with appropriate decomposition, we generalize \HDR{}'s approach to arbitrary high-order, which was discussed still as an unknown problem in recent literatures \cite{ikeda_minimum_2023}. 

\item This approach also enables us to, from a different perspective, resolve the problem of finding quantum algorithms for time-dependent problems with minimum number of gates, as studied recently in  \cite{ikeda_minimum_2023}.

\item Since we need to introduce an unbounded operator $\hat{p}_s = -i \partial_s$ into a perhaps bounded Hamiltonian $H(s)$, it brings a concern about its mathematical soundness when $G$ in \eqref{eqn::G} is chosen as a delta measure. This concern actually can be straightforwardly answered by choosing $G \equiv 1$ and using \SH{}'s clock as an intermediate step rather than as an actual quantum protocol.

\end{itemize}

The top panel of Table~\ref{table::digital} gives  a visual summary of our main results, and see \secref{sec::pointwise} and \secref{section::HDR} for technical details. As a remark, we only list the case $\Lambda=2$ for simplicity of illustration, but main results are stated for a general $\Lambda \in \Natural$ later; the decomposition order of $\bigH$ in the table matters as these operators are non-commutative. 

\medskip
{\noindent{\bf Multi-product formula:}} The multi-product formula (MPF) for time-dependent case was recently studied in \cite{watkins_time_2024}, however, one important component is only stated as a conjecture, which we will explore using the continuous \SH{}'s clock. Due to the tight connection between Suzuki's formula and \HDR{}'s approach within the framework of \SH{}'s clock, we can immediately replace the base formula in MPF to \HDR{}'s second-order algorithm, and similarly to other higher-order base schemes. The middle panel of Table~\ref{table::digital} gives  a visual summary.

\medskip
{\noindent{\bf Randomized algorithm:}} We also investigate the resulting formula by applying the time-independent qDrift into the \SH{}'s clock formalism, and under suitable decompositions, we can recover the continuous qDrift proposed by \cite{berry_time_dependent_2020}. The benefit of \SH{}'s formalism enables us to analyze e.g., the bias from time-dependent qDrift almost in the same way as time-independent qDrift.
This is summarized in the bottom panel of Table~\ref{table::digital}.

\medskip
{\noindent{\bf Finite-order LCU-Taylor algorithm:}} We also explore how Taylor's expansion may be directly applied to \SH{}'s clock to derive quantum algorithms with finite order. 
Specifically, we will show how to directly derive a second-order LCU-Taylor algorithm using \SH{}'s clock, as a possible alternative candidate to Dyson's expansion.
Its generalization to other finite-order algorithms are straightforward and is not pursued herein. 
Although these algorithms do not achieve optimal scaling, they are relatively easier for understanding and have straightforward implementation. Moreover, such algorithms serve as a 
testament to the versatility of \SH{}'s clock in providing a systematic approach to developing quantum algorithms.

\medskip
{\noindent{\bf Numerical experiments:}} We perform numerical experiments on our newly derived  high-order \HDR{}'s algorithm. Leveraging recent advances in optimizing time-independent product formulas (see e.g., \cite{barthel_optimized_2020,ostmeyer_optimised_2023}), we can seamlessly extend these improvements to the time-dependent case, yielding more practically efficient product formulas for time-dependent simulations. We consider four different fourth-order decompositions summarized in \cite{ostmeyer_optimised_2023} for demonstration, and compare with a Magnus-motivated product formula \cite{ikeda_minimum_2023} equipped with the same time-independent scheme.  This new family of algorithms exhibits better performance for digitized adiabatic simulation. We remark that wider benchmark experiments are surely needed for further validation, but these experiments have already showcased how \SH{}'s clock can assist with the design of efficient quantum algorithms.

\medskip
{\noindent{\bf Remark:}} We acknowledge a related contemporary independent work \cite{mizuta_explicit_2024}, which employs Floquet theory to derive explicit error bounds for time-dependent product formulas and multi-product formulas. While there is a thematic connection between our work (particularly \secref{section::HDR} and \secref{sec::mpf}) and \cite{mizuta_explicit_2024}, we emphasize that the technical approaches are distinct. Unlike \cite{mizuta_explicit_2024}, our approach does not need to rely on the periodic embedding of time-dependent Hamiltonians. Instead, we present a more general and flexible methodology applicable to product formulas utilizing both pointwise-query and integral-based query models, as well as other types of quantum algorithms. On the contrary, \cite{mizuta_explicit_2024} provided an error bound with commutator scaling, a result not yet pursued in this paper. Consequently, while both studies draw inspiration from the Sambe-Howland's clock formalism, they are complementary, differing in their primary findings and focuses, as well as technical details.

\subsection{Notations}

We denote the Hilbert space of the original Hamiltonian simulation problem as $\hbt$.
Suppose $A_k$ are operators for $k = 1, 2, \cdots, \Lambda$ where $\Lambda$ is an arbitrary positive integer, then 
\begin{align*}
\prod_{k=1}^{\Lambda} e^{A_k} &:= e^{A_1} e^{A_2} \cdots e^{A_{\Lambda}},\qquad
\prod_{k=\Lambda}^{1} e^{A_k} := e^{A_{\Lambda}}\cdots  e^{A_2} e^{A_1}.
\end{align*}
As operators are generally non-commutative, these two notations are distinct.
We denote $\ket{\one}$ as the constant function with value one in the domain $\timedom$ for the $s$ mode. 
We remark that this state $\ket{\one}$ could be un-normalized, which is introduced merely for mathematical purposes. 
For any operator $A$ on a Hilbert space $\hbt$, the notation $\norm{A}_{p}$ means the Schatten-$p$ norm for $p \in [1,\infty]$; for any super-operator $\mathcal{A}$, the notation $\norm{\mathcal{A}}_{p}$ means the induced $p$-norm acting on the density matrices of $\hbt$, namely, $\norm{\mathcal{A}}_{p} := \sup_{\rho} \norm{\mathcal{A}(\rho)}_p$ where the supremum is taken over all possible density matrices $\rho$.
The notation $\mathcal{U}(t,s)$ denotes the target time-dependent unitary evolution; 
$U(t)$ denotes the time-independent unitary evolution;
${U}(t,s)$ denotes the time-dependent unitary evolution (used for approximation);
$\widetilde{U}(t,s)$ denotes the Huyghebaert and De Raedt’s unitary evolution (used for approximation);
and $e_{\mathcal{T}} \  \equiv\ \exp_{\mathcal{T}}$ denotes the time-ordered exponential.

\section{Preliminaries: computational models, applications, and more on \SH{}'s clock}

\subsection{Computational models}
\label{sec::models}

\SH{}'s clock provides a generic  approach to transform a time-dependent  problem into time-independent one. However, depending on how one chooses the localized state $G$ (namely, $\omega$ in \eqref{eqn::G}) and whether one intends to use auxiliary $G$ as a true physical ancilla mode or merely an artificial one  purely for mathematical purposes, one could develop different quantum algorithms suitable for various quantum architecture starting from \SH{}'s clock.

\subsubsection{The idealized case ($\omega\to 0$)}

For the idealized case where one  chooses the Dirac delta function (as a mathematical tool only), there are two equivalent ways to recover the original state:

\begin{itemize}[leftmargin=\leftmarginnew{}]
\item  {\bf (Localized version).} If the auxiliary state $G = \delta(\cdot-t)$ is the Dirac  delta function centered at $t$ and $Q\equiv 1$, then 
\begin{align}
\label{eqn::v1}
\bra{Q}\bra{\varphi} e^{-i \bigH t' } \ket{G}\ket{\psi} = \bra{\varphi} \exactU{t+t'}{t} \ket{\psi}.
\end{align}
It means one can prepare a localized state, and then trace out all the auxiliary degree of freedom to recover the desired state $\exactU{t+t'}{t} \ket{\psi}$.

\item {\bf (Delocalized version).} If the auxiliary state $G\equiv 1$ and $Q$ is the Dirac delta function centered at $t+t'$, then 
\begin{align}
\label{eqn::v2}
\begin{aligned}
\big(\bra{G}\bra{\psi} e^{i \bigH t'} \ket{Q}\ket{\varphi}\big)^* =& \bra{Q}\bra{\varphi} e^{-i \bigH t'} \ket{G}\ket{\psi} \\
=& \bra{\varphi}\exactU{t+t'}{t} \ket{\psi}.
\end{aligned}
\end{align}

In later applications, we will typically choose $t' = \dt$ to be small to develop product formulas for one time-step. It can be observed that these two versions \eqref{eqn::v1} and \eqref{eqn::v2} are essentially the \emph{conjugate pair} to each other.
\end{itemize}

\subsubsection{For analog quantum computing ($\omega>0$)}

For analog quantum computing where one can only perform measurement rather than direct integration, it is easy to  show that for the full dynamics \eqref{eqn::full_dynamics} on the augmented space, 
\begin{align}
\label{eqn::trace_Psi}
\begin{aligned}
& \rho_\omega(t)\ :=\ \tr_s\big(\ketbra{\Psi(t)}\big) \\
=& \int\ \dd s\ \abs{G(s-t)}^2\  \exactU{s}{s-t} \ketbra{\psi(0)} \exactU{s}{s-t}^{\dagger}.
\end{aligned}
\end{align}
Again, when $G$ is localized at $0$, then 
\begin{align*}
\tr_s\big(\ketbra{\Psi(t)}\big)  \approx \exactU{t}{0} \ketbra{\psi(0)} \exactU{t}{0}^{\dagger}.
\end{align*}
As was remarked in \cite{time_dilation_2023}, one  does not  necessarily need to prepare the auxiliary state $G$ as a pure state. Due to the linearity in the auxiliary state, it can be a mixed state whose diagonal components are localized at zero. For simplicity, we shall stick with the pure state formalism. Experimentally, one could approximately prepare a squeezed Gaussian state $\ket{G}$ with variance $\order{\omega^2}$ as in \eqref{eqn::G}.

\subsection{\DAScap}

Given two Hamiltonians $\h_1$ and $\h_2$, together with a schedule $f : [0,1]\to [0,1]$ satisfying $f(0) = 0$ and $f(1) = 1$, and the time scale $T$, one can control a wave function to slowly move from the ground state of $\h_1$ to the ground state of $\h_2$ \cite{albash_adiabatic_2018}:
\begin{align}
\label{eqn::AQC}
\left\{\ 
\begin{aligned}
\partial_t \ket{\psi(t)} &= -i H(t) \ket{\psi(t)};\\
H(t) &= H_1(t) + H_2(t),\ \\
H_1(t) &= f_1(t) \h_1, \qquad f_1(t) = T\big(1-f(t)\big),\\
H_2(t) &= f_2(t)\h_2, \qquad f_2(t) = Tf(t). 
\end{aligned}\right.
\end{align}
The initial state $\ket{\psi(0)}$ is chosen as the ground state of $\h_1$. Typically, one takes the time $T\gg 1$ to ensure accuracy. The choices of optimal schedule $f$ and $T$ are important to achieve high efficiency. In this work, as our goal is to simulate time-dependent Hamiltonians rather than optimizing its schedule path, we assume that these choices are already made. 

\begin{assume}
\label{A2}
Let us denote the exact ground state of $\h_2$ as $\ket{\exact}$.
Assume that the time $T$ and the schedule $f$ have been chosen such that
\begin{align}
\label{eqn::err_aqc}
\abs\big{\braket{\psi(1)}{\exact} - 1} &= \order{\eps}
\end{align}
or equivalently
\begin{align}
\norm\big{\ketbra{\psi(1)} - \ketbra{\varphi}}_{\tr} = \order{\sqrt{\eps}}.
\end{align}
\end{assume}

This assumption means that in follow-up applications, we won't optimize the schedule nor time span $T$ as in e.g., \cite{zeng_schedule_2016,an_quantum_2022}, but rather only focus on simulating such a dynamics using quantum architecture.

\subsection{The functional space of \SH{}'s clock}
\label{subsec::funct_space}

When one considers the \DAS{}, one could always restrict the time domain to $[0,1]$; therefore, one can always find a smooth extension of this Hamiltonian simulation problem to a torus with periodic boundary conditions on $H_k$. In case one considers unbounded time domain, one may simply choose $\timedom = \mathbb{R}$. In what follows, we shall only consider these two situations.
For the augmented Hamiltonian dynamics \eqref{eqn::full_dynamics}, we consider the following functional space 
\begin{align*}
\augspace := \Big\{f \in C^\infty(\timedom;\Complex): \norm{f^{(m)}}_{L^\infty(\timedom)} < \infty \Big\} \otimes \hbt,
\end{align*}
where $f^{(m)}$ means the $m^{\text{th}}$ order derivative of $f$.
For problems with lower regularity conditions, one could possibly resolve the regularity issues e.g., by first approximating the problem with smooth enough Hamiltonians. Dealing with such technicalities is beyond the scope of this work, and we shall exclusively focus on smooth functions throughout  this work. Thus, the above functional space $\augspace$ is sufficient and natural to work on.

\subsection{\SH{}'s clock and the perspective of numerical PDEs}

In what follows, we shall elaborate on the details in Table~\ref{table::summary} from the numerical PDE's perspective. 
Recall that \SH{}'s clock essentially transforms the time-dependent Hamiltonian dynamics into a PDE.
In particular, the first term in the augmented Hamiltonian $\hat{p}_s\otimes \id$ contributes as a transport term.
For instance, when $H(s) = -\frac{1}{2 M}\partial_{xx} + f(s) V(x)$ is a time-dependent particle Hamiltonian under the potential $V$ and with particle mass $M$, the above construction transforms an $s$-dependent elliptic-type operator $H(s)$ into a parabolic-type for $\bigH$, and the Schr{\"o}dinger's equation on the augmented space is 
\begin{align*}
\partial_{t} \Psi(t, s,x) =\ -\partial_s \Psi(t, s,x) + \frac{i}{2M} \partial_{xx} \Psi(t, s,x) + (-i) f(s) V(x) \Psi(t, s,x).
\end{align*}
As one is dealing with PDEs, the most natural approaches to consider are finite difference method, spectral method, and operator-splitting. Of course, the finite-element method is also possible to consider and it might be useful for developing other types of quantum algorithms; however, we will not consider it for the time being.
Firstly, we will discuss a challenge of using finite-difference method and discuss why we should consider the limiting situation $\omega \to 0$ or $\omega \to \infty$ for digital quantum computers. As for spectral methods and operator-splitting methods, we will discuss their advantages and limitations; the first one was studied in \cite{mizuta_optimal_2023,mizuta_optimal_multiperiodic_2023} for periodic Hamiltonians, and we will study operator-splitting methods later as well as many variants.

\bigskip
{\noindent \bf Finite difference method for $0 < \omega \ll 1$:}
When we use a Gaussian state with small but non-zero $\omega$ as an approximate in \eqref{eqn::G}, then the most natural approach is the finite-difference method. 
A challenge (also mentioned in \cite{watkins_time_2024}) is that as the momentum operator has very large norm $\order{\frac{1}{\ds}}$ after discretization (where $\ds$ is the mesh size in the $s$ mode), it appears to be difficult to go beyond the first-order algorithm in terms of gate complexity, even using theoretically optimal algorithm like Qubitization.

\bigskip
{\noindent \bf Spectral method for $\omega\to \infty$:} In \cite{mizuta_optimal_2023,mizuta_optimal_multiperiodic_2023}, the spectral method was employed using the Fourier basis with periodic boundary condition. One can discretize the $s$ variable using the Fourier basis and project the whole dynamics into a finite number of basis functions. This approach is beneficial if the error of spectral method for the Hamiltonian dynamics \eqref{eqn::full_dynamics} decays exponentially fast with respect to the number of basis functions chosen. This is the case when $H(t)$ is sufficiently smooth.  As discussed above, choosing such a Fourier basis will very likely necessitate the projection into a localized state (namely using the delocalized version), and therefore, the amplitude amplification methods  are necessary and perhaps unavoidable as were used in \cite{mizuta_optimal_2023}. This suggests that quantum algorithms built upon this approach is unlikely to be implementable without the fault-tolerance quantum computer. 

\bigskip
{\noindent \bf Operator-splitting methods as $\omega$ approaches $0$ or $\infty$:} To develop practical algorithms from the \SH{}'s formalism, another natural way is to develop product-based formula (namely, operator-splitting) for such a formalism. It turns out that for both $\omega = 0$ and $\infty$, one will end up with the same final time-dependent product formula. The case $\omega = 0$ is more physically intuitive in terms of the derivation as the $s$ mode in this case aligns better with the \enquote{time} variable, whereas the case $\omega = \infty$ can more easily provide mathematical convenience. However, we remark that both formalisms are technically equivalent.

\bigskip
{\noindent \bf Summary:} From the above non-exhaustive discussion, it appears that to make the \SH{}'s clock formalism suitable for digital quantum platform in practice, it might be desirable that we  choose $\omega \to 0$ and $\omega \to \infty$ in order to avoid the order restriction, and also choose operator-splitting or many variants to avoid the possible implementation challenge. This leads into  \secref{sec::pointwise}, \secref{section::HDR}, \secref{sec::mpf}, and \secref{sec::qDrift} below.

\section{Analog quantum computing}
\label{sec::aqc}

When the Hamiltonian itself is inherently defined for qumodes (namely, continuous-variable quantum state), the most straightforward way is to adopt \emph{analog quantum computing} \cite{kendon_quantum_2010,jin_analog_2024, time_dilation_2023} to realize the Hamiltonian simulation. In \cite{time_dilation_2023}, the continuous form of the Sambe-Howland model was used to turn non-autonomous linear PDEs into time-independent Hamiltonian dynamics which can be simulated on an analog quantum device.
In this section, 
we will first review this formalism and discuss how to achieve high-order approximation with respect to the Gaussian width $\omega>0$ for general problems.
For \DAS{}, as the computational model is fully analog, we will also discuss how to pick $\omega$ for the squeezed Gaussian state.

\subsection{High-order approximation via Richardson's extrapolation}

Recall that the \SH{}'s clock Hamiltonian is given as $\bigH = \hat{p}_s \otimes \unit + H(\hat{s})$. One evolves the closed Hamiltonian dynamics
\begin{align*}
& \partial_t \sigma(t) = -i \comm\Big{\bigH}{\sigma(t)},\\
& \sigma(0) = \rho_0 \otimes \ketbra{\psi(0)},\  \rho_0 = \ketbra{G},
\end{align*}
where $G$ is chosen as the squeezed Gaussian state \eqref{eqn::G}.
It has been shown in \cite{time_dilation_2023} that 
if $\omega \ll 1$, one has:
\begin{align*}
\norm\Big{\ \ketbra{\psi(t)} - \rho_\omega(t) \ }_{\tr} \le \sqrt{C}\omega,
\end{align*}
where $C$ is some constant.
This estimate yields a first-order algorithm. The scaling of $\omega$ can be improved via e.g., Richardson's extrapolation if one mainly focuses on the estimates of an observable $\widehat{O}$ on $\hbt$. This technique has been used for error mitigation \cite{endo_mitigating_2019} and multi-product formulas \cite{chin_multi_product_2011,childs_hamiltonian_2012,low_well_conditioned_2019} for Hamiltonian simulation.

\begin{theorem}
\label{thm::mpf_analog}
Recall $\rho_\omega$ from \eqref{eqn::trace_Psi} and suppose $G$ is given in \eqref{eqn::G}. For any observable $\widehat{O}$ with $\norm{\widehat{O} }_{\infty} \le 1$, and suppose
coefficients $\{\alpha_j\}_{j=1}^{M} \subseteq \Real$ and integer-valued time steps $\{k_j\}_{j=1}^{M} \subseteq \Natural$ are well chosen to satisfy 
\begin{align}
\label{eqn::mpf_order}
\begin{bmatrix}
1 & 1 & \cdots & 1 \\
k_1^{-2} & k_2^{-2} & \cdots & k_M^{-2} \\
\vdots & \vdots & \vdots & \vdots\\
k_{1}^{-2(m-1)} & k_{2}^{-2(m-1)} & \cdots & k_{M}^{-2(m-1)} 
\end{bmatrix}
\begin{bmatrix} \alpha_1\\ \alpha_2\\ \vdots\\ \alpha_M \end{bmatrix}
= \begin{bmatrix} 1\\ 0\\ \vdots\\ 0 \end{bmatrix}.
\end{align}
Then
\begin{align*}
&\ \abs\Big{\ \sum_{j=1}^{M} \alpha_j \tr\big(\widehat{O} \rho_{\omega/k_j}(t)\big) - \bra{\psi(t)}\widehat{O} \ket{\psi(t)}\ } 
\le \  2\sqrt{C_m} \omega^m + \order{\omega^{m+2}},
\end{align*}
where $C_m$ is a constant whose expression can be found in \eqref{eqn::Cm}.
\end{theorem}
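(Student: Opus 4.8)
The plan is to convert the observable estimate into a trace-norm statement about the reduced states and then run a Richardson-extrapolation (multi-product) argument on an asymptotic expansion of $\rho_\omega(t)$ in powers of $\omega$. First I would change variables $u=s-t$ in \eqref{eqn::trace_Psi} to write $\rho_\omega(t)=\int\dd u\,\abs{G(u)}^2\,F(u)$ with $F(u):=\exactU{t+u}{u}\ketbra{\psi(0)}\exactU{t+u}{u}^{\dagger}$, where $\abs{G(u)}^2$ is the centered Gaussian of variance $\omega^2$ and $F(0)=\ketbra{\psi(t)}$. By the Assumption the $\h_k$ are bounded and the $f_k$ smooth, so $u\mapsto\exactU{t+u}{u}$ is smooth and $F$ admits a Taylor expansion about $u=0$ with uniformly bounded derivatives on the support that matters. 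Integrating term by term against the Gaussian, every odd moment vanishes, which yields the key even-power expansion $\rho_\omega(t)=\ketbra{\psi(t)}+\sum_{k\ge 1}S_k(t)\,\omega^{2k}$ with $S_k(t)=\tfrac{(2k-1)!!}{(2k)!}F^{(2k)}(0)$; each $S_k(t)$ is Hermitian and, since $\tr\rho_\omega(t)=1$ for every $\omega$, traceless.

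Next I would substitute $\omega\mapsto\omega/k_j$ and form the extrapolant $\sum_j\alpha_j\rho_{\omega/k_j}(t)$. The first row of \eqref{eqn::mpf_order}, $\sum_j\alpha_j=1$, reproduces $\ketbra{\psi(t)}$ exactly, while rows $2,\dots,m$, namely $\sum_j\alpha_j k_j^{-2\ell}=0$ for $1\le\ell\le m-1$, annihilate the correction operators $S_1,\dots,S_{m-1}$. Hence $\sum_j\alpha_j\rho_{\omega/k_j}(t)-\ketbra{\psi(t)}=\sum_{k\ge m}\big(\sum_j\alpha_j k_j^{-2k}\big)\,S_k(t)\,\omega^{2k}$, so only the top tail survives. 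Since $\sum_j\alpha_j=1$ lets me rewrite the quantity in the theorem as $\abs{\tr\big(\widehat{O}\big[\sum_j\alpha_j\rho_{\omega/k_j}(t)-\ketbra{\psi(t)}\big]\big)}$, the bound $\abs{\tr(\widehat{O}X)}\le\norm{\widehat{O}}_\infty\norm{X}_{\tr}\le\norm{X}_{\tr}$ reduces the claim to the trace norm of this tail. Bounding $\norm{S_k(t)}_{\tr}$ uniformly through the $\sup$-norm estimates on the derivatives of the propagator, and collecting the leading coefficient (together with $\sum_j\abs{\alpha_j}$) into the constant recorded in \eqref{eqn::Cm}, then yields the stated estimate, the next tail term accounting for the remainder.

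The main obstacle is the rigorous control of the expansion rather than its formal derivation. I need uniform bounds on the high-order derivatives $F^{(2k)}$, equivalently on $\partial_u^{\,j}\exactU{t+u}{u}$, which requires differentiating the time-ordered exponential and estimating the resulting nested time integrals of $H$ and its derivatives; here the boundedness of the $\h_k$ and smoothness of the $f_k$ in the Assumption are precisely what makes these bounds uniform in $u$ and $t$. I would also have to justify interchanging the Taylor expansion with the Gaussian integral, i.e.\ control the integral remainder $\int\dd u\,\abs{G(u)}^2 R_{2m}(u)$ including the Gaussian tails, and, on the torus $\timedom$, verify that the wrap-around of $\abs{G}^2$ contributes only $\order{e^{-c/\omega^2}}$, which is negligible against every algebraic order in $\omega$. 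Carrying these constants cleanly through the linear combination is what produces the explicit expression for $C_m$ in \eqref{eqn::Cm}.
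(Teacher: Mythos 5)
Your proposal is correct in substance but takes a genuinely different route from the paper, and your final claim about the constant is not right. The paper never expands the operator $\rho_\omega(t)$ itself: it expands the \emph{scalar} fidelity $\theta(s) := \abs{\bra{\psi(0)}\exactU{0}{t}\,\exactU{s+t}{s}\ket{\psi(0)}}^2$ against the Gaussian moments $\int \abs{G(s)}^2 s^{2\ell}\,\dd s = \frac{2^{\ell}\Gamma(\ell+\nicefrac{1}{2})}{\sqrt{\pi}}\omega^{2\ell}$, uses the conditions \eqref{eqn::mpf_order} to get $\bra{\psi(t)}\sum_j \alpha_j \rho_{\omega/k_j}(t)\ket{\psi(t)} = 1 - C_m\omega^{2m} + \order{\omega^{2m+2}}$, and only then converts the fidelity deficit into the observable bound via the Fuchs--van-de-Graaf-type inequality $\norm{\sum_j\alpha_j\rho_{\omega/k_j}(t) - \ketbra{\psi(t)}}_1 \le 2\sqrt{1 - \bra{\psi(t)}\sum_j\alpha_j\rho_{\omega/k_j}(t)\ket{\psi(t)}}$. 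That last step is precisely where the square root and the \emph{halved} power $\omega^m$ in the statement come from, and it is the only place the scalar derivative $\theta^{(2m)}(0)$ --- hence $C_m$ of \eqref{eqn::Cm} --- enters.

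Your operator-level expansion $\rho_\omega(t) = \ketbra{\psi(t)} + \sum_{k\ge 1} S_k(t)\,\omega^{2k}$ with $S_k = \frac{(2k-1)!!}{(2k)!}F^{(2k)}(0)$, followed by $\abs{\tr(\widehat{O}X)} \le \norm{X}_1$, is valid under the paper's smoothness and boundedness assumptions, and the rigor items you list (uniform bounds on derivatives of the time-ordered propagator, interchange of Taylor expansion and Gaussian integration, exponentially small wrap-around on the torus) are the right ones and are surmountable. But what this route yields is $\big(\sum_j\abs{\alpha_j}k_j^{-2m}\big)\norm{S_m(t)}_1\,\omega^{2m} + \order{\omega^{2m+2}}$: a quadratically better power than the theorem's $\omega^m$, with a \emph{different} constant. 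So your closing claim that collecting the leading coefficient gives the constant of \eqref{eqn::Cm} cannot be correct --- no square root and no $\omega^m$ can emerge from a direct trace-norm bound of the tail, and \eqref{eqn::Cm} is built from $\theta^{(2m)}(0)$, not from $\norm{F^{(2m)}(0)}_1$. What you actually prove is a strictly stronger estimate that implies the stated inequality for every $m\ge 2$ (since $\omega^{2m} = \order{\omega^{m+2}}$) and for $m=1$ whenever $C_1>0$; to recover the literal form $2\sqrt{C_m}\,\omega^m$ one must pass through the fidelity as the paper does. As a side benefit, your route avoids a soft spot in the paper's argument: the trace-norm-versus-fidelity inequality is stated for density matrices, whereas $\sum_j\alpha_j\rho_{\omega/k_j}(t)$ need not be positive semidefinite when some $\alpha_j<0$, while your bound uses only Hermiticity of the $S_k$.
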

The proof is postponed to \appref{proof::thm::mpf_analog}.

\subsection{Gaussian width estimate for \DAS{}}

Next we will specialize the above error estimation for \DAS{}.
For \DAS \, problem in \eqref{eqn::AQC}, the Hamiltonian takes the following more specific form:
\begin{align}
\begin{aligned}
\bigH &= \hat{p}_s \otimes \unit + f_1(\hat{s}) \otimes \h_1 + f_2(\hat{s}) \otimes \h_2.
\end{aligned}
\end{align}

Under Assumption \ref{A2}, the adiabatic path gives the error scaling $\order{\sqrt{\eps}}$ in trace norm in \eqref{eqn::err_aqc}. 
In the following, we provide an estimate of the choice of $\omega$ in order to maintain the same scaling of error.

\begin{theorem}[Choice of $\omega$ for fully analog \DAS]
\label{thm::omega_analog}
Suppose that $\eps\ll 1$ and $\omega \ll 1$. 
By further imposing the constraint
\begin{align}
\label{eqn::omega}
\omega = \order{\frac{1}{T \norm{\h_2}}},
\end{align}
then one can ensure that 
\begin{align}
\norm\big{\tr_s\big(\ketbra{\Psi(t)}\big) - \ketbra{\exact} }_{\tr} = \order{\sqrt{\eps}},
\end{align}
which has the same asymptotic error scaling as \DAS\, formalism \eqref{eqn::err_aqc}.
\end{theorem}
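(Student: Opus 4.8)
\emph{Proof plan.}
The plan is to evaluate everything at the final time $t=1$ (the only time at which comparison with the target ground state $\ket{\exact}$ is meaningful) and to reduce the trace-norm error to a pointwise-in-$s$ statement. Writing $\sigma(s) := \exactU{s}{s-1}\ketbra{\psi(0)}\exactU{s}{s-1}^{\dagger}$, the exact formula \eqref{eqn::trace_Psi} reads $\tr_s\big(\ketbra{\Psi(1)}\big) = \int \dd s\, \abs{G(s-1)}^2\sigma(s)$. Since $\abs{G(s-1)}^2$ is a probability density, convexity of the trace norm (Jensen's inequality) gives
\begin{align*}
\norm{\tr_s\big(\ketbra{\Psi(1)}\big) - \ketbra{\exact}}_{\tr} \le \int \dd s\ \abs{G(s-1)}^2\ \norm{\sigma(s)-\ketbra{\exact}}_{\tr},
\end{align*}
so it suffices to control $g(s) := \norm{\sigma(s)-\ketbra{\exact}}_{\tr}$. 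At the centre $s=1$ one has $\sigma(1)=\ketbra{\psi(1)}$, hence $g(1)=\order{\sqrt{\eps}}$ by Assumption~\ref{A2}; the task is to show that $g$ does not grow beyond $\order{\sqrt{\eps}}$ over the effective width $\abs{s-1}=\order{\omega}$ of the Gaussian.

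The core estimate is a differential inequality for $g$. Differentiating the sliding-window propagator yields
\begin{align*}
\frac{d}{ds}\sigma(s) = -i\comm{H(s)}{\sigma(s)} + i\,\exactU{s}{s-1}\comm{H(s-1)}{\ketbra{\psi(0)}}\exactU{s}{s-1}^{\dagger},
\end{align*}
and, since conjugation by a unitary preserves the trace norm, $\abs{g'(s)} \le \norm{\comm{H(s)}{\sigma(s)}}_{\tr} + \norm{\comm{H(s-1)}{\ketbra{\psi(0)}}}_{\tr}$ (using that $g$ is Lipschitz). The decisive point is that the boundary states are \emph{exact} eigenstates: because $\ket{\psi(0)}$ is the ground state of $\h_1$ and $\ket{\exact}$ that of $\h_2$, the commutators collapse to $\comm{H(s-1)}{\ketbra{\psi(0)}} = f_2(s-1)\comm{\h_2}{\ketbra{\psi(0)}}$ and $\comm{H(s)}{\ketbra{\exact}} = f_1(s)\comm{\h_1}{\ketbra{\exact}}$, whose scheduling prefactors both vanish at $s=1$ and obey $f_1(s),f_2(s-1)=\order{T\abs{s-1}}$ there. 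Splitting $\sigma(s)=\ketbra{\exact}+(\sigma(s)-\ketbra{\exact})$ and using $\norm{H(s)}_\infty=\order{T\norm{\h_2}}$ gives the Gronwall-type bound
\begin{align*}
\abs{g'(s)} \le c_1\, T\abs{s-1}\big(\norm{\h_1}+\norm{\h_2}\big) + c_2\, T\norm{\h_2}\,g(s).
\end{align*}

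Integrating this inequality and then against the Gaussian is where the prescribed scaling \eqref{eqn::omega} enters. Gronwall's inequality yields $g(s) \le \big(g(1)+\order{T(\norm{\h_1}+\norm{\h_2})\abs{s-1}^2}\big)\,e^{c_2 T\norm{\h_2}\abs{s-1}}$, and because the constraint \eqref{eqn::omega} forces $c_2 T\norm{\h_2}\,\omega=\order{1}$, the Gaussian integral of the amplification factor is finite, $\int \dd s\,\abs{G(s-1)}^2 e^{c_2 T\norm{\h_2}\abs{s-1}} = \order{1}$. Hence the first term contributes $\order{1}\cdot g(1)=\order{\sqrt{\eps}}$, while the forcing term contributes $\order{T(\norm{\h_1}+\norm{\h_2})\omega^2}=\order{(\norm{\h_1}+\norm{\h_2})/(T\norm{\h_2}^2)}$, which is subdominant for $T\gg 1$. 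Integrating directly against the Gaussian in this way also disposes of the tails cleanly, avoiding any cutoff, so the pointwise bound assembles to $\norm{\tr_s(\ketbra{\Psi(1)})-\ketbra{\exact}}_{\tr}=\order{\sqrt{\eps}}$, matching \eqref{eqn::err_aqc}.

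The main obstacle is the careful bookkeeping of the $T\norm{\h_2}$ scaling, and in particular recognising that the naive first-order width bound $\sqrt{C}\,\omega$ from \cite{time_dilation_2023} is not sharp enough here: for the \DAS{} Hamiltonian one has $\sqrt{C}=\order{T\norm{\h_2}}$, so that bound would only yield an $\order{1}$ error under the choice $\omega=\order{1/(T\norm{\h_2})}$. The improvement to $\order{\sqrt{\eps}}$ genuinely requires exploiting that both boundary states are (near) eigenstates, so that the commutators in $\sigma'(s)$ carry either an exactly-vanishing scheduling prefactor or a factor of the adiabatic error $g(1)=\order{\sqrt{\eps}}$; the constraint \eqref{eqn::omega} is precisely what keeps the Gronwall amplification at $\order{1}$ while these small contributions are integrated against the squeezed Gaussian.
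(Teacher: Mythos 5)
Your overall strategy is sound and arrives at the right scaling, but your route is genuinely different from the paper's, and your closing paragraph misjudges the cited lemma. The paper does not re-derive the $\omega$-error from scratch: it takes the bound $\norm\big{\ketbra{\psi(1)} - \tr_s(\ketbra{\Psi(t)})}_{\tr} \le \sqrt{C}\,\omega$ from \cite[Lemma 11]{time_dilation_2023} at face value and simply computes the constant for the \DAS{} Hamiltonian. The point you missed is that $C$ there is not a generic $\order{T^2\norm{\h_2}^2}$ operator-norm constant but a state-dependent, variance-like quantity, $C = \expval{H^2(1)} - 2\,\text{Re}\big(\bra{\psi(1)}H(1)\exactU{1}{0}H(0)\ket{\psi(0)}\big) + \expval{H^2(0)} - \big(\expval{H(1)}-\expval{H(0)}\big)^2$. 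Since $\ket{\psi(0)}$ is an exact eigenstate of $H(0)=T\h_1$, every $\h_1$-dependent term cancels identically, leaving $C = \expval{H^2(1)} - \expval{H(1)}^2$, the variance of $H(1)=T\h_2$ in $\psi(1)$; writing $\ket{\psi(1)} = \alpha\ket{\exact}+\beta\ket{\exactorth}$ with $\beta=\order{\sqrt{\eps}}$ (Assumption~\ref{A2}) gives $C = \order{\eps\, T^2\norm{\h_2}^2}$, so $\sqrt{C}\,\omega \le \sqrt{\eps}$ under \eqref{eqn::omega}. So the ``naive'' width bound is exactly sharp enough — the paper's whole proof is the computation of $C$. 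Your Jensen-plus-Gronwall argument on the sliding-window state $\sigma(s)$ exploits the same two structural facts (exact eigenstate at $t=0$, near-eigenstate at $t=1$, scheduling prefactors vanishing at the matching endpoints) but packages them dynamically rather than algebraically; your derivative formula for $\sigma(s)$ and the convexity reduction of \eqref{eqn::trace_Psi} are both correct, and the approach has the virtue of being self-contained.

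The price of your route is two pieces of bookkeeping under which your conclusion is slightly weaker than the theorem as stated. First, your Gronwall coefficient is really $2\norm{H(s)}_{\infty} \le 2T(\norm{\h_1}+\norm{\h_2})$, not $\order{T\norm{\h_2}}$; under \eqref{eqn::omega} the amplification exponent over the Gaussian window is $\order{1+\norm{\h_1}/\norm{\h_2}}$, which is $\order{1}$ only if $\norm{\h_1}=\order{\norm{\h_2}}$ — an assumption the theorem does not make, and which the paper's exact cancellation of all $\h_1$-terms renders unnecessary. Second, your forcing term $\order{T(\norm{\h_1}+\norm{\h_2})\omega^2} = \order{(\norm{\h_1}+\norm{\h_2})/(T\norm{\h_2}^2)}$ is subdominant to $\sqrt{\eps}$ only under an implicit quantitative relation between $T$ and $\eps$ (something like $T\gtrsim 1/\sqrt{\eps}$, plausible from the adiabatic theorem but not part of Assumption~\ref{A2}); the paper's bound carries no such extra term at all. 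Neither issue is fatal in the regimes the paper targets (e.g., Grover-type problems with $\norm{\h_1},\norm{\h_2}=\order{1}$ and $T\gg1$), but to match the theorem verbatim you should either state these hypotheses explicitly or replace the Gronwall step by the variance computation.
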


The proof is postponed to  \appref{proof::aqc}. We remark that the scaling of $T$ (implicitly) depends on the precision $\eps$, and the precise dependence is determined by the \DAS{} problem itself and the schedule $f$ chosen.

\begin{remark}
This formalism is beneficial if $\norm{\h_2} = \order{1}$ as in the case of the adiabatic Grover's search algorithm; see \secref{sec::grover} below. The above scaling also linearly depends on $T^{-1}$, which is very likely to be optimal, as the Hamiltonian in the augmented system $\norm{\bigH}= \order{T}$.
\end{remark}

\section{Product formula (\rom{1}): point-wise query to $H_k(s)$}
\label{sec::pointwise}

Product formula is arguably the most widely studied and practical algorithm in Hamiltonian simulation and we will first discuss how \SH{}'s clock can be used to develop time-dependent product formulas.
This perspective could date back to a comment in Sect.~6 of Suzuki's celebrated work \cite{suzuki_general_1993}, but as far as we know, it has not been formally adopted and studied in details in literature. Suzuki's $\nicefrac{\overset{\leftarrow}{\partial}}{\partial t}$ operator in \cite{suzuki_general_1993} is not rigorously specified in his original work, and \SH{}'s clock Hamiltonian precisely provides the right mathematical object.

We will first illustrate the idea of \SH{}'s clock for simple examples in \secref{sec::pointwise::example} for first-order and second-order cases, and then generalize this idea to arbitrary high-order algorithms in \secref{sec::pointwise::general}, given any time-independent product formula. Then we will demonstrate and discuss in \secref{sec::pointwise::gate} that using this approach, the time-dependent product formulas {could} only use the same number of gates as the corresponding time-independent counterparts.
In particular, this suggests a route to design a time-dependent scheme with minimum number of gates for arbitrary high-order, which generalizes \cite{ikeda_minimum_2023}. 
We will explain its connection with Suzuki's formalism in \secref{sec::pointwise::suzuki}. Concerning the mathematical rigorousness, we shall comment on why \SH{}'s clock offers the right tool in \secref{sec::pointwise::math} without the need to deal with Gaussian approximation of delta measure as was probably needed in discrete clock analog \cite{watkins_time_2024}.

\subsection{Illustrative examples: first-order and second-order}
\label{sec::pointwise::example}

We begin with heuristic  explanation on  how  \SH{}'s clock enables one to develop product formulas for time-dependent Hamiltonian simulation. The augmented Hamiltonian takes the following form:
\begin{align*}
\bigH = \hat{p}_s\otimes \id + \sum_{k=1}^{\Lambda} H_k(\hat{s}).
\end{align*}
This operator is time-independent, and one can apply the following Lie's formula (first-order product formula),
$
e^{-i \dt \bigH} \approx  e^{-i \dt \hat{p}_s\otimes \id}\ \prod_{k=1}^{\Lambda} e^{-i\dt H_k(\hat{s})} + \order{\dt^2}.
$
Therefore,
\begin{align*}
 &\ e^{-i \dt \bigH} \ket{t} \otimes \ket{\psi(t)} \\
\approx &\  e^{-i \dt\ \hat{p}_s\otimes \id}\ \prod_{k=1}^{\Lambda} e^{-i\dt H_k(\hat{s})} \ket{t} \otimes \ket{\psi(t)} + \order{\dt^2}\\
\approx &\  e^{-i \dt\ \hat{p}_s}\ \ket{t}\ \otimes \prod_{k=1}^{\Lambda} e^{-i\dt H_k(t)} \ket{\psi(t)} +\order{\dt^2}\\
= &\ \ket{t+\dt} \otimes \prod_{k=1}^{\Lambda} e^{-i\dt H_k(t)} \ket{\psi(t)}.
\end{align*}
By the time-localized version \eqref{eqn::v1}, one knows that $\bra{\one}_s e^{-i \dt \bigH} \ket{t} \otimes \ket{\psi(t)} = \exactU{t+\dt}{t} \ket{\psi(t)}$.
This implies that one can approximate the time-dependent Hamiltonian evolution via
\begin{align*}
\exactU{t+\dt}{t} \approx \prod_{k=1}^{\Lambda} e^{-i\dt H_k(t)} + \order{\dt^2}.
\end{align*}
To get this, we have used the fact that $\bra{\one} \ket{t+\dt} \equiv \int_{\timedom} 1 \cdot \delta\big(s-(t+\dt)\big)\ \dd s  = 1$.
This formula is essentially the analog of Euler's method for ODEs.
If we apply the Strang splitting scheme (namely, a second-order Trotter, or also known as the midpoint scheme), then similarly,
\begin{align*}
&\ \bra{\one}_s e^{-i \dt \bigH} \ket{t} \otimes \ket{\psi(t)} \\
\approx&\ 
\bra{\one}_s e^{-i\hat{p}_s \otimes \id\ \dt/2} 
\prod_{k=1}^{\Lambda} e^{-i \frac{\dt}{2} H_k(\hat{s})} \prod_{k=\Lambda}^{1} e^{-i \frac{\dt}{2} H_k(\hat{s})}
e^{-i\hat{p}_s\otimes \id\ \dt/2}
\ket{t} \otimes \ket{\psi(t)}\\
=& \prod_{k=1}^{\Lambda} e^{-i \frac{\dt}{2} H_k(t+\dt/2)} \prod_{k=\Lambda}^{1} e^{-i \frac{\dt}{2} H_k(t+\dt/2)} \ket{\psi(t)}.
\end{align*}
This suggests the following approximation:
\begin{align}
\exactU{t+\dt}{t}\approx\prod_{k=1}^{\Lambda} e^{-i \frac{\dt}{2} H_k(t+\dt/2)} \prod_{k=\Lambda}^{1} e^{-i \frac{\dt}{2} H_k(t+\dt/2)}.
\end{align}
This scheme is also known as the second-order Trotter for the time-dependent Hamiltonian, and in some literature, it is known as the midpoint scheme. This scheme also has the minimum number of operator exponentials when $\Lambda = 2$ for the second-order product formulas; see e.g., \cite{ikeda_minimum_2023}.
When the Hamiltonian is time-independent, one recovers the midpoint scheme for time-independent Hamiltonians:
\begin{align}
\label{eqn::midpoint}
e^{-i  H \dt} \approx \prod_{k=1}^{\Lambda} e^{-i \frac{\dt}{2} H_k} \prod_{k=\Lambda}^{1} e^{-i \frac{\dt}{2} H_k}.
\end{align}

\subsection{A general scheme}
\label{sec::pointwise::general}

In a recent work, Ostmeyer \cite{ostmeyer_optimised_2023} proved that a product formula for two-local operators with order $n$ can yield a general scheme with the same order for $\Lambda$ local operators. The conclusion was stated for time-independent operator splitting methods. 

\begin{theorem}[{\cite[Theorem 1]{ostmeyer_optimised_2023}}]
\label{thm::ostmeyer}
Suppose there is a product formula with order $n$ such that for arbitrary two operators $A$ and $B$,
\begin{align}
\label{eqn::product_2_body}
\begin{aligned}
e^{(A+B) \dt} =& e^{A a_1 \dt} e^{B b_1 \dt} \cdots e^{B b_q \dt} e^{A a_{q+1} \dt} + \order{\dt^{n+1}},
\end{aligned}
\end{align}
then for any $\Lambda\ge 1$ and a collection of operators $\{A_k\}_{k=1}^{\Lambda}$, 
\begin{align}
\label{eqn::ostmeyer}
\begin{aligned}
&\ e^{\dt\sum_{k=1}^{\Lambda} A_k + \order{\dt^{n+1}}} = \Big(\prod_{k=1}^{\Lambda} e^{A_k c_1 \dt}\Big)\Big(\prod_{k=\Lambda}^{1} e^{A_k d_1 \dt}\Big) \cdots \Big(\prod_{k=1}^{\Lambda} e^{A_k c_q \dt}\Big)\Big(\prod_{k=\Lambda}^{1} e^{A_k d_q \dt}\Big),
\end{aligned}
\end{align}
where
\begin{equation*}
 \begin{cases}
& c_1 = a_1,\ \  c_k = a_k - d_{k-1}, \qquad k = 2, \cdots, q\\
& d_k = b_k - c_k, \qquad  k = 1, 2, \cdots, q.\\
\end{cases}
\end{equation*}
\end{theorem}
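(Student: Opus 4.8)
The plan is to induct on the number of operators $\Lambda$, using as the only analytic input the hypothesized two-operator formula of order $n$ together with the elementary merging identity $e^{A_k\alpha\dt}\,e^{A_k\beta\dt}=e^{A_k(\alpha+\beta)\dt}$ for adjacent exponentials of the same operator. Before starting, I record the arithmetic consequences of the coefficient recursion: from $c_1=a_1$, $c_k=a_k-d_{k-1}$ and $d_k=b_k-c_k$ one gets $c_k+d_k=b_k$ for all $k$, $c_k+d_{k-1}=a_k$ for $k\ge 2$, and---after invoking the first-order consistency conditions $\sum_i a_i=\sum_j b_j=1$ forced by order $n\ge 1$---also $d_q=a_{q+1}$. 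These relations are exactly what the merging identity will consume at the cycle junctions.

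For the base cases, when $\Lambda=1$ every factor involves $A_1$, so repeated merging collapses the right-hand side to $e^{A_1(\sum_j(c_j+d_j))\dt}=e^{A_1\dt}$ using $\sum_j b_j=1$. When $\Lambda=2$ I merge the two $A_2$ factors meeting at each forward/backward junction, $e^{A_2 c_j\dt}e^{A_2 d_j\dt}=e^{A_2 b_j\dt}$, and the two $A_1$ factors meeting across consecutive cycles, $e^{A_1 d_j\dt}e^{A_1 c_{j+1}\dt}=e^{A_1 a_{j+1}\dt}$; with $c_1=a_1$ and $d_q=a_{q+1}$ the product becomes literally $e^{A_1a_1\dt}e^{A_2 b_1\dt}\cdots e^{A_2 b_q\dt}e^{A_1 a_{q+1}\dt}$, which is the hypothesized order-$n$ two-operator formula. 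Thus $\Lambda\le 2$ is immediate.

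For the inductive step, assume the claim for $\Lambda-1$ and set $S:=\sum_{k=1}^{\Lambda-1}A_k$. Performing exactly the same two merges---fusing the $A_\Lambda$ pair inside each cycle and the $A_1$ pair across cycles---rewrites the $\Lambda$-operator product as a chain of $q$ cycles $\prod_{j=1}^{q}\big[\mathrm{fwd}_S(c_j)\,e^{A_\Lambda b_j\dt}\,\mathrm{bwd}_S(d_j)\big]$, where $\mathrm{fwd}_S(c_j)=\prod_{k=1}^{\Lambda-1}e^{A_k c_j\dt}$ and $\mathrm{bwd}_S(d_j)=\prod_{k=\Lambda-1}^{1}e^{A_k d_j\dt}$. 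Deleting the inserted factors $e^{A_\Lambda b_j\dt}$ (formally setting $A_\Lambda=0$) leaves precisely the flat $(\Lambda-1)$-operator formula, which by the inductive hypothesis equals $e^{\dt S}+\order{\dt^{n+1}}$. The task therefore reduces to showing that inserting $e^{A_\Lambda b_j\dt}$ at the palindromic centre of each cycle of an order-$n$ formula for $S$ produces an order-$n$ formula for $S+A_\Lambda$, which I would approach by comparing against the two-operator formula applied to the pair $(S,A_\Lambda)$, namely $\Pi:=e^{Sa_1\dt}e^{A_\Lambda b_1\dt}\cdots e^{A_\Lambda b_q\dt}e^{Sa_{q+1}\dt}=e^{\dt(S+A_\Lambda)}+\order{\dt^{n+1}}$, legitimate since the hypothesis holds for arbitrary two operators.

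The hard part is this last comparison, and it is where a naive argument breaks. One cannot substitute block by block: a single pair $\mathrm{fwd}_S(c_j)\,\mathrm{bwd}_S(d_j)$ reproduces $e^{S b_j\dt}$ only to first order (its leading error is $\tfrac12 b_j(c_j-d_j)\dt^2$ times the relevant commutators), so replacing the $S$-exponentials of $\Pi$ by single sweeps degrades the order to one. Indeed the flat formula is strictly different from---and cheaper than---the nested composition obtained by recursively expanding each $e^{Sa_i\dt}$, so the order-$n$ property cannot be inherited factor-wise and must come from global cancellation. The mechanism to exploit is the palindromic arrangement: $\mathrm{fwd}_S$ and $\mathrm{bwd}_S$ carry opposite-sign low-order commutator errors relative to the corresponding $e^{S(\cdot)\dt}$, and within each symmetric cycle these, together with the errors generated by the central $A_\Lambda$-insertion, are arranged to cancel. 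To make the cancellation precise up to order $n$ I would pass to the truncated BCH expansion of $\log P_\Lambda$ and match, degree by degree up to $\dt^{n}$, the coefficient of every nested commutator against that of $\log\Pi$; the key simplification is that within any single sweep all operators share a common coefficient ($c_j$ or $d_j$), which organises these coefficients into the very polynomial identities in $\{a_i,b_j\}$ that the order-$n$ conditions of the two-operator formula already guarantee. Establishing that this reduction of the $\Lambda$-operator order conditions to the two-operator ones is exactly faithful is the main obstacle; everything else is bookkeeping handled by the merging identity and the coefficient recursion.
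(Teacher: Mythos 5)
You prove correctly everything that is routine, and you defer exactly the one step that constitutes the theorem. Your coefficient identities ($c_1=a_1$, $c_k+d_{k-1}=a_k$, $c_k+d_k=b_k$, and $d_q=a_{q+1}$ from the consistency conditions $\sum_i a_i=\sum_j b_j=1$) are right, and your merging argument showing that the $\Lambda=2$ case of \eqref{eqn::ostmeyer} is \emph{literally} the hypothesis \eqref{eqn::product_2_body} is exactly the standard splitting--composition coefficient conversion. You are also right, and commendably honest, that the naive block-by-block substitution fails and that the flat formula differs from the nested recursion. But note that the paper itself never proves Theorem~\ref{thm::ostmeyer} — it imports it verbatim from Ostmeyer's paper — so the only thing to assess is whether your sketch closes the argument, and it does not: your final sentence concedes that the ``exactly faithful'' reduction of the $\Lambda$-operator order conditions to the two-operator ones is the main obstacle, and that reduction \emph{is} the theorem. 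Everything you establish before it (merging, recursion, base cases, the reduction to cycles with central $e^{A_\Lambda b_j\dt}$ insertions) is bookkeeping that any reader of the statement can reproduce.

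Concretely, here is why the plan as stated does not go through without a further, genuinely nontrivial input. Writing the flat scheme as $\chi_{c_1\dt}\,\chi^{*}_{d_1\dt}\cdots\chi_{c_q\dt}\,\chi^{*}_{d_q\dt}$, where $\chi_\tau=\prod_{k=1}^{\Lambda}e^{A_k\tau}$ is the Lie--Trotter step and $\chi^{*}$ its adjoint, the logarithm of the composition is a \emph{universal} Lie series in the graded components $F_1,F_2,F_3,\dots$ of $\log\chi_\tau$, with coefficients that are polynomials in $\{c_j,d_j\}$; the two-operator hypothesis only tells you these polynomials vanish \emph{after substituting} the specific BCH components $G_1=A+B$, $G_2=\tfrac12\comm{A}{B},\dots$ of $\log(e^{\tau A}e^{\tau B})$. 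To conclude that they vanish identically — which is what the $\Lambda$-operator claim needs — one must rule out accidental cancellations under this substitution, i.e., one needs the known but nontrivial freeness result that the homogeneous BCH components in degrees $\ge 2$ generate a free graded Lie algebra (this is the backbone of the equivalence between splitting order conditions and order conditions for compositions of an arbitrary first-order method with its adjoint, as in Murua--Sanz-Serna and Hairer--Lubich--Wanner). Your degree-by-degree matching of $\log P_\Lambda$ against $\log\Pi$ is precisely this statement in disguise, and without the freeness input the matching could in principle be satisfied for two operators yet fail for $\Lambda\ge 3$. Two smaller points: your induction on $\Lambda$ buys little — the inductive hypothesis only kills the brackets not containing $A_\Lambda$, leaving the entire difficulty in the remaining terms, whereas the composition viewpoint handles all $\Lambda$ at once; and the cancellation mechanism is not ``palindromic within each cycle'' (the base formula need not be palindromic, so $c_j\neq d_j$ in general) but a global coupling across cycles, which is exactly what the order conditions encode.
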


In what follows, we generalize such a result to the time-dependent case, \emph{without increasing the number of operator exponentials} in the product formula and \emph{without affecting the order of convergence}.

\begin{theorem}
\label{thm::td_pointwise}
Suppose that the time-independent Hamiltonian can be decomposed as in \eqref{eqn::decompose}. Under the same assumption of Theorem \ref{thm::ostmeyer},
the following scheme gives an order $n$ approximation of the time-ordered unitary evolution $\exactU{t+\dt}{t} \equiv \exp_\mathcal{T}\big(-i \int_{t}^{t+\dt} H(s)\ \dd s\big)$:
\begin{align}
\label{eqn::product_td}
\begin{aligned}
{U}_n(t+\dt, t) := &\  U_{F}\big(t+L_1, t+R_1\big) U_{B}\big(t+R_1, t+L_{2}\big) \cdots \\
&\ \  \cdots U_{F}\big(t+L_{q-1}, t+R_{q-1}\big) U_{B}\big(t+R_{q-1}, t+L_q\big) \\
&\ \  U_{F}\big(t+ L_{q}, t+R_q\big) U_{B}\big(t+R_q, t + 0\big),
\end{aligned}
\end{align}
where
\begin{align*}
\left\{
\begin{aligned}
U_{F}\big(t', t\big) &= \prod_{k=1}^{\Lambda'} e^{-i (t'-t)H_k(t')} \prod_{k=\Lambda'+1}^{\Lambda} e^{-i (t'-t) H_k(t)}, \\
U_{B}\big(t', t\big) &=  \prod_{k=\Lambda}^{\Lambda'+1} e^{-i (t'-t) H_k(t')} \prod_{k=\Lambda'}^{1} e^{-i (t'-t) H_k(t)},
\end{aligned}\right.
\end{align*}
and $\Lambda' \in \{0, 1, 2, \cdots, \Lambda\}$ is arbitrary, with the  coefficients satisfying  the following relations:
for $k = 1, 2, \cdots, q$, 
\begin{equation*}
 \begin{cases}
 & L_k = \dt \sum_{j=k}^{q} (c_j + d_j), \qquad L_{q+1}\equiv 0\\
 & R_k = \dt \Big(d_{k} + \sum_{j=k+1}^{q} (c_j + d_j)\Big).
 \end{cases}
\end{equation*}

Suppose one is given query access to matrix exponentials of the form $e^{-i\dt H_k(\alpha)}$ for any index $k$, time $\alpha$, and time step $\dt$. Then  the following resources are needed:
\begin{table}[h!]
\centering
\begin{tabular}{b{0.3\textwidth}b{0.2\textwidth}b{0.2\textwidth}}
\toprule
Method & Parameter & Number of Gates \\
\toprule
Time-independent case & n/a & {$\boldsymbol{2\Lambda q - (2q-1)}$ }\\
Time-dependent case & $\Lambda' \neq 0$ and $\Lambda' \neq \Lambda$ &  {$\boldsymbol{2\Lambda q - (2q-1)}$} \\
 & $\Lambda' = 0$ & $2 \Lambda q - q$ \\
 & $\Lambda' = \Lambda$ & $2\Lambda q - (q-1)$\\
\bottomrule
\end{tabular}
\caption{This table lists the number of (local) gates for time-dependent schemes for various $\Lambda'$. In this estimate, we consider the most general form, and there is no need to assume $\hat{H}_k(s)$ to be commutative in time (namely, the above estimate is valid even if $\comm{H_k(s)}{H_k(t)}\neq 0$ for $s \neq t$). When we consider the \DAS{}, the number of gates is \emph{always} $2\Lambda q - (2q-1)$ for any such time-dependent schemes, no matter how to choose $\Lambda'$.}
\end{table}
\end{theorem}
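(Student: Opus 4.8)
The plan is to \emph{derive} the time-dependent scheme \eqref{eqn::product_td} by applying Ostmeyer's time-independent formula (\thmref{thm::ostmeyer}) directly to the \SH{} clock Hamiltonian $\bigH = \hat{p}_s\otimes\id + \sum_{k=1}^{\Lambda}H_k(\hat{s})$ of \eqref{eqn::bigH}, treating its $\Lambda+1$ summands as the local operators, and then collapsing the clock mode. Since $\bigH$ is time-\emph{independent}, \eqref{eqn::ostmeyer} applies verbatim (with $\Lambda$ replaced by $\Lambda+1$) and produces an order-$n$ product formula $V$ for $e^{-i\dt\bigH}$, where the transport generator $\hat{p}_s\otimes\id$ is inserted at position $\Lambda'$ among the $H_k(\hat{s})$; this insertion slot is exactly the free parameter $\Lambda'\in\{0,1,\dots,\Lambda\}$ in the statement.

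The core of the argument is the clock-collapse bookkeeping. First I would use that $e^{-i\tau\hat{p}_s}$ acts as the shift $\ket{s}\mapsto\ket{s+\tau}$, so each factor $e^{-i\tau\hat{p}_s\otimes\id}$ merely translates the clock, while each factor $e^{-i\tau H_k(\hat{s})}$ acting on a localized clock $\ket{s'}$ reduces to the physical gate $e^{-i\tau H_k(s')}$. Feeding $\ket{t}\otimes\ket{\psi}$ into $V$ and reading the product from the first-applied (rightmost) factor, the clock position after each partial sequence equals the accumulated shift; matching these accumulated shifts against the Ostmeyer coefficients $c_j,d_j$ produces precisely the time stamps $L_k=\dt\sum_{j=k}^{q}(c_j+d_j)$ and $R_k=\dt\big(d_k+\sum_{j=k+1}^{q}(c_j+d_j)\big)$. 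Whether a given $H_k$ is evaluated at the pre-shift or post-shift clock value depends only on whether it sits before or after $\hat{p}_s$ in the ordering, and this dichotomy is exactly what separates the two blocks $U_F$ and $U_B$ and the two evaluation points $t'$ and $t$ inside each. Projecting with $\bra{\one}_s$ through the localized identity \eqref{eqn::v1} then collapses $\bra{\one}_s V\ket{t}\otimes\ket{\psi}$ into $U_n(t+\dt,t)\ket{\psi}$, while the same projection sends $e^{-i\dt\bigH}$ to $\exactU{t+\dt}{t}$.

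The delicate analytic step, which I expect to be the main obstacle, is transferring the $\order{\dt^{n+1}}$ product-formula error from the augmented space down to $\hbt$: since $\bigH$ contains the \emph{unbounded} operator $\hat{p}_s$, naive operator-norm remainder bounds diverge. The resolution I would pursue is to invoke the commutator-scaling form of the remainder \cite{childs_theory_2021}, in which the error is a sum of terms of order $\dt^{n+1}$ times nested commutators of the summands of $\bigH$. The point is that every occurrence of $\hat{p}_s$ in such a nested commutator necessarily sits inside at least one bracket, and $\comm{\hat{p}_s}{g(\hat{s})\otimes A}=-i\,g'(\hat{s})\otimes A$; under the smoothness hypothesis \eqref{eqn::Hk} (with $f_k\in C^\infty$ of bounded derivatives and $\h_k$ bounded), every such bracket is a \emph{bounded} operator on $\augspace$, so no unbounded contribution survives and the remainder is genuinely $\order{\dt^{n+1}}$ on the smooth functional space $\augspace$. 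The clock collapse then preserves this bound and yields $\norm{\exactU{t+\dt}{t}-U_n(t+\dt,t)}=\order{\dt^{n+1}}$. The residual subtlety of treating $\ket{t}$ as a delta versus $\ket{\one}$ as a constant I would sidestep exactly as in \secref{sec::pointwise::math}, by working with the delocalized normalization $G\equiv1$ (the conjugate identity \eqref{eqn::v2}) so that no literal delta measure enters the analysis.

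Finally, the gate-count table is a separate combinatorial tally. Expanding \eqref{eqn::product_td} gives $2\Lambda q$ Hamiltonian exponentials before simplification, and adjacent blocks merge whenever consecutive factors carry the same index $k$ at the same clock value. For $0<\Lambda'<\Lambda$ the rightmost factor of each $U_F$ and the leftmost factor of the following $U_B$ are both $H_\Lambda$ at the same time stamp, and symmetrically the $U_B$--$U_F$ seams merge an $H_1$ pair; the two families contribute $q$ and $q-1$ merges, giving $2\Lambda q-(2q-1)$, identical to the time-independent count. The boundary choices $\Lambda'=0$ and $\Lambda'=\Lambda$ each destroy one family of seams, leaving $q$ and $q-1$ merges and hence $2\Lambda q-q$ and $2\Lambda q-(q-1)$, respectively. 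For \DAS{}, where $H_k(s)=f_k(s)\h_k$, any two exponentials with the same index commute and merge irrespective of their evaluation times, restoring the full $2q-1$ merges for every $\Lambda'$ and the uniform count $2\Lambda q-(2q-1)$.
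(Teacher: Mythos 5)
Your proposal is correct and follows essentially the same route as the paper: the same $(\Lambda+1)$-term decomposition of $\bigH$ with $\hat{p}_s\otimes \id$ inserted at slot $\Lambda'$, the same application of \thmref{thm::ostmeyer} to the time-independent clock Hamiltonian, the same clock-collapse bookkeeping via \eqref{eqn::v1} (with the delocalized identity \eqref{eqn::v2} invoked for rigor, exactly as in the paper's second proof in \secref{sec::pointwise::math}), and the same seam-merging tally for the gate counts. Your only addition is the explicit commutator-scaling justification of the $\order{\dt^{n+1}}$ remainder on $\augspace$ --- a point the paper leaves implicit --- and it is sound, since every nested commutator containing $\hat{p}_s$ reduces via $\comm{\hat{p}_s}{g(\hat{s})\otimes A} = -i\, g'(\hat{s})\otimes A$ to a bounded operator under the smoothness assumption \eqref{eqn::Hk}, and the intervening unitaries in the remainder representation are norm-preserving.
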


\begin{proof} We first present a proof based on the localized version \eqref{eqn::v1} to illustrate the ideas. A second proof which is more friendly for mathematical rigorousness will be given in later subsections. For the augmented Hamiltonian, we can choose
\begin{align}
\label{eqn::Ak}
\left\{\ 
\begin{aligned}
A_k &= -i H_k(\hat{s}), \qquad k \le \Lambda', \\
A_{\Lambda'+1} &= -i \hat{p}_s\otimes \id,\\
A_{k} &= -i H_{k-1}(\hat{s}), \qquad k > \Lambda'+1.
\end{aligned}\right.
\end{align}
Then by the localized version \eqref{eqn::v1} and by \eqref{eqn::product_2_body},
\begin{align*}
\begin{aligned}
&\  U(t+\dt,t) \ket{\psi}\\
\myeq{\eqref{eqn::v1}}&\ \bra{\one}_s e^{-i \dt \bigH} \ket{t}\otimes \ket{\psi} \\
\myeq{\eqref{eqn::product_2_body}}&\ \bra{\one}_s \Big(\prod_{k=1}^{\Lambda+1} e^{A_k c_1 \dt}\Big)\Big(\prod_{k=\Lambda+1}^{1} e^{A_k d_1 \dt}\Big) \cdots \\
&\ \cdots \Big(\prod_{k=1}^{\Lambda+1} e^{A_k c_q \dt}\Big)\Big(\prod_{k=\Lambda+1}^{1} e^{A_k d_q \dt}\Big) \ket{t}\otimes \ket{\psi} \\
&\qquad + \order{\dt^{n+1}} .
\end{aligned}
\end{align*}
For instance, one can easily verify that
\begin{align*}
&\  \prod_{k=\Lambda+1}^{1} e^{A_k d_q \dt} \ket{t}\otimes \ket{\psi}  \\
=&\   \prod_{k=\Lambda+1}^{\Lambda'+2} e^{A_k d_q\dt} e^{A_{\Lambda'+1}\dt }\prod_{k=\Lambda'}^{1} e^{A_k d_q\dt}  \ket{t}\otimes \ket{\psi}\\
=&\ \prod_{k=\Lambda+1}^{\Lambda'+2} e^{A_k d_q\dt} e^{A_{\Lambda'+1}\dt } \Big(\ket{t}\otimes \prod_{k=\Lambda'}^{1} e^{-i H_k(t) d_q\dt}  \ket{\psi}\Big) \\
=&\ \prod_{k=\Lambda+1}^{\Lambda'+2} e^{A_k d_q\dt} \Big(\ket{t+d_q\dt}\otimes  \prod_{k=\Lambda'}^{1}  e^{-i H_k(t) d_q\dt}  \ket{\psi}\Big) \\
=&\ \ket{t+d_q\dt} \otimes U_{B}\big(t+d_q\dt, t\big) \ket{\psi}.
\end{align*}
Consequently, after applying all previous $2q-1$ terms and finally applying $\bra{\one}_s$, we can straightforwardly verify the formula \eqref{eqn::product_td}. 

After going through all multiplicative construction of operator exponential, there seems to be $2 \Lambda q$ gates in \eqref{eqn::product_td}. In fact, one can remove some overcounted gates from $H_1$ and $H_{\Lambda}$, e.g., one can merge operators with the form $e^{-i c_1 H_{\Lambda}(t)} e^{-i d_1 H_{\Lambda}(t)}$  into a single operator exponential $e^{-i (c_1 + d_1) H_{\Lambda}(t)}$.
\begin{itemize}[leftmargin=\leftmarginnew{}]
\item When $\Lambda' = 0$, the gate corresponding to $H_{\Lambda}$ are overcounted but the gates for $H_1$ cannot be combined, which leads into the estimate $2 \Lambda q - (q-1)$. This can be more explicitly seen by the following: consider
\begin{align*}
&\ U_F(t + L_k, t + R_k) U_B(t+R_k, t + L_{k+1}) \\
=&\ \prod_{k=1}^{\Lambda} e^{-i (L_k - R_k) H_k(t+R_k)} \prod_{k=\Lambda}^{1} e^{-i (R_k - L_{k+1}) H_k(t + R_k)},
\end{align*}
and therefore, the gate for $H_\Lambda$ is overcounted; however, one cannot merge gates for $H_1$ if $H_1(t+R_k)$ don't commute with $H_1(t+R_{k+1})$ for the following
\begin{align*}
&\ U_B(t+R_k, t + L_{k+1}) U_F(t+L_{k+1}, t+R_{k+1}) \\
=&\ \prod_{k=\Lambda}^{1} e^{-i (R_k - L_{k+1}) H_k(t + R_k)} \prod_{k=1}^{\Lambda} e^{-i (L_{k+1} - R_{k+1}) H_k(t+R_{k+1})}.
\end{align*}
Overall, for the most general case, the number of gates is $2 \Lambda q - q$, whereas for \DAS{}, one only need $2\Lambda q - (2q-1)$ gates.
\item When $\Lambda' \neq 0$ and $\Lambda' \neq \Lambda$, at each junction of $U_F$ and $U_B$ operators, we can similarly verify that there is one overcounted gate and this leads into the estimate $2 \Lambda q - (2q-1)$. 
\item When $\Lambda' = \Lambda$, the gate estimate is similar to the above cases.
\end{itemize}
\end{proof}

\subsection{Discussion on the minimum-gate implementation}
\label{sec::pointwise::gate}
The above \thmref{thm::td_pointwise} suggests that as long as we pick $\Lambda' \neq 0, \Lambda$, then the time-dependent product formula above always use the same number of gates as the time-independent product formula under a very general condition. Suppose that we already have a time-independent product formula with the minimum number of gates for a certain fixed order (where the order can be arbitrary), then the time-dependent formula given by  \thmref{thm::td_pointwise} must also achieve the minimal number of gates. This generalizes the result in \cite{ikeda_minimum_2023}. Moreover, their algorithm in \cite{ikeda_minimum_2023} is not directly applicable for Hamiltonians with more than $2$ local terms; a similar algorithm is only stated in the context of qubit systems in \cite{chen_quantum_2023} (though not aiming at achieving the minimal number of gates). The above scheme \eqref{eqn::product_td}, however, is capable of handing a general $\Lambda$.

{For example}, for fourth-order product formulas, the Forest-Ruth-Suzuki formula \cite{forest_fourth-order_1990,suzuki_fractal_1990,ostmeyer_optimised_2023} achieves the minimum number of gates:
\begin{align}
\label{eqn::FRS}
e^{X+Y} \approx e^{\frac{\gamma}{2} X} e^{\gamma Y} e^{\frac{1-\gamma}{2} X} e^{(1-2\gamma) Y} e^{\frac{1-\gamma}{2} X} e^{\gamma Y} e^{\frac{\gamma}{2} X},
\end{align}
where $\gamma = (2-2^{1/3})^{-1} \approx 1.351207$. This product formula has the minimum number of gates for two local operators, which makes it suitable for \DAS.
\cite{ikeda_minimum_2023} used this formula to derive a product formula for time-dependent Hamiltonian simulation based on Magnus expansion and it admits the following form:
\begin{align}
\label{eqn::FRS_Magnus}
\begin{aligned}
&\ \mathcal{T}e^{-i\int_{t}^{t+\dt} H_1(s) + H_2(s)\ \dd s}\\
\approx &\ 
e^{-i (\frac{\gamma\beta_1}{2}+u) H_1} 
e^{-i\gamma\beta_2 H_2} 
e^{-i\frac{1-\gamma}{2} \beta_1 H_1} 
e^{-i(1-2\gamma) \beta_2 H_2} \\
& \qquad \times e^{-i\frac{1-\gamma}{2} \beta_1 H_1} 
e^{-i\gamma\beta_2 H_2} 
e^{-i (\frac{\gamma\beta_1}{2}-u) H_1}
\end{aligned}
\end{align}
where 
\begin{align*}
\beta_1 &= \int_{t}^{t+\dt} f_1(s)\dd s,\\
\beta_2 &= \int_{t}^{t+\dt} f_2(s)\dd s, \\
u &= \frac{1}{2 \beta_2} \int_{t}^{t+\dt}\dd s_1 \int_{t}^{s_1}\dd s_2\ f_1(s_1) f_2(s_2) - f_2(s_1) f_1(s_2).
\end{align*}
By \eqref{eqn::product_td} and when $\Lambda' = 1$, the time-dependent product formula is
\begin{align}
\label{eqn::FRS_new}
\begin{aligned}
&\ \mathcal{T}e^{-i\int_{t}^{t+\dt} H_1(s) + H_2(s)\ \dd s} \\
\approx &\  e^{-i \frac{\gamma\dt}{2} \hat{H}_1(t+\dt)} e^{-i \gamma\dt \hat{H}_2(t+(1-\frac{\gamma}{2})\dt)} \\
&\ \ e^{-i \frac{(1-\gamma)\dt}{2} \hat{H}_1(t+(1-\gamma)\dt)}
 e^{-i (1-2\gamma)\dt \hat{H}_2(t+\frac{\dt}{2})} \\
&\ \ e^{-i \frac{(1-\gamma)\dt}{2} \hat{H}_1(t+\gamma\dt)} e^{-i \gamma\dt \hat{H}_2(t+\frac{\gamma\dt}{2})} \\
&\ \ e^{-i \frac{\gamma\dt}{2} \hat{H}_1(t)}.
 \end{aligned}
\end{align}
This formula has the same number of gates compared with \eqref{eqn::FRS_Magnus} derived from \cite{ikeda_minimum_2023}, which means it also has the minimum possible gates for a time-dependent Hamiltonian simulation problem. 

\subsection{Connection to Suzuki's approach}
\label{sec::pointwise::suzuki}

In what follows, we shall discuss how the above newly developed scheme \eqref{eqn::product_td} connects to Suzuki's original approach and a few potential advantage of the approach that we take here.\\

{\noindent\emph{Recovery of Suzuki's formula.}} When one takes $\Lambda' = 0$ and considers the case $c_j = d_j$ for $1\le j \le q$, 
one knows that for the cycle $j$ in \eqref{eqn::product_td}, one has $L_j - R_j  = c_j \dt = d_j \dt = R_j - L_{j+1}$, and the building block in \eqref{eqn::product_td} for the $j^{\text{th}}$ cycle:
\begin{align*}
& U_F(t+ L_j, t + R_j) U_B(t+R_j, t+L_{j+1}) \\
=& \prod_{k=1}^{\Lambda} e^{-i c_k\dt {H}_k(t+R_j)} \prod_{k=\Lambda}^{1} e^{-i d_k \dt H_k(t+R_j)}
\end{align*}
is essentially the midpoint scheme in Eq.~(24) in \cite{suzuki_general_1993}, or Eq.~(9) in  \cite{wiebe_higher_2010} (an analytical work for Suzuki's time-dependent scheme), and the final overall scheme is basically the time-dependent formula by Suzuki.\\

\noindent{\emph{Possible advantage of \SH{}'s clock.}}
This approach is mathematically similar to Suzuki's approach whereas the approach from \SH{}'s clock {offers a new perspective to view Suzuki's time operator with a more explicit mathematical structure, as well as possible physical interpretations. More specifically,} (i) it avoids the necessity to handle or justify the operator $ \nicefrac{\overset{\leftarrow}{\partial}}{\partial t}$ \cite{suzuki_general_1993} whose mathematical justification is not well discussed; {if we view Suzuki's approach in the language of Sambe-Howland's clock, the functional space is more clear as explained in Section~\ref{subsec::funct_space} and Suzuki's time operator is essentially the momentum operator acting on the augmented space.} (ii) \SH{}'s clock Hamiltonian provides a physical interpretation for Suzuki's formulas, and in particular, $\nicefrac{\overset{\leftarrow}{\partial}}{\partial t}$ can be exactly interpreted as the extra auxiliary state as a \enquote{clock}. We  acknowledge that his original operator exactly plays the role as a physical clock but it may not be as explicitly stated as \SH{}'s clock.

\subsection{On the mathematical concerns and the second proof}
\label{sec::pointwise::math}

\subsubsection{Discussion on the mathematical concern}

The above derivation of product formulas uses the transport equation of a delta measure $\ket{t} = \delta(s-t)$, which is mathematically well-defined for the space of generalized distributions. 
However, a reasonable concern is that the operator $\hat{p}_s$ is not well-defined when acting on a delta measure $\ket{t}$, which leads into a concern that when we estimate the error or perform Taylor's expansion of $e^{-i \hat{p}_s \dt} = \sum_{k=0}^{\infty} \nicefrac{\big(-i \hat{p}_s \dt\big)^k}{k!}$, the series is not well-defined when  acting on $\ket{t}\ket{\psi}$ where $\ket{\psi}\in \hbt$. This perspective was discussed for the discrete clock case recently in \cite[Sect.~IV]{watkins_time_2024}. However, the delocalized formalism in \eqref{eqn::v2}, the conjugate formalism of the localized one, can exactly avoid this concern.

The reason is the following: for a smooth function $\ket{\Psi}\in \augspace$, $\hat{p}_s\otimes \id$ is surely well-defined acting on $\ket{\Psi}$ and thus the translation operator $e^{-i \dt \hat{p}_s\otimes \id}$ is also well-defined; the inner product $\bra{t+\dt}_s \ket{\Psi} \equiv \Psi(t+\dt, \cdot)$ is also bounded for $\ket{\Psi}\in \augspace$, and we don't need to interpret $\bra{t+\dt}$ as a delta measure but equivalently as retrieving the function value of a smooth function on the augmented space. Therefore, it becomes clear that the delocalized version could be made mathematically rigorous as long as the input state $\ket{\Psi}$ is smooth with respect to the $s$ variable.

In what follows, we will provide a second proof of \thmref{thm::td_pointwise}, showing that the delocalized formalism \eqref{eqn::v2} gives the same approximation (and thus automatically the same error term) as the localized formalism \eqref{eqn::v1}. This suggests that both approaches, the localized version \eqref{eqn::v1} and the delocalized version \eqref{eqn::v2}, could be made mathematically rigorous with appropriate interpretations.

\subsubsection{The second proof of \thmref{thm::td_pointwise}}
Let us consider the delocalized version \eqref{eqn::v2}. 
Before getting into the detailed proofs, we shall list some facts in \lemref{lem::product_1}, whose proof is postponed to \appref{subsec::lem::product_1}.

\begin{lemma}
\label{lem::product_1}
If one adopts the same notation as in \eqref{eqn::Ak}, then  for any $s'$, $\theta\in\Real$, $\ket{\varphi}\in \hbt$ and any $\ket{\Psi}\in \augspace$, one has
\begin{align*}
&\ \bra{s'} \bra{\varphi}\prod_{k=1}^{\Lambda+1} e^{A_k \theta \dt}\ket{\Psi} 
= \bra{s'-\theta\dt} \bra{\varphi} \Big(\id \otimes U_F(s', s' - \theta\dt)\Big) \ket{\Psi},
\end{align*}
and
\begin{align*}
&\ \bra{s'} \bra{\varphi}\prod_{k=\Lambda+1}^{1} e^{A_k \theta \dt}\ket{\Psi} 
= \bra{s'-\theta\dt} \bra{\varphi} \Big(\id \otimes U_B(s', s' - \theta\dt)\Big) \ket{\Psi}. 
\end{align*}
\end{lemma}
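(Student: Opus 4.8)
The plan is to establish \lemref{lem::product_1} by a direct ``push-through'' computation, in which $\bra{s'}$ is read as point evaluation of a smooth function at $s'$ rather than as pairing against a delta measure; working on $\augspace$ in this way is exactly what keeps every step bounded and well defined, and is the whole reason the delocalized version \eqref{eqn::v2} sidesteps the domain issues of $\hat p_s$. First I would record two elementary facts, valid for any $\ket{\Psi}\in\augspace$ and any $\ket{\varphi}\in\hbt$. (Multiplication.) Since each $H_k(\hat s)$ is diagonal in the clock variable, $\bra{s'}\bra{\varphi}\,e^{-i\theta\dt H_k(\hat s)} = \bra{s'}\otimes\big(\bra{\varphi}\,e^{-i\theta\dt H_k(s')}\big)$, i.e. the clock label is untouched while the $\hbt$-factor acquires $e^{-i\theta\dt H_k(s')}$. (Translation.) Since $A_{\Lambda'+1}$ is the transport generator $\hat p_s\otimes\id$, its exponential is the shift operator, and on smooth states $\bra{s'}\,e^{A_{\Lambda'+1}\theta\dt}\ket{\Psi} = \Psi(s'-\theta\dt,\cdot) = \bra{s'-\theta\dt}\ket{\Psi}$, so that $\bra{s'}\,e^{A_{\Lambda'+1}\theta\dt} = \bra{s'-\theta\dt}$ as functionals on $\augspace$.

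With these two facts in hand, I would expand $\prod_{k=1}^{\Lambda+1}e^{A_k\theta\dt}$ into three blocks according to \eqref{eqn::Ak}: a leading block of the multiplication factors for $H_1,\dots,H_{\Lambda'}$, the single shift $e^{A_{\Lambda'+1}\theta\dt}$, and a trailing block of the multiplication factors for $H_{\Lambda'+1},\dots,H_{\Lambda}$. Pushing $\bra{s'}\bra{\varphi}$ through from the left, the leading block is evaluated at the still-unshifted label $s'$ and produces $\prod_{k=1}^{\Lambda'}e^{-i\theta\dt H_k(s')}$; the shift then relabels $s'\mapsto s'-\theta\dt$; and the trailing block, now seen at label $s'-\theta\dt$, produces $\prod_{k=\Lambda'+1}^{\Lambda}e^{-i\theta\dt H_k(s'-\theta\dt)}$. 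Writing $t'=s'$ and $t=s'-\theta\dt$ so that $t'-t=\theta\dt$, the accumulated operator on $\hbt$ is precisely $U_F(s',s'-\theta\dt)$, which yields the first identity after rewriting the result as $\bra{s'-\theta\dt}\bra{\varphi}\big(\id\otimes U_F(s',s'-\theta\dt)\big)\ket{\Psi}$. The second identity is the same computation with the product order reversed: the factors with index $\ge\Lambda'+1$ are now encountered first and evaluated at $s'$, the shift relabels to $s'-\theta\dt$, and the factors with index $\le\Lambda'$ are evaluated at $s'-\theta\dt$, reproducing $U_B(s',s'-\theta\dt)$.

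To keep the bookkeeping honest I would iterate the two elementary facts factor by factor, or equivalently run a short induction on the number of exponentials already processed, carrying along the current clock label and the accumulated operator on $\hbt$. The only real subtlety — and the point worth stating explicitly — is that a multiplication factor is evaluated at $s'$ precisely when it stands to the left of the shift and at $s'-\theta\dt$ when it stands to its right; this left/right split is exactly the two-product split appearing in the definitions of $U_F$ and $U_B$, so matching it correctly is what the argument hinges on. I do not anticipate any analytic obstacle: the uniform bound on $\norm{\h_k}$ together with the boundedness of the $f_k$ makes every $\hbt$-factor a bounded operator, the shift is an isometry on $\augspace$, and point evaluation is continuous there, so no delta-measure or unbounded-domain argument for $\hat p_s$ is needed at any stage.
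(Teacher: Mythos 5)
Your proposal is correct and takes essentially the same route as the paper's own proof: the paper records precisely your two elementary facts as \eqref{eqn::v2_basic} (pointwise action $\bra{s'}\bra{\varphi}e^{-i\tau H_k(\hat s)}\ket{\Psi}=\bra{s'}\bra{\varphi}\,\id\otimes e^{-i\tau H_k(s')}\ket{\Psi}$ for the clock-diagonal factors, and the shift identity $\bra{s'}\bra{\varphi}e^{-i\tau \hat p_s\otimes\id}\ket{\Psi}=\bra{s'-\tau}\bra{\varphi}\ket{\Psi}$ on $\augspace$) and then pushes $\bra{s'}\bra{\varphi}$ through the product factor by factor, with factors left of the shift evaluated at $s'$ and those to its right at $s'-\theta\dt$, which is exactly your left/right split producing $U_F$ and, with the reversed ordering, $U_B$. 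There is no gap to report.
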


Recall that the delocalized version \eqref{eqn::v2} gives  the following: for any $\ket{\psi} \in \hbt$
\begin{align*}
 &\ \bra{\varphi} \exactU{t+\dt}{t} \ket{\psi} \\
=&\ \bra{t+\dt} \bra{\varphi} e^{-i \bigH \dt} \ket{\one}\ket{\psi}\\
=&\ \bra{t+\dt}\bra{\varphi} \Big(\prod_{k=1}^{\Lambda+1} e^{A_k c_1 \dt}\Big)\Big(\prod_{k=\Lambda+1}^{1} e^{A_k d_1 \dt}\Big) \cdots \\
&\cdots \Big(\prod_{k=1}^{\Lambda+1} e^{A_k c_q \dt}\Big)\Big(\prod_{k=\Lambda+1}^{1} e^{A_k d_q \dt}\Big) \ket{\one} \ket{\psi} +\order{\dt^{n+1}}\\
=&\ \bra{t+\dt-c_1\dt}  \Big(\bra{\varphi} U_F(t+L_1, t+R_1)\Big) \Big(\prod_{k=\Lambda+1}^{1} e^{A_k d_1 \dt}\Big) \\
&\ \cdots \Big(\prod_{k=1}^{\Lambda+1} e^{A_k c_q \dt}\Big)\Big(\prod_{k=\Lambda+1}^{1} e^{A_k d_q \dt}\Big) \ket{\one} \ket{\psi} +\order{\dt^{n+1}}\\
=&\ \cdots \\
\myeq{\eqref{eqn::product_td}}&\ \bra{t} \bra{\varphi} \id\otimes {U}_n(t+\dt, t) \ket{\one} \ket{\psi} +\order{\dt^{n+1}}\\
=&\ \bra{\varphi} {U}_n(t+\dt, t) \ket{\psi} +\order{\dt^{n+1}}.
\end{align*}
To get the third equality, we used the fact that $L_1 = \dt$ and $R_1 = \dt - c_1 \dt$.
This provides a second proof of Theorem \ref{thm::td_pointwise}. 

\begin{remark}
We emphasize that the state $\ket{\Psi}\in \augspace$ in the augmented space \emph{does not need to be a physical state}, and we merely use the bra-ket notation for consistency. 
In particular, for the case $H_k(s) = f(s) \h_k$, with appropriate assumptions on the high-order derivatives of $f\in C^\infty(\timedom)$, one can expect that all operations are well-defined.
However, we notice that such a nice property does not  seem to hold for the discretized \SH{}'s clock with $\omega > 0$, which leads to unsolved technicalities in \cite{watkins_time_2024} when they applied the firstly discretized \SH{}'s clock to multi-product formulas. We will elaborate further how \SH{}'s clock in the \emph{continuous form} can assist with resolving this problem in \secref{sec::mpf} below.
\end{remark}

\section{Product formula (\rom{2}): query to the time integration of $H_k(s)$}
\label{section::HDR}

In this section, we will demonstrate how \SH{}'s clock can establish generalized schemes of \HDR{}'s approach \cite{huyghebaert_product_1990} to arbitrary high-order, by assuming queries to $\exp_{\mathcal{T}}\big(-i \int_{t}^{t+\dt} H_k(s)\ \dd s\big)$ for any $t$ and small $\dt$. Product-based formulas along this line appear to be much less studied, compared to Suzuki's operator $\nicefrac{\overset{\leftarrow}{\partial}}{\partial t}$. Such a generalization seems to be unknown in literature, to the best of our knowledge, and was also discussed as an open question in a recent work \cite{ikeda_minimum_2023}. 
In this section, we will derive novel high-order \HDR{} schemes (or HDR scheme in short) based on \SH{}'s clock, and these schemes turn out to be very efficient in handling time for \DAS.  

Firstly, we will illustrate the main idea and show how Strang splitting scheme together with appropriate decomposition of \SH{}'s clock Hamiltonian can recover the second-order scheme of \HDR{} in \secref{sec::HDR::example}. Then in \secref{sec::HDR::theory}, given any time-independent product formula, we will show how to find a corresponding HDR scheme with the same order. We emphasize that such a family of newly developed high-order scheme does not increase the number of unitary gates required, which addresses the problem of finding minimum-gate implementation in \cite{ikeda_minimum_2023} from yet another different perspective, besides the approach in \secref{sec::pointwise::gate}. 

\subsection{An illustrative example}
\label{sec::HDR::example}
When $\Lambda = 2$, a time-dependent generalization of Strang splitting scheme (midpoint scheme) was studied in \cite{huyghebaert_product_1990}:
\begin{align*}
&\ \exactU{t+\dt}{t} \\
\approx &\  \eT{-i \int_{t+\dthalf}^{t+\dt} H_1(s)\ \dd s} \eT{-i \int_{t+\dthalf}^{t+\dt} H_2(s)\ \dd s}  \eT{-i \int_{t}^{t+\dthalf} H_2(s)\ \dd s} \eT{-i \int_{t}^{t+\dthalf} H_1(s)\ \dd s} \\
\equiv &\ \eT{-i \int_{t+\dthalf}^{t+\dt} H_1(s)\ \dd s} \eT{-i \int_{t}^{t+\dt} H_2(s)\ \dd s} \eT{-i \int_{t}^{t+\dthalf} H_1(s)\ \dd s}.
\end{align*}
If one considers \DAS{}, then one simply has e.g., 
\begin{align*}
\eT{-i \int_{t+\dthalf}^{t+\dt} H_1(s)\ \dd s}  = e^{-i \int_{t+\dthalf}^{t+\dt} H_1(s)\ \dd s}.
\end{align*}

To establish the connection between their scheme and \SH{}'s clock, let us divide the Hamiltonian $\bigH$ into
\begin{align*}
\bigH = \underbrace{\big(\hat{p}_s \otimes \id + H_1(\hat{s})\big)}_{=:A_1} +  \underbrace{- \hat{p}_s \otimes \id}_{=:A_2}  + \underbrace{\big(\hat{p}_s \otimes \id + H_2(\hat{s})\big)}_{=:A_3},
\end{align*}
and apply the Strang splitting (2nd order) for these three (time-independent) operators:
\begin{align*}
&\ \ e^{-i\bigH \dt} \ket{t}\ket{\phi} \\
\approx &\ \ e^{-iA_1 \dthalf} e^{-i A_2 \dthalf} e^{-i A_3 \dthalf} e^{-i A_3 \dthalf} e^{-i A_2 \dthalf}  \Big(e^{-iA_1 \dthalf} \ket{t}\ket{\phi}\Big) \\
\myeq{\eqref{eqn::v1}}&\ \ e^{-iA_1 \dthalf} e^{-i A_2 \dthalf}  e^{-i A_3 \dthalf}  e^{-i A_3 \dthalf} e^{-i A_2 \dthalf}  \ket{t+\nicefrac{\dt}{2}} \big(\eT{-i \int_{t}^{t+\dthalf} H_1(s)\ \dd s} \ket{\phi}\big) \\
=&\ \ e^{-iA_1 \dthalf} e^{-i A_2 \dthalf}  e^{-i A_3 \dthalf} e^{-i A_3 \dthalf} \ket{t} \big(\eT{-i \int_{t}^{t+\dthalf} H_1(s)\ \dd s} \ket{\phi}\big) \\
=& e^{-iA_1 \dthalf} e^{-i A_2 \dthalf}  e^{-i A_3 \dthalf}  \ket{t+\dt/2} \big(\eT{-i \int_{t}^{t+\dthalf} H_2(s)\ \dd s}\ \eT{-i \int_{t}^{t+\dthalf} H_1(s)\ \dd s} \ket{\phi}\big) \\
=&\ \vdots\\
=&\ \ket{t+\dt} \otimes \Big(\eT{-i \int_{t+\dthalf}^{t+\dt} H_1(s)\ \dd s} \eT{-i \int_{t+\dthalf}^{t+\dt} H_2(s)\ \dd s} \eT{-i \int_{t}^{t+\dthalf} H_2(s)\ \dd s} \eT{-i \int_{t}^{t+\dthalf} H_1(s)\ \dd s} \ket{\phi}\Big).
\end{align*}

This is essentially the midpoint scheme developed above after we trace out the additional degree of freedom in the $s$ variable  (or say ignore the $s$-variable).

Unlike Suzuki's approach which uses point-wise access to $H_k(s)$, the above scheme uses the time-integrated Hamiltonian, whereas the whole derivation is essentially the same as the time-independent case. Moreover, one only needs to apply time-independent schemes in the \SH{}'s clock to derive both Suzuki's formula and \HDR{}'s second-order algorithm.

\subsection{A general scheme}
\label{sec::HDR::theory}

Due to the above connection of \HDR{}'s second-order scheme with \SH{}'s clock, we can derive a whole family of time-dependent high-order schemes along the line of \HDR{}'s approach, in a very systemic way.
When given the  $\Lambda$ operators, we can split the Hamiltonian $\bigH$ via the following:
\begin{align}
\label{eqn::split_HDR}
\begin{aligned}
\bigH  &= \underbrace{\big(\hat{p}_s \otimes \id + H_1(\hat{s})\big)}_{=:A_1} + \underbrace{(-\hat{p}_s\otimes \id)}_{=:A_{2}} + \underbrace{\big(\hat{p}_s \otimes \id + H_2(\hat{s})\big)}_{=:A_3}  + \cdots \\
& \qquad + \cdots + \underbrace{(-\hat{p}_s\otimes \id)}_{=:A_{2\Lambda-2}} + \underbrace{\big(\hat{p}_s \otimes \id + H_\Lambda(\hat{s})\big)}_{=:A_{2\Lambda-1}}.
\end{aligned}
\end{align}
In total, there are  $2\Lambda-1$ operators. By applying any time-independent product-based formula for this, we immediately end up with the following approximation:
\begin{theorem}
\label{thm::product_td_v2}
Suppose that the time-independent Hamiltonian can be decomposed as in \eqref{eqn::decompose}. Under the same assumption of Theorem \ref{thm::ostmeyer},
the following scheme gives an order $n$ approximation of the time-ordered unitary evolution $\exactU{t+\dt}{t}$:
\begin{align}
\label{eqn::product_td_v2}
\begin{aligned}
 \hdu_n(t+\dt, t) :=  &\ \hdu_{F} \big(t+L_1, t+R_1\big) \hdu_{B}\big(t+R_1, t+L_{2}\big) \cdots \\
&\ \ \cdots \hdu_{F}\big(t+L_{q-1}, t+R_{q-1}\big) \hdu_{B}\big(t+R_{q-1}, t+L_q\big) \\
&\ \  \hdu_{F}\big(t+ L_{q}, t+R_q\big) \hdu_{B}\big(t+R_q, t + 0\big),
\end{aligned}
\end{align}
where 
\begin{align*}
\left\{
\begin{aligned}
\hdu_{F}\big(t', t\big) &= \prod_{k=1}^{\Lambda} \eT{-i \int_{t}^{t'} H_k(s) \dd s}, \\
\hdu_{B}\big(t', t\big) &= \prod_{k=\Lambda}^{1} \eT{-i \int_{t}^{t'} H_k(s) \dd s}.
\end{aligned}\right.
\end{align*}
Suppose we are given query access to $\eT{-i \int_{t}^{t'} H_k(s)\ \dd s}$ for any $t, t'$ and the index $k$,  this generalized HDR{} scheme always only uses $2\Lambda q - (2q-1)$ gates for each time step, and therefore, it {\bf costs no extra gates} compared with the time-independent product formula.
\end{theorem}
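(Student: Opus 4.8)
The plan is to mirror the two proofs of \thmref{thm::td_pointwise} --- using the localized identity \eqref{eqn::v1} for intuition and the delocalized identity \eqref{eqn::v2} for rigor --- but now applied to the $(2\Lambda-1)$-operator decomposition \eqref{eqn::split_HDR} in place of \eqref{eqn::Ak}. First I would apply \thmref{thm::ostmeyer} to the $2\Lambda-1$ time-independent operators $A_1,\dots,A_{2\Lambda-1}$ of \eqref{eqn::split_HDR}. Since the extension coefficients $c_j,d_j$ depend only on the underlying two-body formula and not on the number of operators, this produces an order-$n$ product formula for $e^{-i\dt\bigH}$ made of $q$ forward sweeps $\prod_{k=1}^{2\Lambda-1}e^{-iA_k c_j\dt}$ and backward sweeps $\prod_{k=2\Lambda-1}^{1}e^{-iA_k d_j\dt}$, with error $\order{\dt^{n+1}}$.

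The crucial observation, which replaces \lemref{lem::product_1}, is that each odd-indexed block $A_{2j-1}=\hat{p}_s\otimes\id+H_j(\hat{s})$ is itself a single-term \SH{} clock, so $e^{-i\theta(\hat{p}_s\otimes\id+H_j(\hat{s}))}$ advances the clock state $\ket{\tau}$ to $\ket{\tau+\theta}$ while applying the single time-ordered exponential $\eT{-i\int_{\tau}^{\tau+\theta}H_j(s)\dd s}$ to the system register, whereas each even-indexed block $A_{2j}=-\hat{p}_s\otimes\id$ translates the clock backward. Threading these through a forward sweep, the $\Lambda-1$ backward hops reset the clock to $\tau$ between the $\Lambda$ forward hops, so every accumulated integral runs over the same interval $[\tau,\tau+\theta]$ and the final forward hop leaves the clock at $\tau+\theta$; collecting the exponentials in application order ($H_1$ outermost) shows the forward sweep collapses to exactly $\hdu_F(\tau+\theta,\tau)$, and symmetrically the backward sweep collapses to $\hdu_B(\tau+\theta,\tau)$. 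Chaining the $q$ sweep-pairs and reading off the intermediate clock positions $t+L_k$ and $t+R_k$ --- using $L_1=\dt$, $R_1=\dt-c_1\dt$, and the recursions $L_k-R_k=c_k\dt$, $R_k-L_{k+1}=d_k\dt$ exactly as in \thmref{thm::td_pointwise} --- reproduces \eqref{eqn::product_td_v2}. The final contraction against $\bra{\one}_s$ (or against $\bra{t+\dt}_s$ in the delocalized picture) is exact, so the only error is the $\order{\dt^{n+1}}$ from the product formula, giving order $n$.

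For the gate count, the assembled sequence is $\hdu_F\hdu_B\hdu_F\cdots\hdu_F\hdu_B$ of $2q$ alternating blocks, a naive $2\Lambda q$ time-ordered exponentials. At each of the $2q-1$ internal junctions the two adjacent blocks share a common extreme factor --- an $H_\Lambda$-exponential where $\hdu_F$ meets $\hdu_B$, and an $H_1$-exponential where $\hdu_B$ meets $\hdu_F$ --- acting on contiguous time intervals with the \emph{same} single Hamiltonian. The composition law $\eT{-i\int_b^c H_k}\,\eT{-i\int_a^b H_k}=\eT{-i\int_a^c H_k}$ then merges each such pair into one factor, removing exactly one gate per junction and yielding $2\Lambda q-(2q-1)$. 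I would stress the contrast with the point-wise scheme of \thmref{thm::td_pointwise}, where the merge fails for $\Lambda'\in\{0,\Lambda\}$ because pointwise factors $e^{-iH_k(\alpha)}$ at incompatible times do not combine: here the merge always succeeds since it uses only the exact semigroup property of a single Hamiltonian's time-ordered exponential, requiring no commutator assumption. This is why the HDR count is uniformly $2\Lambda q-(2q-1)$ and costs no extra gates over the time-independent formula.

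The main obstacle is not the algebra but the functional-analytic soundness: the sweeps contain the unbounded generator $\hat{p}_s$, so the naive localized computation formally applies $e^{-i\theta\hat{p}_s}$ to the distribution $\ket{t}$. I would resolve this exactly as in \secref{sec::pointwise::math} by carrying out the argument in the delocalized picture \eqref{eqn::v2} on smooth states $\ket{\Psi}\in\augspace$, where $\hat{p}_s\otimes\id$ and its exponential act genuinely on smooth functions and $\bra{t+\dt}_s$ merely evaluates such a function at a point; the single-term clock identity and the translations are then rigorous and the order-$n$ error is inherited cleanly. A secondary point to verify is that the clock-position bookkeeping for the $\hdu$ sweeps coincides with the $L_k,R_k$ of \thmref{thm::td_pointwise}, which holds because both constructions use the same Ostmeyer coefficients $c_j,d_j$.
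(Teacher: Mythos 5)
Your proposal is correct and follows essentially the same route as the paper: the paper's own proof simply applies \thmref{thm::ostmeyer} to the $(2\Lambda-1)$-operator decomposition \eqref{eqn::split_HDR} and repeats the derivation of \thmref{thm::td_pointwise}, using exactly your single-term clock identity $e^{-i\theta(\hat{p}_s\otimes\id+H_j(\hat{s}))}\ket{\tau}\ket{\phi}=\ket{\tau+\theta}\otimes\eT{-i\int_{\tau}^{\tau+\theta}H_j(s)\dd s}\ket{\phi}$ to collapse the sweeps, with the same $L_k,R_k$ bookkeeping and the same delocalized picture \eqref{eqn::v2} for rigor. Your explicit semigroup-merging argument at the $2q-1$ junctions correctly justifies the $2\Lambda q-(2q-1)$ gate count, which the paper asserts here while detailing the analogous merging only in the proof of \thmref{thm::td_pointwise}.
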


\begin{proof}
The proof is essentially the same as the above illustrative example except that we apply Theorem~\ref{thm::ostmeyer} for high-order splitting in replace of the second order Strang scheme. The detailed derivation is no different from Theorem \ref{thm::td_pointwise}.
\end{proof}

\begin{remark}
When we choose the Forest-Ruth-Suzuki formula \eqref{eqn::FRS} in Theorem \ref{thm::product_td_v2}, the explicit expression is given in \eqref{eqn::HDR_FRS} in Appendix. Such a scheme is different from \eqref{eqn::FRS_Magnus} derived by Magnus' expansion in Ref.~\cite{ikeda_minimum_2023}. Similarly, one could derive the relevant sixth order algorithm with minimum gate complexity by applying Yoshida’s sixth-order formula, and theoretically to any order.
\end{remark}

\section{Time-dependent Multi-product Formula}
\label{sec::mpf}

In this section, we will show how the \SH{}'s continuous clock formalism can help to establish the multi-product formula (MPF) used as quantum algorithms for time-dependent Hamiltonian simulation. The MPF for time-dependent cases was proposed in \cite{watkins_time_2024} using discretized \SH{}'s clock. However, one important foundational result was only proposed as a conjecture due to a technical challenge of using the discrete clock, and thus it requires further study. We will complete this missing component via \SH{}'s continuous clock, which is technically different from \cite{watkins_time_2024}; the key advantage of the continuous clock is that we can avoid the technicality of using localized Gaussian wave function to approximate the delta function. 

This section is organized as follows. We will first review the time-independent MPF in \secref{sec::MPF::review} and then show how the continuous \SH{}'s clock can help with the conjecture from \cite{watkins_time_2024} in \thmref{thm::mpf_v1} in \secref{sec::MPF::v1}. Since we had shown in precious sections that Suzuki's approach and \HDR{}'s approach are essentially the same in the augmented clock space, the base product formula in MPF can be straightforwardly replaced by the midpoint scheme of \HDR{}, without extra effort; see \thmref{thm::mpf_v2}. Though there might be other ways to establish the time-dependent MPF, the \SH{}'s clock offers a simple systematic approach to achieve this.

\subsection{MPF for time-independent Hamiltonian simulation}
\label{sec::MPF::review}

The multi-product formula for operator splitting was studied by Chin and Geiser \cite{chin_multi_product_2011}, and contemporarily for Hamiltonian dynamics by Childs and Wiebe \cite{childs_hamiltonian_2012}, and later further improved in \cite{low_well_conditioned_2019,endo_mitigating_2019}. The main idea behind is to construct a higher-order scheme using the additive construction, namely a linear combination of lower order scheme with multiple time steps \cite{blanes_extrapolation_1999} in the way similar to the Richardson's extrapolation.
For the time-independent case, one has
\begin{align}
\label{eqn::mpf}
e^{-i H \dt} = \sum_{j=1}^{M} \alpha_j \Big(U_2(\dt/k_j)\Big)^{k_j} + \order{\dt^{2m+1}},
\end{align}
where $\alpha_j$ and $k_j$ are well-chosen to satisfy conditions in \eqref{eqn::mpf_order} so as to ensure that the truncation error is $\order{\dt^{2m+1}}$.
The right-hand side takes the form of a linear combination of unitaries (LCU), which can be implemented using LCU techniques with controlled-$U_2$ query and additional qubits. The above $U_2(s)$ is typically chosen as the Strang splitting scheme or say the midpoint scheme \eqref{eqn::midpoint}. Of course, $U_2$ can be theoretically replaced by other product formulas like any high-order symmetric product formula \cite[Sect.~5]{aftab_multi_product_2024}, but for simplicity, we shall stick with this particular choice. There is a very recent progress to provide a detailed mathematical treatment for time-independent case \cite{aftab_multi_product_2024}, which showed both logarithmic error dependence and commutator scaling of MPF for \emph{time-independent} Hamiltonian simulation.

\subsection{MPF for time-dependent Hamiltonian simulation}
\label{sec::MPF::v1}

The time-dependent MPF is comparatively much less studied. Suzuki's operator $\nicefrac{\overset{\leftarrow}{\partial}}{\partial t}$ was adopted in Chin and Geiser \cite{chin_multi_product_2011} to formally provide a time-dependent MPF scheme, and was further explored by Geiser in \cite{geiser_multi_product_2011}. They considered the operator-splitting methods for general linear dynamics without fully tailored results for quantum algorithms. Recently, \cite{watkins_time_2024} proposed a well-conditioned multi-product formula by replacing the consecutive concatenation of $U_2$ operator via its time-dependent analog, and discuss how to use LCU quantum algorithm to implement it. However, their result relies on Conjecture 1 in \cite{watkins_time_2024}. The technicality mentioned in their work seem to arise from adopting the discretized clock.
In what follows, we will discuss how the continuous-form of \SH{}'s clock can help with the challenge of discrete clock in \cite{watkins_time_2024}, and also recover the scheme in \cite{geiser_multi_product_2011} (which used Suzuki's time operator) with slight generalization in \thmref{thm::mpf_v2}.
Since the follow-up application of LCU algorithm has been well treated and explained in various literatures (e.g., \cite{low_well_conditioned_2019,watkins_time_2024,aftab_multi_product_2024}), we shall exclusively focus on the MPF itself, which is  stated in the following theorem.

\begin{theorem}
\label{thm::mpf_v1}
Suppose $\alpha_j$ and $k_j$ are chosen such that \eqref{eqn::mpf} holds (namely, we choose parameters for well-conditioned time-independent case). Then 
\begin{align*}
\exactU{t+\dt}{t} =& \sum_{j=1}^{M} \alpha_j \prod_{\ell={k_j-1}}^{0} {U}_2\Big(t+\frac{(\ell+1) \dt}{k_j}, t+\frac{\ell \dt}{k_j}\Big) + \order{\dt^{2m+1}},
\end{align*}
where for any $t,s\in \timedom$,
\begin{align*}
{U}_2(t,s) := \prod_{k=1}^{\Lambda} e^{-i \frac{t-s}{2} {H}_k(\frac{t+s}{2})}\prod_{k=\Lambda}^{1} e^{-i \frac{t-s}{2} {H}_k(\frac{t+s}{2})}.
\end{align*}
\end{theorem}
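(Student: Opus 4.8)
The plan is to apply the time-independent MPF \eqref{eqn::mpf} directly to the (time-independent) \SH{}'s clock Hamiltonian $\bigH = \hat{p}_s\otimes\id + \sum_{k=1}^{\Lambda}H_k(\hat{s})$ from \eqref{eqn::bigH}, and then contract against the clock state to descend back to the genuine time-dependent evolution. Since $\bigH$ carries no explicit time dependence, \eqref{eqn::mpf} applies verbatim and yields
\begin{align*}
e^{-i\bigH\dt} = \sum_{j=1}^{M}\alpha_j\big(U_2^{\bigH}(\dt/k_j)\big)^{k_j} + \order{\dt^{2m+1}},
\end{align*}
where $U_2^{\bigH}(\tau)$ denotes the Strang/midpoint product formula for the augmented Hamiltonian.

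First I would establish the core identification: that each factor $\big(U_2^{\bigH}(\dt/k_j)\big)^{k_j}$, acting on the clock state $\ket{t}$, collapses to the advertised ordered product of time-dependent midpoint operators. This is exactly the mechanism already exhibited in \secref{sec::pointwise::example}. A single Strang step $U_2^{\bigH}(\tau)$ is a half-translation $e^{-i\hat{p}_s\otimes\id\,\tau/2}$, the symmetric Hamiltonian sandwich $\prod_{k=1}^{\Lambda}e^{-i\frac{\tau}{2}H_k(\hat{s})}\prod_{k=\Lambda}^{1}e^{-i\frac{\tau}{2}H_k(\hat{s})}$, and a second half-translation; the two translations together advance the clock by $\tau$, while the Hamiltonian exponentials are all evaluated at the intermediate clock value---the midpoint. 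Taking $\tau=\dt/k_j$ and iterating, an induction on $\ell$ (tracking the clock translations via \lemref{lem::product_1}) shows that the $\ell$-th pass carries the clock from $t+\ell\dt/k_j$ to $t+(\ell+1)\dt/k_j$ and contributes precisely $U_2\big(t+\tfrac{(\ell+1)\dt}{k_j},\,t+\tfrac{\ell\dt}{k_j}\big)$, with the latest pass sitting leftmost, matching the descending product in the statement. Contracting with $\bra{\one}_s$ and using the localized identity \eqref{eqn::v1} together with $\braket{\one}{t+\dt}=1$ then reads off the claimed formula directly from the MPF expansion.

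The main obstacle is the rigorous propagation of the $\order{\dt^{2m+1}}$ error, since the time-independent MPF bound is derived for bounded generators whereas $\bigH$ contains the unbounded momentum $\hat{p}_s$. I would dispatch this exactly as in \secref{sec::pointwise::math}: instead of pairing the unbounded $\hat{p}_s$ against the delta measure $\ket{t}$, I use the conjugate delocalized version \eqref{eqn::v2}, testing $e^{-i\bigH\dt}$ against the smooth state $\ket{\one}\in\augspace$ and extracting the function value at $t+\dt$. On $\augspace$ every translation operator is bounded and every higher $s$-derivative entering the Taylor remainder of the MPF is controlled by the $L^\infty$ bounds on derivatives of the $f_k$, so the error estimate---together with its constants---transfers intact to the time-dependent scheme. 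This is precisely the point at which the continuous clock succeeds where the discretized clock of \cite{watkins_time_2024} stalls: their finite-difference momentum carries norm $\order{1/\ds}$, which is what forced the corresponding statement to remain a conjecture.
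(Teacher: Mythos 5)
Your proposal is correct and follows essentially the same route as the paper: apply the time-independent MPF \eqref{eqn::mpf} with the Strang base formula to the clock Hamiltonian $\bigH$, collapse each factor $\bigU_2(\dt/k_j)^{k_j}$ into the ordered product of time-dependent midpoint operators by tracking the clock translations, and contract via \eqref{eqn::v1}. The paper likewise carries out both the localized and delocalized contractions, and your use of the delocalized version \eqref{eqn::v2} on $\augspace$ to sidestep the unbounded $\hat{p}_s$ acting on a delta measure is exactly the paper's own resolution of the rigor issue from \cite{watkins_time_2024}.
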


When one has access to $\exp_{\mathcal{T}}\big(-i\int_{t_1}^{t_2} H_k(s)\dd s\big)$ for any $t_1$, $t_2$, and $k$, one can use the splitting choice in \eqref{eqn::split_HDR} and end up with the following scheme:
\begin{theorem}
\label{thm::mpf_v2}
Suppose $\alpha_j$ and $k_j$ are chosen such that \eqref{eqn::mpf} holds (namely, parameters for time-independent case). Then
\begin{align*}
\exactU{t+\dt}{t} =& \sum_{j=1}^{M} \alpha_j \prod_{\ell=k_j-1}^0 \hdu_2(t+\nicefrac{(\ell+1) \dt}{k_j}, t+\nicefrac{\ell \dt}{k_j}) + \order{\dt^{2m+1}}.
\end{align*}
where 
\begin{align*}
\hdu_2(t,s) := \prod_{k=1}^{\Lambda} \expT{-i \int_{s+\frac{t-s}{2}}^{t} {H}_k(r)\dd r}\ \prod_{k=\Lambda}^{1} \expT{-i \int_{s}^{s+\frac{t-s}{2}} {H}_k(r)\dd r}.
\end{align*}
\end{theorem}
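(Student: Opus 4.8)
The plan is to apply the time-independent multi-product formula \eqref{eqn::mpf} directly to the Sambe--Howland clock Hamiltonian $\bigH$, but now split according to the HDR decomposition \eqref{eqn::split_HDR} instead of the pointwise Suzuki-type splitting that underlies \thmref{thm::mpf_v1}; the whole argument then runs parallel to that theorem, with only the decomposition of $\bigH$ changed while the base scheme stays the midpoint formula. First I would fix the base $U_2$ in \eqref{eqn::mpf} to be the symmetric midpoint (Strang) formula applied to the $2\Lambda-1$ operators of \eqref{eqn::split_HDR}, and write
\begin{align*}
e^{-i\bigH\dt} = \sum_{j=1}^M \alpha_j \big(U_2^{\bigH}(\dt/k_j)\big)^{k_j} + \order{\dt^{2m+1}},
\end{align*}
where $U_2^{\bigH}(\tau)$ denotes this midpoint scheme with step $\tau$. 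Because $\alpha_j,k_j$ are exactly the coefficients that well-condition the time-independent expansion, the truncation order is inherited verbatim at the operator level on the augmented space.

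The crux is the \emph{exact} action of a single base step on a clock-localized state. From the illustrative computation in \secref{sec::HDR::example} (equivalently, \thmref{thm::product_td_v2} specialized to the single-cycle midpoint formula) I would record that
\begin{align*}
U_2^{\bigH}(\tau)\,\ket{t'}\otimes\ket{\psi} = \ket{t'+\tau}\otimes \hdu_2(t'+\tau, t')\,\ket{\psi}
\end{align*}
holds exactly: the alternating $\pm\hat{p}_s$ factors transport the clock while the $H_k(\hat{s})$ factors are sampled at the instantaneous clock position, assembling precisely the midpoint-centred time integrals defining $\hdu_2$. Iterating this $k_j$ times telescopes, the clock advancing by $\dt$ in increments of $\dt/k_j$, so that
\begin{align*}
\big(U_2^{\bigH}(\dt/k_j)\big)^{k_j}\ket{t}\otimes\ket{\psi} = \ket{t+\dt}\otimes \prod_{\ell=k_j-1}^0 \hdu_2\Big(t+\tfrac{(\ell+1)\dt}{k_j},\, t+\tfrac{\ell\dt}{k_j}\Big)\ket{\psi}.
\end{align*}

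To conclude I would contract the clock register: applying the localized projection \eqref{eqn::v1} (or its conjugate \eqref{eqn::v2}) together with $\braket{\one}{t+\dt}=1$ and the linearity of the LCU, the clock drops out, the left-hand side becomes $\exactU{t+\dt}{t}\ket{\psi}$, and the right-hand side becomes the claimed sum with the same $\order{\dt^{2m+1}}$ remainder; since $\ket{\psi}\in\hbt$ is arbitrary, this yields the stated identity.

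The hard part will be the rigour of the remainder rather than any new algebra: $\bigH$ carries the unbounded operator $\hat{p}_s$, so the $\order{\dt^{2m+1}}$ bound of \eqref{eqn::mpf} — whose constant is built from nested commutators and time-derivatives — does not transfer automatically. I would dispatch this exactly as in \secref{sec::pointwise::math}, passing to the delocalized formalism and working on smooth states $\ket{\Psi}\in\augspace$, where $\hat{p}_s=-i\partial_s$ and every power of it entering the MPF error is controlled by the bounded derivatives $\norm{f_k^{(m)}}_{L^\infty(\timedom)}$; \lemref{lem::product_1} then certifies each per-step clock translation, turning the telescoping above into an honest operator identity on $\augspace$ with a genuine $\order{\dt^{2m+1}}$ error.
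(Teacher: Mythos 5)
Your proposal is correct and follows essentially the same route as the paper: the paper proves \thmref{thm::mpf_v1} by applying the time-independent MPF \eqref{eqn::mpf} with a midpoint base to $\bigH$ on the augmented space, telescoping the exact per-step clock action, and contracting via \eqref{eqn::v1} or \eqref{eqn::v2}, and then states that \thmref{thm::mpf_v2} follows identically with the HDR decomposition \eqref{eqn::split_HDR} replacing the pointwise one — precisely your plan, including your (correct) observation that the single-step identity $U_2^{\bigH}(\tau)\ket{t'}\ket{\psi} = \ket{t'+\tau}\otimes\hdu_2(t'+\tau,t')\ket{\psi}$ is the computation already done in \secref{sec::HDR::example}. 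Your closing remark on rigour via the delocalized formalism on $\augspace$ matches the paper's own treatment in \secref{sec::pointwise::math} (noting only that \lemref{lem::product_1} as stated covers the decomposition \eqref{eqn::Ak}, so its HDR analog is the same routine verification).
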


\begin{remark}
The base scheme can be replaced via higher-order time-independent product formulas as shown in \cite{ostmeyer_optimised_2023} (cf. \thmref{thm::ostmeyer}). As the derivations are essentially the same as the above two Theorems, and our main purpose is to demonstrate the effectiveness and versatility of continuous \SH{}'s clock in developing quantum algorithms, we shall not pursue high-order base cases here for simplicity.
\end{remark}

The proof of \thmref{thm::mpf_v2} is the same as Theorem \ref{thm::mpf_v1}. In the following subsection, we will show how to use both localized and delocalized formalism to prove \thmref{thm::mpf_v1}.

\subsection{Proof of \thmref{thm::mpf_v1}}
The proof essentially follows the presentation of \cite{low_well_conditioned_2019,aftab_multi_product_2024} with an application of the formalism discussed in \eqref{eqn::v1} (or equivalently \eqref{eqn::v2})  in the end.
Let us apply the midpoint formula \eqref{eqn::midpoint} for the time-independent Hamiltonian $\bigH$ and obtain $e^{-i \bigH \dt} \approx \bigU_2(\dt)$ where 
\begin{align*}
&\ \bigU_2(\dt) 
:=\ e^{-i \frac{\dt}{2} \hat{p}_s \otimes \id }\prod_{j=1}^{\Lambda} e^{-i \frac{\dt}{2} H_k(\hat{s})} \prod_{j=\Lambda}^{1} e^{-i \frac{\dt}{2} H_k(\hat{s})} e^{-i \frac{\dt}{2} \hat{p}_s \otimes \id}.
\end{align*}
By following the derivation in \cite{aftab_multi_product_2024} or simply by applying the time-independent MPF for augmented space \eqref{eqn::mpf}, we have the following:
\begin{align*}
&\ \sum_{j=1}^{M} \alpha_j \bigU_2\big(\nicefrac{\dt}{k_j}\big)^{k_j} =\ e^{-i \bigH \dt} + \order{\dt^{2m+1}}.
\end{align*}

{\noindent \emph{Localized $G$ state:}}
By acting them on the state $\ket{t}\ket{\phi}$, one knows that
\begin{align*}
\sum_{j=1}^{M} \alpha_j \bigU_2\big(\nicefrac{\dt}{k_j}\big)^{k_j} \ket{t}\ket{\phi} &= e^{-i \bigH \dt} \ket{t}\ket{\phi} + \order{\dt^{2m+1}}.
\end{align*}
By \eqref{eqn::v1} (with $\varphi$ being arbitrary), 
\begin{align*}
&\ \bra{\varphi} \exactU{t+\dt}{t} \ket{\phi}\\
\myeq{\eqref{eqn::v1}}&\ \bra{\one}\bra{\varphi} e^{-i \bigH \dt} \ket{t}\ket{\phi} \\
=&\ \sum_{j=1}^{M} \alpha_j \bra{\one}\bra{\varphi} \bigU_2\big(\nicefrac{\dt}{k_j}\big)^{k_j} \ket{t}\ket{\phi}  + \order{\dt^{2m+1}}\\
=&\ \sum_{j=1}^{M} \alpha_j  \bra{\varphi} \prod_{\ell=k_j-1}^{0} {U}_2(t+\nicefrac{(\ell+1) \dt}{k_j}, t+\nicefrac{\ell \dt}{k_j}) \ket{\phi} + \order{\dt^{2m+1}}.
\end{align*}
As $\varphi$ and $\phi$ are arbitrary, then one easily gets the final result in this theorem.
We remark that to get the last line in the last equation, we used
\begin{align*}
&\ \bra{\one}\bra{\varphi} \bigU_2\big(\nicefrac{\dt}{k_j}\big)^{k_j} \ket{t}\ket{\phi} \\
=&\ \bra{\one}\bra{\varphi} \bigU_2\big(\frac{\dt}{k_j}\big)^{k_j-1} \ket{t+\nicefrac{\dt}{k_j}}\ket{ {U}_2\big(t+\nicefrac{\dt}{k_j}, t\big)\phi}\\
=&\ \cdots \\
=&\ \bra{\one}\ket{t+\dt} \bra{\varphi}\ket{\prod_{\ell=k_j-1}^0 {U}_2(t+\nicefrac{(\ell+1) \dt}{k_j},t+\nicefrac{\ell t}{k_j})\phi}\\
=&\ \bra{\varphi} \prod_{\ell=k_j-1}^{0} {U}_2(t+\nicefrac{(\ell+1) \dt}{k_j}, t+\nicefrac{\ell \dt}{k_j}) \ket{\phi}.
\end{align*}

{\noindent \emph{Delocalized $G$ state}.} By the delocalized version \eqref{eqn::v2}, one has 
\begin{align*}
&\ \bra{\varphi} \exactU{t+\dt}{t} \ket{\phi}\\
\myeq{\eqref{eqn::v2}}&\ \bra{t+\dt}\bra{\varphi} e^{-i \bigH \dt} \ket{\one}\ket{\phi} \\
=&\ \sum_{j=1}^{M} \alpha_j \bra{t+\dt}\bra{\varphi} \bigU_2\big(\nicefrac{\dt}{k_j}\big)^{k_j} \ket{\one}\ket{\phi}  + \order{\dt^{2m+1}}.
\end{align*}
For each term, one has
\begin{align*}
&\ \bra{t+\dt}\bra{\varphi} \bigU_2\big(\nicefrac{\dt}{k_j}\big)^{k_j} \ket{\one}\ket{\phi} \\
=&\ \bra{t+\dt}\bra{\varphi} \bigU_2\big(\nicefrac{\dt}{k_j}\big) \bigU_2\big(\nicefrac{\dt}{k_j}\big)^{k_j-1} \ket{\one}\ket{\phi}\\
=&\ \bra{t+\dt-\nicefrac{\dt}{k_j}}\Big(\bra{\varphi} U_2\big(t+\dt, t+\dt-\nicefrac{\dt}{k_j})\Big) \bigU_2\big(\nicefrac{\dt}{k_j}\big)^{k_j-1} \ket{\one}\ket{\phi}\\ 
=&\ \cdots \\
=&\ \bra{t} \Big(\bra{\varphi} \prod_{\ell=k_j-1}^0 U_2\big(t+\nicefrac{(\ell+1)\dt}{k_j}, t+\nicefrac{\ell \dt}{k_j})\Big) \ket{\one}\ket{\phi}\\
=&\ \bra{\varphi} \prod_{\ell=k_j-1}^0 U_2\big(t+\nicefrac{(\ell+1)\dt}{k_j}, t+\nicefrac{\ell \dt}{k_j}) \ket{\phi}.
\end{align*}
By combining the last two questions, one arrives at the same conclusion as the localized version \eqref{eqn::v1}.

\section{Time-dependent qDrift}
\label{sec::qDrift}

In this section, we will demonstrate that by applying qDrift \cite{campbell_random_2019}, initially developed by Campbell for time-independent Hamiltonian simulation, to the augmented \SH{}'s clock space, we can readily develop time-dependent qDrift algorithms.
We would like to emphasize that similar calculations can also been carried on to randomized permutation \cite{childs_faster_2019}, and partially randomized algorithm \cite{jin_partially_2023,hagan_composite_2023} without theoretical challenge (which we shall not further pursue here for clarity).
 
This section is organized as follows. We will first review the idea of qDrift in \secref{sec::qDrift_indep} for time-independent Hamiltonian simulation. With \SH{}'s clock, we can immediately generalize qDrift to the time-dependent case in \secref{sec::td_qDrift} and in \secref{sec::td_qDrift_v2}. We remark that even though such a generalization can also been conjectured and validated without  \SH{}'s clock, this generic framework, nevertheless, provides \emph{a unifying and systemic approach} for such algorithms built on random quantum circuit. More importantly, estimating the error is no different from the time-independent case. Finally in \secref{sec::td_qDrift_v3}, we will establish the connection between time-dependent qDrift developed using \SH{}'s clock and the continuous qDrift in \cite{berry_time_dependent_2020}. For the setup of \DAS{}, we will show that \enquote{continuous qDrift} proposed in \cite{berry_time_dependent_2020} is {almost} equivalent to applying the time-independent qDrift in the augmented \SH{}'s clock Hamiltonian (see \lemref{lem::c_qDrift_equiv}). 

\subsection{Review of qDrift for the time-independent case}
\label{sec::qDrift_indep}

We first recall the qDrift for the time-independent Hamiltonian with the form given in \eqref{eqn::decompose} and \eqref{eqn::Hk}, where $f_k(s)$ is time-independent (namely, $f_k(s) \equiv f_k$ is simply a constant). \cite{campbell_random_2019} proposed the qDrift algorithm, which rewrites the unitary evolution (in the density matrix formalism) as the expectation of local unitary evolution, which can then be realized via random unitary circuits. More specifically, we consider the following decomposition
\begin{align*}
H = \sum_{k} \lambda_k \Big(\frac{H_k}{\lambda_k}\Big),
\end{align*}
and the qDrift algorithm uses the following approximation:
\begin{align}
\label{eqn::qDrift}
\begin{aligned}
e^{-i \dt H} \rho\ e^{i \dt H} 
=&\ \sum_{k} \lambda_k e^{-i \dt \frac{H_k}{\lambda_k}} \rho e^{i \dt \frac{H_k}{\lambda_k}} + \order{\dt^2},
\end{aligned}
\end{align}
where $\lambda_j$ is an arbitrary non-degenerate discrete probability distribution. The key advantage of this qDrift algorithm comes from a well-chosen probability distribution to balance the magnitude of $f_j$. For instance, when $\norm{\h_k} =\order{1}$, one can choose $\lambda_j := \abs{f_j}/\sum_{k} \abs{f_k}$ so that the final approximation error scales with respect to $\sum_{k} \abs{f_k}$, rather than directly depending on the number of local Hamiltonian terms $\Lambda$.

This idea can be straightforwardly generalized to infinite many terms. Suppose that there is a probability space with event $\omega$ such that the Hamiltonian can be written as the expectation of a collection of \enquote{local Hamiltonians} $H_{\omega}$ as follows:
 \begin{align*}
H = \ee_{\omega} \big[ H_{\omega}\big] \equiv \int\dd \omega\ \lambda(\omega) H_{\omega},
\end{align*}
where $H_{\omega}$ is a Hermitian operator and $\lambda(\omega)$ is a non-degenerate probability distribution on a certain set of indices. Then the above estimate \eqref{eqn::qDrift} can be essentially rewritten as 
\begin{align}
\label{eqn::qDrift_inf}
\underbrace{e^{-i \dt H} \rho\ e^{i \dt H}}_{=:\ \mathcal{U}_{\dt}(\rho)}  = \underbrace{\int \dd \omega\ \lambda(\omega) e^{-i \dt H_\omega}\ \rho\  e^{i \dt H_\omega}}_{=:\  \mathcal{E}_{\lambda,\dt}(\rho)} + \order{\dt^2}.
\end{align}
The detailed error analysis has been investigated in \cite{campbell_random_2019} and more comprehensive analysis in \cite{chen_concentration_2021}, which can be applied to analyze the error in the last equation.

We collect some facts:
\begin{lemma}
\label{lem::qdrift_indep}
Denote the measure $\eta(\omega_1, \omega_2) = \lambda(\omega_1) \big(\lambda(\omega_2) - \delta(\omega_1 - \omega_2)\big)$.
The distance between the above two channels  is bounded above by 
\begin{align*}
&\ \norm{\mathcal{U}_{\dt} - \mathcal{E}_{\lambda,\dt}}_{1} \le \frac{\dt^2}{2} C + \order{\dt^3},
\end{align*}
where 
\begin{align*}
C = \iint \dd\omega_1\dd\omega_2\ \abs{\eta}(\omega_1,\omega_2)\ \norm{\comm\big{H_{\omega_1}}{\comm{H_{\omega_2}}{\cdot}}}_{1}.
\end{align*}
\end{lemma}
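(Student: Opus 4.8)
The plan is to Taylor-expand both channels to second order in $\dt$ and match terms, exactly mirroring the standard time-independent qDrift analysis of \cite{campbell_random_2019,chen_concentration_2021} (which is the whole point of routing through \SH{}'s clock). First I would expand the exact channel via the adjoint-action series,
\begin{align*}
\mathcal{U}_{\dt}(\rho) = \rho - i\dt\, \comm{H}{\rho} - \frac{\dt^2}{2}\comm{H}{\comm{H}{\rho}} + \order{\dt^3},
\end{align*}
and expand each local evolution inside the mixing channel before integrating against $\lambda$,
\begin{align*}
\mathcal{E}_{\lambda,\dt}(\rho) = \rho - i\dt \int \dd\omega\, \lambda(\omega)\comm{H_\omega}{\rho} - \frac{\dt^2}{2}\int \dd\omega\, \lambda(\omega)\comm{H_\omega}{\comm{H_\omega}{\rho}} + \order{\dt^3}.
\end{align*}
The key observation is that the defining relation $H = \ee_\omega[H_\omega] = \int \dd\omega\, \lambda(\omega) H_\omega$ forces the zeroth- and first-order terms to coincide: the identity terms are identical, and linearity of the commutator gives $\int \dd\omega\, \lambda(\omega)\comm{H_\omega}{\rho} = \comm{H}{\rho}$. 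Hence the leading discrepancy is purely $\order{\dt^2}$.

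Next I would write both surviving second-order contributions as a common double integral over $(\omega_1,\omega_2)$. Using $H = \int \lambda H$ in \emph{both} commutator slots, the exact term becomes $\int \dd\omega_1\dd\omega_2\, \lambda(\omega_1)\lambda(\omega_2)\comm{H_{\omega_1}}{\comm{H_{\omega_2}}{\rho}}$, while the mixing term is promoted to a double integral carrying a Dirac delta, $\int \dd\omega_1\dd\omega_2\, \lambda(\omega_1)\delta(\omega_1-\omega_2)\comm{H_{\omega_1}}{\comm{H_{\omega_2}}{\rho}}$. Subtracting produces precisely the kernel $\eta(\omega_1,\omega_2) = \lambda(\omega_1)\big(\lambda(\omega_2) - \delta(\omega_1-\omega_2)\big)$ from the statement, so that
\begin{align*}
\mathcal{U}_{\dt}(\rho) - \mathcal{E}_{\lambda,\dt}(\rho) = -\frac{\dt^2}{2}\int \dd\omega_1\dd\omega_2\, \eta(\omega_1,\omega_2)\,\comm{H_{\omega_1}}{\comm{H_{\omega_2}}{\rho}} + \order{\dt^3}.
\end{align*}

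Finally I would apply the integral triangle inequality in Schatten-$1$ norm, pull the absolute value $\abs{\eta}$ inside, and take the supremum over density matrices $\rho$ to convert the pointwise estimate into the induced super-operator norm $\norm{\cdot}_1$; this delivers exactly $\frac{\dt^2}{2}C + \order{\dt^3}$ with $C$ as defined. The genuine algebra here is light; the main obstacle is making the remainder uniform. Specifically, I would need the $\order{\dt^3}$ terms to be controlled independently of $\rho$ and integrable against $\lambda$, and I would need to justify interchanging the $\dt$-expansion with the $\omega$-integration. Both follow from the finite-dimensionality of $\hbt$ together with a uniform operator-norm bound on the $H_\omega$ (which bounds the iterated-commutator super-operators), via dominated convergence. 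Under the standing assumption that the $\h_k$ have bounded norm this uniformity is immediate, so the only point truly requiring care is verifying that the tail of the Dyson/adjoint series is dominated uniformly on the support of $\lambda$.
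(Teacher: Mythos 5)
Your proposal follows essentially the same route as the paper's own proof: a direct second-order expansion of both channels, cancellation of the zeroth- and first-order terms via $H = \int \dd\omega\,\lambda(\omega)\,H_\omega$, rewriting the two surviving $\dt^2$ contributions as a double integral against the kernel $\eta(\omega_1,\omega_2)$ (with the single integral promoted via a Dirac delta), and a final application of the triangle inequality in the Schatten-$1$ norm. Your closing remarks on uniform control of the $\order{\dt^3}$ remainder and on interchanging the expansion with the $\omega$-integration address a point the paper leaves implicit, but the argument is the same.
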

The proof only involves direct computation and is postponed to \appref{proof::qDrift}.
Given estimates about right nested commutator of $\comm\big{H_{\omega_1}}{\comm{H_{\omega_2}}{\cdot}}$, one could choose the probability measure $\lambda$ such that the right-hand side is minimized.

\subsection{Time-dependent qDrift using \SH{}'s clock (\rom{1})}
\label{sec::td_qDrift}

For the general time-dependent Hamiltonian $H$ as in \eqref{eqn::decompose} and a general discrete probability distribution $\{\lambda_k\}_{k=1}^{\Lambda}$, let us consider the following decomposition 
\begin{align}
\label{eqn::H_qdrift}
\bigH = \sum_{k=1}^{\Lambda} \lambda_k \underbrace{\Big(\hat{p}_s \otimes \id + \frac{1}{\lambda_k} H_k(\hat{s})\Big)}_{=:\wt{H}_k}.
\end{align}
By applying the above time-independent qDrift algorithm \eqref{eqn::qDrift_inf} for the augmented Hamiltonian $\bigH$, one has
\begin{align*}
&\ \  \exactU{t+\dt}{t}\ \rho\ \exactU{t+\dt}{t}^\dagger\\
 \myeq{\eqref{eqn::v2}}&\ \bra{t+\dt}_s e^{-i \bigH \dt}\ \big(\ket{\one}\bra{\one}\otimes \rho\big)\ e^{i \bigH \dt} \ket{t+\dt}_s \\
\myeq{\eqref{eqn::qDrift_inf}}&\ \sum_{k} \lambda_k\ \bra{t+\dt}_s  e^{-i \dt \wt{H}_k} \big(\ket{\one}\bra{\one}\otimes \rho\big)\ e^{i \dt \wt{H}_k} \ket{t+\dt}_s + \order{\dt^2}\\
=&\ \sum_{k}\lambda_k\ \bra{t+\dt}_s e^{-i \dt\big(\hat{p}_s\otimes \id + \frac{1}{\lambda_k} H_k(\hat{s})\big)} \big(\ket{\one}\bra{\one}\otimes \rho\big)\ e^{i \dt\big(\hat{p}_s\otimes \id + \frac{1}{\lambda_k} H_k(\hat{s})\big)}  \ket{t+\dt}_s + \order{\dt^2}\\
\myeq{\eqref{eqn::v2}}&\ \sum_{k}\lambda_k\ \eT{-i \int_{t}^{t+\dt} \lambda_k^{-1} H_k(s)\dd s}\ \rho\ \eT{i \int_{t}^{t+\dt} \lambda_k^{-1} H_k(s)\dd s} + \order{\dt^2}.
\end{align*}
 
This suggests the approximation
\begin{align}
\exactU{t+\dt}{t}\ \rho\ \exactU{t+\dt}{t}^\dagger \approx \mathcal{E}_{\lambda, t, \dt}(\rho),
\end{align}
where the latter quantum channel is defined as 
\begin{align}
\label{eqn::cqDrift_v1}
\begin{aligned}
\mathcal{E}_{\lambda, t, \dt}(\rho) 
:=&\ \sum_{k}\lambda_k\ \eT{-i \int_{t}^{t+\dt} \lambda_k^{-1} H_k(s)\dd s}\ \rho\ \eT{i \int_{t}^{t+\dt} \lambda_k^{-1} H_k(s)\dd s}.
\end{aligned}
\end{align}
This immediately generalizes the time-independent qDrift \eqref{eqn::qDrift} to the time-dependent Hamiltonian simulation.
One can immediately check that the error scales like the following:
\begin{lemma}
\label{lem::qdrift_err_1}
Suppose the time-dependent Hamiltonian has the form in \eqref{eqn::decompose} and \eqref{eqn::Hk}.
The error between approximated state and the exact state is 
\begin{align*}
 \norm{\exactU{t+\dt}{t}\ \rho\ \exactU{t+\dt}{t}^\dagger -  \mathcal{E}_{\lambda,t,\dt}(\rho)}_{1} 
\le&\ \frac{C\dt^2}{2} + \order{\dt^3},
\end{align*}
where
\begin{align*}
C = \sum_{k_1, k_2} \abs{\lambda_{k_2} - \delta_{k_1, k_2}} \frac{\abs{f_{k_1}(t+\dt) f_{k_2}(t+\dt)}}{ \lambda_{k_2}} \norm\Big{\comm\Big{\h_{k_1}}{\comm{\h_{k_2}}{\rho}}}_{1}.
\end{align*}
\end{lemma}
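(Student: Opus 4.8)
\emph{Plan.} The error is most cleanly bounded by Taylor-expanding both channels to second order in $\dt$, in direct analogy with the time-independent computation behind \lemref{lem::qdrift_indep}; the \SH{}-clock derivation producing \eqref{eqn::cqDrift_v1} has already singled out the two channels to compare, so only the error bookkeeping remains. I would deliberately avoid applying \lemref{lem::qdrift_indep} verbatim to the augmented Hamiltonian $\bigH$ in \eqref{eqn::H_qdrift}: although tempting, the induced super-operator norm $\norm{\comm{\wt{H}_{k_1}}{\comm{\wt{H}_{k_2}}{\cdot}}}_1$ is a supremum over \emph{all} augmented density matrices and is not finite, because each $\wt{H}_k$ carries the unbounded $\hat{p}_s\otimes\id$. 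Working directly on $\hbt$ after the projection \eqref{eqn::v2} removes this difficulty, since the clock mode---and with it every occurrence of $\hat{p}_s$---has already been integrated out into the time dependence of the $H_k$.

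\emph{Key steps.} Writing $G:=\int_t^{t+\dt}H(s)\,\dd s=\sum_k G_k$ with $G_k:=\int_t^{t+\dt}H_k(s)\,\dd s$, I would first Dyson-expand the exact propagator to get
\begin{align*}
\exactU{t+\dt}{t}\,\rho\,\exactU{t+\dt}{t}^\dagger = \rho - i\comm{G}{\rho} - \tfrac12\comm{G}{\comm{G}{\rho}} + \order{\dt^3}.
\end{align*}
Next I would expand each term of \eqref{eqn::cqDrift_v1}; because $H_k(s)=f_k(s)\h_k$ has no internal ordering ambiguity, the ordered double integral factors exactly as $\tfrac12 G_k^2$, yielding
\begin{align*}
\expT{-i\lambda_k^{-1}G_k}\,\rho\,\expT{i\lambda_k^{-1}G_k} = \rho - i\lambda_k^{-1}\comm{G_k}{\rho} - \tfrac{\lambda_k^{-2}}{2}\comm{G_k}{\comm{G_k}{\rho}} + \order{\dt^3}.
\end{align*}
Summing against $\lambda_k$, the first-order terms cancel \emph{identically} since $\sum_k\comm{G_k}{\rho}=\comm{G}{\rho}$, leaving
\begin{align*}
&\exactU{t+\dt}{t}\,\rho\,\exactU{t+\dt}{t}^\dagger - \mathcal{E}_{\lambda,t,\dt}(\rho)\\
&\qquad = -\tfrac12\sum_{k_1,k_2}\lambda_{k_2}^{-1}\big(\lambda_{k_2}-\delta_{k_1,k_2}\big)\comm{G_{k_1}}{\comm{G_{k_2}}{\rho}} + \order{\dt^3}.
\end{align*}
Finally I would take the trace norm, apply the triangle inequality, and substitute $G_k=\dt\,f_k(t+\dt)\,\h_k+\order{\dt^2}$ so that $\comm{G_{k_1}}{\comm{G_{k_2}}{\rho}}=\dt^2 f_{k_1}(t+\dt)f_{k_2}(t+\dt)\comm{\h_{k_1}}{\comm{\h_{k_2}}{\rho}}+\order{\dt^3}$, which reproduces the constant $C$ in the statement.

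\emph{Main obstacle.} There is no deep obstacle; the work reduces to the direct computation of \lemref{lem::qdrift_indep} transported to the time-dependent setting, and the two points needing care are both routine given the standing Assumption. The first is confirming that the second-order time-ordered integrals collapse---exactly to $\tfrac12 G_k^2$ for a single commuting term, and up to $\order{\dt^3}$ to $\tfrac12 G^2$ for the exact propagator, where time ordering only enters at third order---so that the first-order terms cancel and the nested commutator is isolated. The second is the uniform control of the $\order{\dt^3}$ remainders together with the harmless replacement of $f_k(t)$ by $f_k(t+\dt)$ inside the $\dt^2$ coefficient, both of which follow from the boundedness of $\h_k$ and the bounded derivatives of $f_k$. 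As a consistency check, in the (formal) augmented-space reading the weight $\lambda_{k_1}$ carried by $\abs{\eta}$ in \lemref{lem::qdrift_indep} cancels against the rescaling $\lambda_{k_1}^{-1}$ inside $\wt{H}_{k_1}$, leaving exactly the $\lambda_{k_2}^{-1}$ that appears in $C$; this agreement corroborates the physical-space computation.
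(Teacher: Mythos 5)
Your proposal is correct, and it reaches the stated bound by a genuinely different route than the paper. The paper stays in the augmented clock space: it applies the exact expansion \eqref{eqn::error_qDrift} from the proof of \lemref{lem::qdrift_indep} (the identity, not the super-operator norm bound, so your concern about the unboundedness of $\norm{\comm{\wt{H}_{k_1}}{\comm{\wt{H}_{k_2}}{\cdot}}}_1$ is well placed against a verbatim application of the lemma but does not actually arise there) to $\bigH=\sum_k\lambda_k\wt{H}_k$ acting on the specific state $\ket{\one}\bra{\one}\otimes\rho$, computes the nested commutator $\comm{\wt{H}_{k_1}}{\comm{\wt{H}_{k_2}}{\ket{\one}\bra{\one}\otimes\rho}}$ explicitly, and then projects with $\bra{t+\dt}_s\,\cdot\,\ket{t+\dt}_s$. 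In that computation the clock momentum $\hat{p}_s$ differentiating $f_{k_2}(\hat{s})$ produces an extra cross term $-i f_{k_2}'(t+\dt)\lambda_{k_2}^{-1}\comm{\h_{k_2}}{\rho}$, which the paper kills using $\sum_{k_1}\lambda_{k_1}\big(\lambda_{k_2}-\delta_{k_1,k_2}\big)=0$. You instead integrate out the clock first and do all bookkeeping on $\hbt$: a Dyson expansion of the exact propagator, the exact collapse of each single-term ordered exponential to $e^{-i\lambda_k^{-1}G_k}$, and the immediate first-order cancellation $\sum_k\comm{G_k}{\rho}=\comm{G}{\rho}$, so the derivative term never even appears. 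Your one nontrivial check---that time ordering perturbs $\tfrac12 G^2$ only at $\order{\dt^3}$---is correctly justified, since $\comm{H(s_1)}{H(s_2)}=\comm{H(s_1)}{H(s_2)-H(s_1)}=\order{\abs{s_1-s_2}}$ under the smoothness assumption, and the triangular domain contributes another $\order{\dt^2}$. Both computations land on the same pre-norm identity
\begin{align*}
-\tfrac{\dt^2}{2}\sum_{k_1,k_2}\frac{\lambda_{k_2}-\delta_{k_1,k_2}}{\lambda_{k_2}}\,f_{k_1}(t+\dt)f_{k_2}(t+\dt)\,\comm\big{\h_{k_1}}{\comm{\h_{k_2}}{\rho}}+\order{\dt^3},
\end{align*}
and hence the same constant $C$; your consistency check on the $\lambda_{k_1}$ cancellation matches the paper's algebra exactly. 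What the paper's route buys is the thematic point of the work---the time-dependent error analysis literally reuses the time-independent qDrift identity through the \SH{} clock---while your route buys rigor at lower cost: no unbounded $\hat{p}_s$, no delta-state projections, and remainder control reduced to boundedness of $\h_k$ and of the derivatives of $f_k$, all on the physical space.
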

The detailed proof is postponed to \appref{proof::qDrift}.
With estimates of the upper bound of $C$, one can adaptively choose the time-step $\dt$, as well as finding the optimal probability distribution on-the-fly.

\subsection{Time-dependent qDrift using \SH{}'s clock (\rom{2})}
\label{sec::td_qDrift_v2}

Suppose that we consider the following probability space with events $\omega = (k, r) \in \{1,2,\cdots,\Lambda\}\times [0,1]$ following a non-degenerate discrete and continuous hybrid distribution $\mu$.
Let us consider a more complex decomposition,
\begin{align}
\label{eqn::H_omega}
\begin{aligned}
\bigH =&\ \sum_{k=1}^{\Lambda}  \int_{0}^{1}\dd r\ \mu(k,r) \Big(\hat{p}_s\otimes \id + \frac{H_k(\hat{s})}{\mu(k,r)}\Big) \\
=&\ \ee_{\omega} \underbrace{\big(\hat{p}_s\otimes \id + H_\omega(\hat{s})\big)}_{=: \wt{H}_\omega},
\end{aligned}
\end{align}
where $H_{\omega}(\hat{s}) = \frac{H_k(\hat{s})}{\mu(k,r)}$ for this example. When $\mu(k,r) \equiv \lambda_k$ is independent of $r$, then this decomposition simply reduces to the above \eqref{eqn::H_qdrift}.
By applying the qDrift algorithm \eqref{eqn::qDrift_inf} to the Hamiltonian decomposition \eqref{eqn::H_omega} in the augmented space, we have the following approximations 
\begin{align*}
&\ \ \exactU{t+\dt}{t}\ \rho\ \exactU{t+\dt}{t}^\dagger\\
 \myeq{\eqref{eqn::v2}}&\ \bra{t+\dt}_s e^{-i \bigH \dt}\ \big(\ket{\one}\bra{\one}\otimes \rho\big)\ e^{i \bigH \dt} \ket{t+\dt}_s \\
 \myeq{\eqref{eqn::qDrift_inf}}&\ \ee_{\omega} \bra{t+\dt}_s e^{-i \dt \wt{H}_\omega}\ \big(\ket{\one}\bra{\one}\otimes \rho\big)\ e^{i \dt \wt{H}_\omega} \ket{t+\dt}_s + \order{\dt^2}\\
=&\  \ee_{\omega} \bra{t+\dt}_s e^{-i \dt \big(\hat{p}_s\otimes \id + H_{\omega}(\hat{s})\big)}\ \big(\ket{\one}\bra{\one}\otimes \rho\big)\ e^{i \dt \big(\hat{p}_s \otimes \id + {H}_\omega(\hat{s})\big)} \ket{t+\dt}_s + \order{\dt^2}\\
=& \sum_{k} \int_{0}^{1} \dd r\ \mu(k,r)\ \eT{-i \frac{\int_{t}^{t+\dt} H_k(s)\dd s}{\mu(k,r)}}\ \rho\ \eT{i \frac{\int_{t}^{t+\dt} H_k(s)\dd s}{\mu(k,r)}} + \order{\dt^2}.
\end{align*}
This suggests the approximation
\begin{align*}
\mathcal{U}(t+\dt,t)\ \rho\ \mathcal{U}(t+\dt,t)^\dagger \approx \mathcal{E}_{\mu, t, \dt}(\rho),
\end{align*}
where
\begin{align}
\label{eqn::cqDrift_v2}
\begin{aligned}
\mathcal{E}_{\mu, t, \dt}(\rho) 
=&\  \sum_{k} \int_{0}^{1} \dd r\ \mu(k,r)\ \eT{-i \frac{\int_{t}^{t+\dt} H_k(s)\dd s}{\mu(k,r)}}\ \rho\ \eT{i \frac{\int_{t}^{t+\dt} H_k(s)\dd s}{\mu(k,r)}}.
\end{aligned}
\end{align}

\begin{lemma}
\label{lem::qdrift_err_2}
Suppose the time-dependent Hamiltonian has the form in \eqref{eqn::decompose} and \eqref{eqn::Hk}.
The error between approximated state and the exact state is 
\begin{align*}
 \norm{\exactU{t+\dt}{t} \rho\ \exactU{t+\dt}{t}^\dagger -  \mathcal{E}_{\mu,t,\dt}(\rho)}_{1} 
\le & \frac{C\dt^2}{2} + \order{\dt^3},
\end{align*}
where
\begin{align*}
& C = \iint\dd\omega_1\dd\omega_2\  \left(
\abs{\mu(\omega_2) - \delta(\omega_1 - \omega_2)} 
 \frac{\abs{f_{k_1}(t+\dt) f_{k_2}(t+\dt)}}{\mu(\omega_2)} 
 \norm{\comm\big{\h_{k_1}}{\comm{\h_{k_2}}{\cdot}}}_{1}
\right),
 \end{align*}
 and $\omega_1 = (k_1, r_1)$, $\omega_2 = (k_2, r_2)$, $\iint \dd \omega = \sum_{k} \int_{0}^{1}\dd r$.
\end{lemma}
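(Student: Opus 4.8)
The plan is to lift the estimate to the Sambe--Howland clock space, where the time-ordered evolution $\exactU{t+\dt}{t}$ is replaced by the genuine time-\emph{independent} exponential $e^{-i\dt\bigH}$; this is exactly what lets the time-independent qDrift analysis of \lemref{lem::qdrift_indep} apply verbatim, with the awkward time-ordering absorbed into an ordinary exponential. Recall the decomposition \eqref{eqn::H_omega}, $\bigH=\ee_\omega[\wt{H}_\omega]$ with $\wt{H}_\omega=\hat{p}_s\otimes\id+H_\omega(\hat{s})$ and $H_\omega(\hat{s})=H_k(\hat{s})/\mu(\omega)$, and that the delocalized formalism \eqref{eqn::v2} simultaneously reconstructs the exact channel $\exactU{t+\dt}{t}(\cdot)\exactU{t+\dt}{t}^\dagger$ and the approximate channel $\mathcal{E}_{\mu,t,\dt}$ of \eqref{eqn::cqDrift_v2} by sandwiching $e^{\mp i\dt\bigH}(\ket{\one}\bra{\one}\otimes\rho)$ between $\bra{t+\dt}_s$ and $\ket{t+\dt}_s$. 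First I would therefore reduce the claim to controlling, after this projection, the $\dt^2$-order discrepancy between the exact channel $\mathcal{U}_\dt$ and the qDrift channel $\mathcal{E}_{\mu,\dt}$ in the augmented space (as in \eqref{eqn::qDrift_inf} with $\lambda=\mu$).

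Rather than invoking the final bound of \lemref{lem::qdrift_indep} directly (which has already passed to $|\eta|$ and so discarded the sign structure), I would retain the signed second-order term of the discrepancy,
\[
\frac{\dt^2}{2}\iint\dd\omega_1\,\dd\omega_2\ \eta(\omega_1,\omega_2)\,\comm{\wt{H}_{\omega_1}}{\comm{\wt{H}_{\omega_2}}{\cdot}},
\]
with the signed measure $\eta(\omega_1,\omega_2)=\mu(\omega_1)\big(\mu(\omega_2)-\delta(\omega_1-\omega_2)\big)$. The decisive observation is that $\eta$ has vanishing marginals, $\int\eta\,\dd\omega_1=\int\eta\,\dd\omega_2=0$, whereas the transport term $\hat{p}_s\otimes\id$ inside $\wt{H}_\omega$ carries no $\omega$-dependence. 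Expanding $\wt{H}_\omega=\hat{p}_s\otimes\id+H_\omega(\hat{s})$ inside the nested commutator yields four pieces, and every piece that fails to depend on \emph{both} $\omega_1$ and $\omega_2$ integrates to zero against $\eta$; since only the $H_\omega(\hat{s})$ factors are $\omega$-dependent, the sole surviving piece is $\comm{H_{\omega_1}(\hat{s})}{\comm{H_{\omega_2}(\hat{s})}{\cdot}}$. This is the crucial structural point: the unbounded $\hat{p}_s$ drops out of the leading error term entirely, so no momentum operator can enter the final constant.

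It then remains to project the surviving term. Since $H_\omega(\hat{s})=\frac{f_k(\hat{s})}{\mu(\omega)}\h_k$ acts as multiplication in the clock variable tensored with $\h_k$ on $\hbt$, sandwiching $\comm{H_{\omega_1}(\hat{s})}{\comm{H_{\omega_2}(\hat{s})}{\ket{\one}\bra{\one}\otimes\rho}}$ between $\bra{t+\dt}_s$ and $\ket{t+\dt}_s$ evaluates each clock factor at $s=t+\dt$ (using $\braket{t+\dt}{\one}=1$), contributing the scalar $\frac{f_{k_1}(t+\dt)\,f_{k_2}(t+\dt)}{\mu(\omega_1)\mu(\omega_2)}$ to all four terms of the double commutator, which reassemble into $\comm{\h_{k_1}}{\comm{\h_{k_2}}{\rho}}$. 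Taking the trace norm, applying the triangle inequality, and using $|\eta(\omega_1,\omega_2)|=\mu(\omega_1)\,|\mu(\omega_2)-\delta(\omega_1-\omega_2)|$, the prefactor $\mu(\omega_1)$ cancels and reproduces the claimed $C$; the special case $\mu(k,r)\equiv\lambda_k$ recovers \lemref{lem::qdrift_err_1}. I expect the main obstacle to be not this algebra but the rigorous justification of the projection step together with control of the $\order{\dt^3}$ remainder, since it involves the improper states $\ket{\one}$ and $\ket{t+\dt}$ and the unbounded $\hat{p}_s$; this is handled exactly as in the delocalized second proof of \thmref{thm::td_pointwise}, where $\ket{\one}\in\augspace$ is smooth and $\bra{t+\dt}_s$ merely reads off the value of a smooth function, so the Taylor remainder stays bounded under the smoothness hypothesis \eqref{eqn::Hk} on the $f_k$.
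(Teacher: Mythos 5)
Your proposal is correct and takes essentially the same route as the paper's proof: lift to the clock space, apply the signed second-order expansion behind \lemref{lem::qdrift_indep} (Eq.~\eqref{eqn::error_qDrift}) with the decomposition \eqref{eqn::H_omega}, project via the delocalized formalism \eqref{eqn::v2} so each clock factor evaluates at $s=t+\dt$, and finish with the triangle inequality. The only difference is organizational: you use the vanishing marginals of $\eta(\omega_1,\omega_2)=\mu(\omega_1)\big(\mu(\omega_2)-\delta(\omega_1-\omega_2)\big)$ to discard every $\hat{p}_s$-containing piece of the nested commutator \emph{before} projecting, whereas the paper first computes the commutator explicitly (producing the $-i f_{k_2}'(t+\dt)\comm{\h_{k_2}}{\rho}$ terms) and then cancels them by the same marginal identity $\int \mu(\omega_1)\big(\mu(\omega_2)-\delta(\omega_1-\omega_2)\big)\dd\omega_1 = 0$ --- a slightly cleaner reordering of the identical argument, not a different one.
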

See \appref{proof::qDrift} for detailed proof.

\subsection{Connections to {continuous qDrift}}
\label{sec::td_qDrift_v3}

The above noisy channels \eqref{eqn::cqDrift_v1} and \eqref{eqn::cqDrift_v2} can be regarded as reasonable ways to generalize the original time-independent qDrift to time-dependent case. Another possibility, known as the \enquote{continuous qDrift}, was proposed in \cite[Eq.~(77)]{berry_time_dependent_2020}:
\begin{align}
\label{eqn::c-qDrift}
\begin{aligned}
 \mathcal{E}^{\text{(c-qDrift)}}_{q,t,\dt}(\rho) 
:=&\ \sum_{k} \int_{0}^{1} \dd \tau\ q(k,\tau)\  e^{-i \dt \frac{H_{k}(t+\tau \dt)}{q(k,\tau)}}\ \rho\ e^{i \dt \frac{H_{k}(t+\tau \dt)}{q(k,\tau)}},
\end{aligned}
\end{align}
where $q$ is a  probability measure on the same discrete-continuous hybrid probability space.
As a remark, for consistency of notations, we have slightly changed notations in \cite[Eq.~(77)]{berry_time_dependent_2020} and rescaled the \enquote{probability distribution $p_{\ell}(\tau)$} in their equation from $[0,\dt]$ to the range $\tau \in [0,1]$ above.

In general, by comparing the formalism \eqref{eqn::cqDrift_v2} and \eqref{eqn::c-qDrift}, we can observe that the continuous qDrift can be obtained by performing the following approximation for any $\tau\in [0,1]$
\begin{align*}
\eT{-i \frac{\int_{t}^{t+\dt} H_k(s)\dd s}{\mu(k,r)}} = e^{-i \dt \frac{H_k(t+\tau\dt)}{\mu(k,\tau)}} + \order{\dt^2}.
\end{align*}
This is essentially  Lie's first-order approximation using the time grid point $t+\tau\dt$.

We are particularly interested in time-dependent Hamiltonian simulation for \DAS{} with the form in \eqref{eqn::decompose} and \eqref{eqn::Hk}.
In this setup, despite that \eqref{eqn::cqDrift_v2} uses the integration-based query and \eqref{eqn::c-qDrift} uses the pointwise query, the formalism \eqref{eqn::cqDrift_v2} and \eqref{eqn::c-qDrift} are  {equivalent} after rescaling.

\begin{lemma}
\label{lem::c_qDrift_equiv}
Suppose the time-dependent Hamiltonian has the form in \eqref{eqn::decompose} and \eqref{eqn::Hk}. We further assume that $f_k(s) > 0$ for any $k$ and time $s\in [t,t+\dt]$.
Then for any arbitrary probability measure $\mu$, there exists a measure $q$ such that 
\begin{align*}
\mathcal{E}_{\mu, t, \dt} =\mathcal{E}^{\text{(c-qDrift)}}_{q,t,\dt}.
\end{align*}
Namely, applying qDrift for \SH{}'s clock using decomposition \eqref{eqn::H_omega} leads into continuous qDrift in \cite{berry_time_dependent_2020}.
\end{lemma}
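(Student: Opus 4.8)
The plan is to exploit the fact that in the \DAS{} setting the time-ordering becomes vacuous, so that both channels reduce to mixtures of the \emph{same} family of conjugations and are related by a single change of variables in the integration variable.

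First I would record the simplification coming from $H_k(s) = f_k(s)\h_k$. Since $H_k(s)$ and $H_k(s')$ are both scalar multiples of $\h_k$, they commute, so the time-ordered exponential appearing in \eqref{eqn::cqDrift_v2} collapses to an ordinary exponential:
\begin{align*}
\eT{-i \frac{\int_{t}^{t+\dt} H_k(s)\dd s}{\mu(k,r)}} = e^{-i \frac{F_k}{\mu(k,r)} \h_k}, \qquad F_k := \int_{t}^{t+\dt} f_k(s)\ \dd s .
\end{align*}
Hence $\mathcal{E}_{\mu, t, \dt}$ is a mixture, over $(k,r)$ with weight $\mu(k,r)$, of the conjugations $\rho\mapsto e^{-i\theta\h_k}\rho\,e^{i\theta\h_k}$ at angle $\theta = F_k/\mu(k,r)$; likewise $\mathcal{E}^{\text{(c-qDrift)}}_{q,t,\dt}$ is such a mixture, over $(k,\tau)$ with weight $q(k,\tau)$, at angle $\theta = \dt f_k(t+\tau\dt)/q(k,\tau)$. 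The whole problem is thus to reparametrize the second mixture into the first.

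Second, for each fixed $k$ I would introduce the map
\begin{align*}
\phi_k(\tau) := \frac{1}{F_k}\int_{t}^{t+\tau\dt} f_k(s)\ \dd s, \qquad \tau\in[0,1] .
\end{align*}
Using the hypothesis $f_k>0$, the derivative $\phi_k'(\tau) = \dt f_k(t+\tau\dt)/F_k$ is strictly positive, and $\phi_k(0)=0$, $\phi_k(1)=1$, so $\phi_k$ is a smooth strictly increasing bijection of $[0,1]$. I would then \emph{define} the candidate measure $q$ by the pushforward relation
\begin{align*}
q(k,\tau) := \mu\big(k,\phi_k(\tau)\big)\,\phi_k'(\tau),
\end{align*}
which is nonnegative and integrates to one, since $\sum_k\int_0^1 q(k,\tau)\,\dd\tau = \sum_k\int_0^1 \mu(k,r)\,\dd r = 1$ after the substitution $r=\phi_k(\tau)$; thus $q$ is a genuine probability measure on the same hybrid space.

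Finally I would verify $\mathcal{E}^{\text{(c-qDrift)}}_{q,t,\dt}=\mathcal{E}_{\mu,t,\dt}$ by carrying out the substitution $r=\phi_k(\tau)$ in \eqref{eqn::c-qDrift}. The weight transforms as $q(k,\tau)\,\dd\tau = \mu(k,r)\,\dd r$, while the conjugation angle becomes
\begin{align*}
\frac{\dt f_k(t+\tau\dt)}{q(k,\tau)} = \frac{\dt f_k(t+\tau\dt)}{\mu(k,r)\,\phi_k'(\tau)} = \frac{F_k}{\mu(k,r)},
\end{align*}
by the defining identity $\phi_k'(\tau)=\dt f_k(t+\tau\dt)/F_k$; both factors therefore match those of $\mathcal{E}_{\mu,t,\dt}$ term by term, yielding the claim. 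The main obstacle is concentrated in the second step: it is precisely the positivity $f_k>0$ that makes each $\phi_k$ invertible and hence the change of variables legitimate, since without it the cumulative integral need not be monotone. A minor point to handle is the locus where $\mu(k,r)=0$: there the angle diverges but the weight vanishes, so under the usual qDrift convention $\mu\,e^{-i\theta/\mu}(\cdots)\to 0$ the corresponding contributions drop from both sides; alternatively one simply restricts to non-degenerate $\mu$, as is standard for this construction.
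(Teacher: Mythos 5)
Your proposal is correct and is essentially the paper's own argument: your map $\phi_k$ is exactly the inverse of the paper's reparametrization $\tau_k(r) = F_k^{-1}\big(F_k(1)\,r\big)$, and your pushforward measure $q(k,\tau) = \mu\big(k,\phi_k(\tau)\big)\phi_k'(\tau)$ coincides with the paper's definition of $q$ in \eqref{eqn::q}, so the change-of-variables verification is the same computation run in the opposite direction. The only additions are cosmetic: you make explicit the collapse of the time-ordered exponential and the normalization of $q$, which the paper treats implicitly, and your remark on degenerate $\mu$ is moot since the paper assumes $\mu$ non-degenerate in its qDrift setup.
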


See \appref{proof::qDrift} for detailed proof.

\section{Taylor's expansion based algorithms}
\label{sec::taylor}

This section aims to illustrate the compatibility of \SH{}'s clock for LCU-Taylor based algorithms. We shall first discuss existing Dyson's expansion based methods, which might be hard to implement in the near future due to the black-box type of oracle query. As the Hamiltonian $\bigH$ is essentially time-independent, we consider a direct application of Taylor's expansion to \SH{}'s clock and present the resulting quantum algorithms based on LCU models. The latter one is not the same as applying  Dyson's expansion. We aim to illustrate that LCU-Taylor based quantum algorithms are compatible with \SH{}'s clock, which further confirms the versatility of this general approach in developing quantum algorithms for time-dependent dynamics.

\subsection{Related works}

Assuming that the Hamiltonian has the form of linear combination of unitaries (e.g., assume that $\h_j$ in \eqref{eqn::Hk} are unitary) and $H$ is time-independent, \cite{berry_simulating_2015} provided a simple method based on Taylor's expansion to achieve Hamiltonian simulation with $\order{\frac{T\log(\nicefrac{T}{\eps})}{\log \log(\nicefrac{T}{\eps})}}$ queries to controlled-$\h_{j}$ gates and other two-qubit gates, where $T$ is the length of simulation time.

The time-dependent generalization has been studied in \cite{kieferova_simulating_2019,low_hamiltonian_2019_paper_b} using Dyson's series, which is the generalization of Taylor's expansion by considering the time-ordering of Hamiltonians. If one discretizes the Hamiltonian $H(s)$ at grid points $t_j = j\frac{t}{M}$ ($0\le j\le M-1$) and for time-dependent Hamiltonian simulation, one has to use all information from $H(s)$ to simulate the dynamics, which motivated \cite{low_hamiltonian_2019_paper_b} to use oracle queries to HAM, defined as follows
\begin{align}
\label{eqn::HAM}
\bra{0}_{a} \text{HAM} \ket{0}_a = \frac{1}{\alpha}\sum_{m=0}^{M-1} \ketbra{m} \otimes H\left(\frac{m t}{M}\right) ,
\end{align}
where $\alpha := \sup_{s\in [0,T]} \norm{H(s)}$; see Definition 2 in \cite{low_hamiltonian_2019_paper_b}. For a given error tolerance $\eps$, it has been shown in e.g., \cite{kieferova_simulating_2019,low_hamiltonian_2019_paper_b} that one requires $M = \Theta\big(\nicefrac{1}{\eps}\big)$ \cite[Lemma 5]{low_hamiltonian_2019_paper_b}. \cite[Theorem 3]{low_hamiltonian_2019_paper_b} showed that the required number of queries to HAM logarithmically depends on the error $\eps$.  However, the less discussed issue is that during implementation, the gate cost for HAM may scale as $\order{\nicefrac{1}{\eps}}$ in the worst case. In \cite{kieferova_simulating_2019}, they showed similar results and the actual implementation of oracle queries {appear to be} not explicitly explained. Such a framework has been further generalized to develop first-order \cite{an_time_dependent_2022} and second-order  \cite{fang_time_dependent_2024} algorithms based on the Magnus expansion.

\subsection{The second-order Taylor-based algorithm for \SH{}'s clock}

In this part, we will discuss quantum algorithms based on \SH{}'s clock using second-order Taylor's expansion, which complements the aforementioned Dyson's expansion \cite{kieferova_simulating_2019,low_hamiltonian_2019_paper_b} and the recent works using Magnus' expansion \cite{an_time_dependent_2022,fang_time_dependent_2024}. 
Its generalization to high-order cases should merely be a routine task, if one only aims at finding a finite-order algorithm. As mentioned above, we present such an algorithm for illustration only to provide a more complete picture of how \SH{}'s clock can help with developing quantum algorithms. As a remark, we shall assume that $\h_j$ is unitary throughout this subsection.

By applying  Taylor's expansion directly to \SH{}'s clock, we have the following
\begin{align*}
e^{-i \dt \bigH} =& \id - i \dt \Big(\hat{p}_s\otimes \id  + \sum_{j=1}^{\Lambda} f_j(\hat{s}) \h_j\Big) \\
&\ + \frac{-\dt^2}{2} \Big(\hat{p}_s\otimes \id  + \sum_{j=1}^{\Lambda} f_j(\hat{s}) \h_j\Big)^2 + \order{\dt^3}.
\end{align*}
By applying the delocalized version \eqref{eqn::v2},
\begin{align}
\label{eqn::taylor_LCU}
\begin{aligned}
&\ \exactU{t+\dt}{t} =\ \bra{t+\dt} e^{-i \bigH \dt} \ket{\one} \\
= &\ \id - i\dt \Big(\sum_{j=1}^{\Lambda} f_j(t+\dt) \h_j\Big) \\
&\qquad + \frac{-\dt^2}{2} \Big(-i \sum_{j=1}^{\Lambda} f_j'(t+\dt) \h_j  + \sum_{j,k} f_j(t+\dt) f_k(t+\dt) \h_j \h_k\Big)  + \order{\dt^3}\\
=&\ \id + \sum_{j=1}^{\Lambda} \big(\dt f_j(t+\dt) -\frac{\dt^2}{2} f_j'(t+\dt)\big) (-i \h_j) \\
&\qquad + \sum_{j,k=1}^{\Lambda} \frac{\dt^2 f_j(t+\dt) f_k(t+\dt)}{2} (-i \h_j) (-i \h_k)  + \order{\dt^3} \\
=:&\ \mathcal{U}_{\text{Taylor}}(t+\dt,t)+ \order{\dt^3}.
\end{aligned}
\end{align}
Since $\h_j$ are Hermitian and unitary, it is clear that $-i \h_j$ are also unitary, and the above equation approximates the unitary evolution $\exactU{t+\dt}{t}$ in the form of LCU. This expansion can also been obtained by Dyson's expansion, with further approximation of the integration at the time $t+\dt$. 

For simplicity, let us consider the case where $f_j \ge 0$ and $\dt$ is small enough such that $\dt f_j(t+\dt) - \frac{\dt^2}{2} f_j'(t+\dt) \ge 0$. Then the success probability for constructing the gate using LCU directly \cite{berry_simulating_2015} is $\nicefrac{1}{\mathsf{s}^2}$ where $\mathsf{s}$ is simply the summation of coefficients
\begin{align*}
\mathsf{s} &=  1 + \sum_{j=1}^{\Lambda} \big(\dt f_j(t+\dt) -\frac{\dt^2}{2} f_j'(t+\dt)\big)  + \sum_{j,k=1}^{\Lambda} \frac{\dt^2 f_j(t+\dt) f_k(t+\dt)}{2} \\
&= e^{\sum_{j=1}^{\Lambda} \int_{t}^{t+\dt} f_j(s)\ \dd s} + \order{\dt^3}. 
\end{align*}

{\noindent \emph{The challenge of vanishing probability.}} Suppose we consider the time-interval $[\mathsf{T}_i, \mathsf{T}_f]$, by discretizing the time into $N$ subintervals $\mathsf{T}_i = t_0 <  t_1 < \cdots t_N = \mathsf{T}_f$, and along multiple iterations, one may prepare the state proportional to 
\begin{align*}
&\ \ \mathcal{U}_{\text{Taylor}}(t_N, t_{N-1}) \mathcal{U}_{\text{Taylor}}(t_{N-1}, t_{N-2}) \cdots \mathcal{U}_{\text{Taylor}}(t_1, t_{0}) \ket{\psi} \\
\approx &\ \ \exactU{\mathsf{T}_f}{\mathsf{T}_i} \ket{\psi} + \order{\dt^2},
\end{align*}
with success probability asymptotically
\begin{align*}
\exp\Big(-2\sum_{j=1}^{\Lambda} \int_{\mathsf{T}_i}^{\mathsf{T}_f} f_j(s)\ \dd s\Big).
\end{align*}
This probability vanishes exponentially fast with respect to $\mathsf{T}_f - \mathsf{T}_i$ which is also well discussed in various literatures e.g., \cite{fang_time-marching_2023}.\\

{\noindent \emph{Application of oblivious amplitude amplification}.}
Suppose that we consider the simulation on the time interval $[0,T]$ where $T$ is large (or equivalently the magnitude of $f_j$ is large while $T=\order{1}$), we can always find some special time grid pints $\mathsf{T}_1 \equiv 0,\ \mathsf{T}_1, \cdots, \mathsf{T}_{\beta}$ such that 
\begin{align*}
e^{-2\sum_{j=1}^{\Lambda} \int_{\mathsf{T}_{k}}^{\mathsf{T}_{k+1}} f_j(s)\ \dd s} = \nicefrac{1}{4}.
\end{align*}
For simplicity, let us suppose that $\mathsf{T}_{\beta} = T$.
Then with at most three walk operators (constructed via controlled-$\h_j^{(\dagger)}$ gates) and some reflection operators, one can boost the probability of the above LCU-Taylor scheme to one for the time interval $[\mathsf{T}_k, \mathsf{T}_{k+1}]$ where $0\le k < \beta-1$. With the relation
\begin{align*}
\sum_{j=1}^{\Lambda} \int_{0}^{T} f_j(s)\ \dd s = \beta \ln(2),
\end{align*}
the number of times needed to apply oblivious amplitude amplification is $\order{T}$.

To obtain a heuristic estimate of the resources, within each sub-interval $[\mathsf{T}_k, \mathsf{T}_{k+1}]$, the error from the LCU-Taylor should be confined to $\order{\eps/T}$ and therefore, the time-interval $\dt = \order{\sqrt{\eps/T}}$. For each construction of LCU in \eqref{eqn::taylor_LCU}, the application of controlled-$\h_j$ gates (as well as controlled-$\h_j^\dagger$) scales at most like $\order{\Lambda^2}$ and therefore, the total gate complexity scales like $\order{\Lambda^2 \frac{T}{\dt}} = \order{\Lambda^2 T \sqrt{T/\eps}}$. If one keeps  Taylor's expansion up to the order $p$ (without further optimization), then the scaling is expected to be at most $\order{T^{1+\frac{1}{p}} \Lambda^p \eps^{-1/p}}$  in a way similar to product formulas.

The above arguments suggest that one can avoid the vanishing probability problem and there is no need to encode the whole time-dependent Hamiltonian into a single large unitary matrix as in \eqref{eqn::HAM}. The above discussion illustrates how time-dependent Hamiltonian simulation based on the LCU-Taylor \emph{could be achieved} using \SH{}'s clock and with a designing principle similar to time-independent LCU-Taylor \cite{berry_simulating_2015}.

\section{Numerical experiments}
\label{sec::example}

In this section, we will validate the practical efficacy of product-based formula derived based on \SH{}'s clock in handling the time-dependence. We consider two \DAS{} problems  (adiabatic Grover's algorithm and adiabatic Google's PageRank algorithm). An additional experiment will be  presented in \appref{Ising} for time-dependent quantum Ising chain.
Numerical experiments suggest that the generalized high-order HDR algorithms can overall achieve the best performance or near best performance; for some parameter regions, it can achieve significantly improved performance than a Magnus-based algorithm recently proposed in \cite{ikeda_minimum_2023}. 
We use four different weights \enquote{FRS}, \enquote{FRO}, \enquote{Suz4}, and \enquote{Ost4} from time-independent product formulas (all being fourth-order schemes) \cite{ostmeyer_optimised_2023}; see \cite{ostmeyer_optimised_2023} or \appref{app::weights} for details. 
For these examples, we consider $H(t) = f_1(t) \h_1 + f_2(t) \h_2$ for $t\in [0,1]$, and assume access to $e^{-i \alpha \h_k}$ for any $\alpha\in \Real$ and $k=1,2$. The gate count refers to the number of such unitaries that we need. We remark that this is still a fair comparison since all algorithms considered below need to use such resources; a further decomposition into one-qubit and two-qubit gates are not considered herein for simplicity, as we only aim at comparing the relative performance of algorithms in handling the time-dependence.
As mentioned earlier in the introduction, broader benchmark tests are undoubtedly necessary for more {thorough} validation. However, the existing experiments have already demonstrated the potential of Sambe-Howland’s clock in aiding the development of efficient quantum algorithms.

\begin{figure}
\captionsetup[subfigure]{labelformat=empty}
\subfloat[]{
\includegraphics[width=0.95\textwidth]{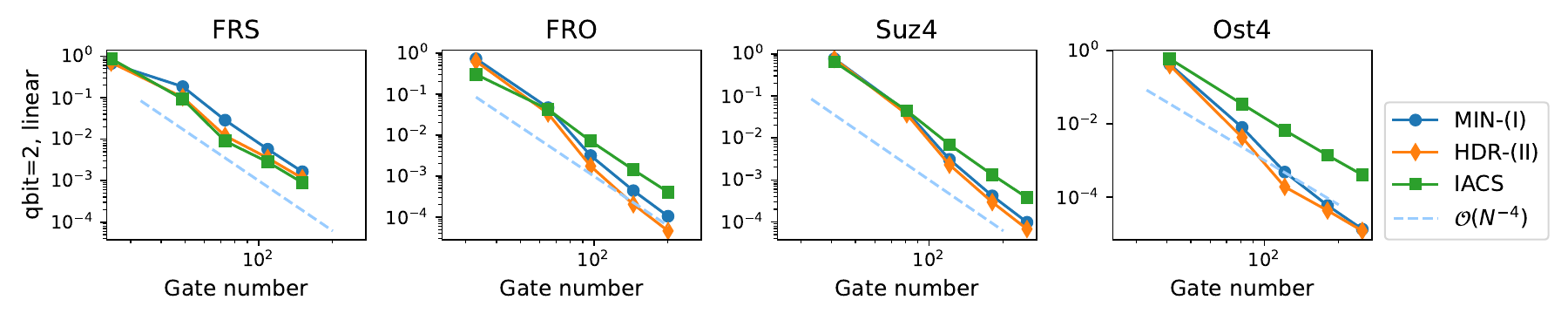}
} \\
\subfloat[]{
\includegraphics[width=0.95\textwidth]{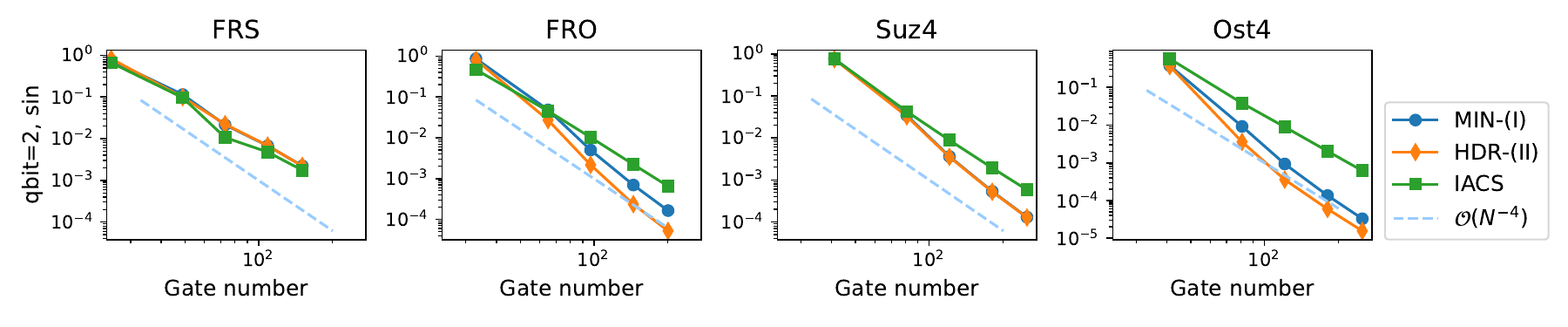}
} \\
\subfloat[]{
\includegraphics[width=0.95\textwidth]{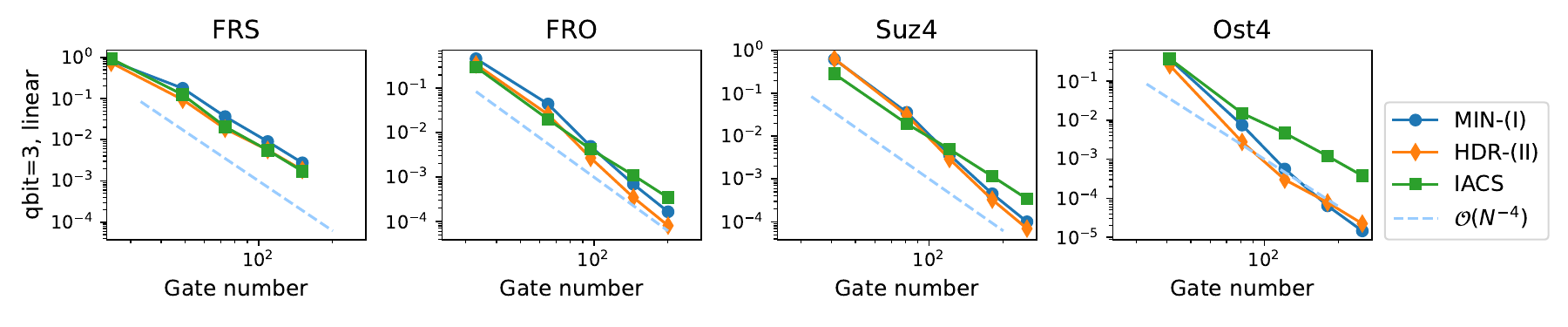}
} \\
\subfloat[]{
\includegraphics[width=0.95\textwidth]{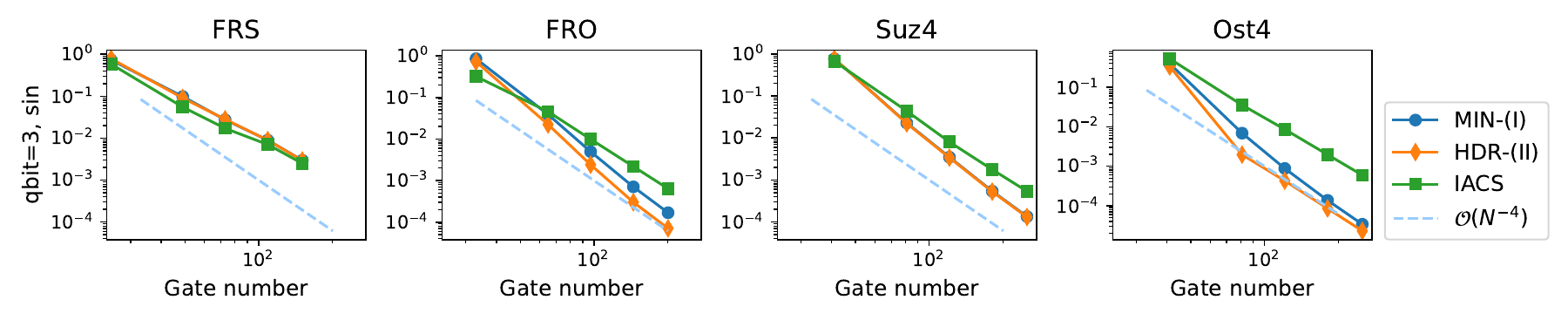}
}
\caption{Averaged simulation error (in trace distance) for  Grover's search algorithm with respect to the gate counts $N$ for various time-dependent schemes. \enquote{MIN-(I)} refers to the  scheme in \eqref{eqn::product_td} with the ordering $H_1, \hat{p}, H_2$. HDR refers to the scheme in \eqref{eqn::product_td_v2}. IACS refers to the scheme in \cite{ikeda_minimum_2023}; see \eqref{eqn::FRS_Magnus}.  The blue dashed line indicates the scaling $\order{N^{-4}}$ and is the same for all subplots.}
\label{fig::Grover}
\end{figure}

\begin{figure}
\captionsetup[subfigure]{labelformat=empty}
\subfloat[]{
\includegraphics[width=0.95\textwidth]{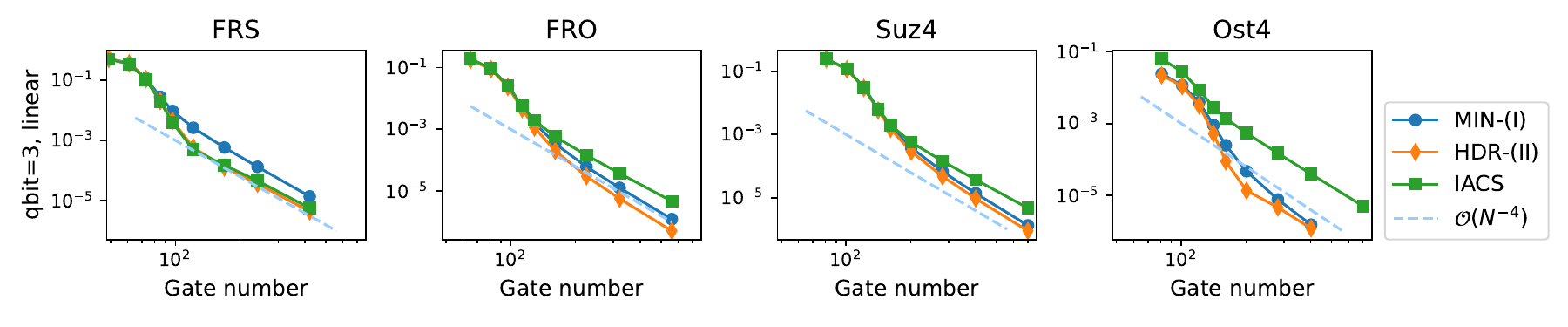}
}\\
\subfloat[]{
\includegraphics[width=0.95\textwidth]{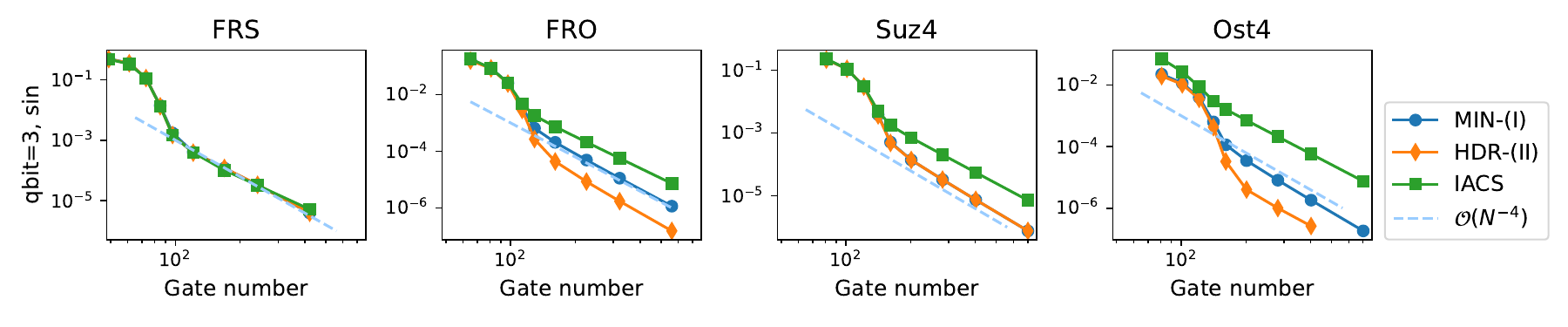}
}\\
\subfloat[]{
\includegraphics[width=0.95\textwidth]{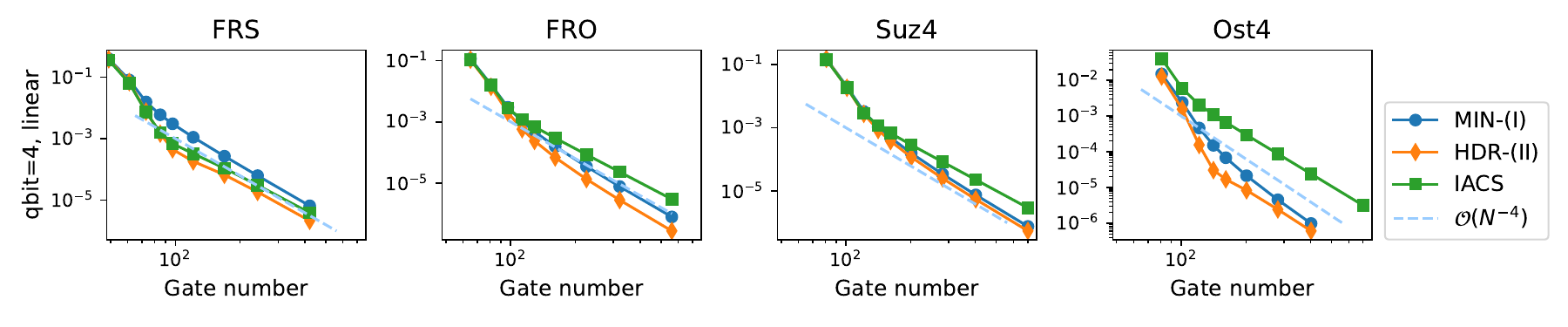}
}\\
\subfloat[]{
\includegraphics[width=0.95\textwidth]{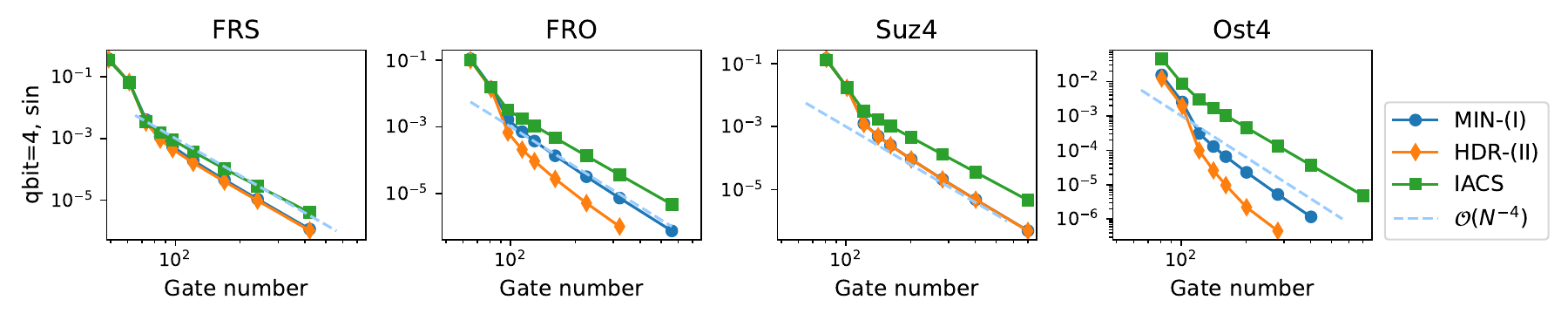}
}
\caption{Averaged simulation error in trace distance for adiabatic Google's PageRank algorithm with respect to the gate counts $N$ for various time-dependent schemes. \enquote{MIN-(I)} refers to the scheme \eqref{eqn::product_td} with the ordering $H_1, \hat{p}, H_2$. HDR refers to the scheme in \eqref{eqn::product_td_v2}. IACS refers to the scheme in \cite{ikeda_minimum_2023}; see also \eqref{eqn::FRS_Magnus}. The blue dashed line indicates the scaling $\order{N^{-4}}$ and is the same for all subplots.}
\label{fig::page_rank}
\end{figure}

\subsection{Adiabatic Grover's algorithm}
\label{sec::grover}

Suppose that we want to prepare the $n$-qubit state $\ket{\varphi} = \ket{\varphi_1}\ket{\varphi_2}\cdots\ket{\varphi_n}$ via the following Hamiltonians:
\begin{align*}
\h_1 = \id - \ket{+}^{\otimes n}\bra{+}^{\otimes n}, \qquad \h_2 = \id - \ket{\varphi}\bra{\varphi},
\end{align*}
where we choose $\ket{\varphi}$ as a product of $n$ local pure states in numerical experiments and $\ket{+}$ is the maximally entangled state $\ket{+} = \frac{1}{\sqrt{2}} \big(\ket{0} + \ket{1}\big)$.
We consider the following adiabatic path:
\begin{align*}
H(t) = T \Big(\big(1-f(t)\big) \h_1 + f(t) \h_2\Big), \qquad t\in [0,1],
\end{align*}
where we choose $T = 40$ in experiments, and the schedule is simply chosen as 
\begin{align}
\label{eqn::ft}
\begin{aligned}
f(t) =\left\{ \begin{aligned}t, & \qquad \text{ (linear)}; \\
\sin\big(\frac{\pi}{2} t\big), & \qquad \text{ (sin)}.\\
\end{aligned}\right.
\end{aligned}
\end{align}

The averaged simulation error in trace distance for  Grover's search algorithm (without considering the adiabatic approximation error) is visualized in \figref{fig::Grover}. The error has been averaged over $17, 16, 13, 5$ randomly generated target states $\ket{\varphi}$ such that $\abs{\bra{\varphi(1)}\ket{\varphi}}^2 \ge 99\%$, where $\varphi(1)$ is the exact state generated by the digital adiabatic simulation. As can be seen in \figref{fig::Grover}, the IACS (a Magnus-motivated product formula) may perform better if one uses a less efficient weight scheme like FRS. However, it does not improve much using a more efficient weight scheme. This can be possibly explained as follows: since the total error of Magnus-based product formula comes from both the Magnus expansion and the additional application of the product formula, a more efficient weight scheme can provide little help with reducing the error of Magnus' expansion itself. However, this is likely not an issue for a pure product-based formula (using \SH{}'s clock); with optimized time-independent product formula, one can generally expect a more efficient time-dependent product formula (as demonstrated in this example). We can observe that the HDR algorithm generally provides the best performance in most parameter regions, and it can offer roughly an order of magnitude improvement compared with IACS scheme using \enquote{Ost4} weights.

\subsection{Adiabatic Google's PageRank algorithm}

The second example is the adiabatic Google's PageRank algorithm.
Suppose we have a directed graph of nodes (e.g., web pages) with adjacency matrix $A$ (whose dimension is $2^n$ for simplicity), the transition probability for the random walk on the graph is defined as 
\begin{align*}
P(i,j) = \left\{
\begin{aligned}
\nicefrac{1}{d(i)}, &\ \  \text{ if } d(i) > 0\ \text{ and }  (i,j) \text{ is an edge};\\
0, &\ \  \text{ if } d(i) > 0\ \text{ and }  (i,j) \text{ is not an edge};\\
\nicefrac{1}{2^n}, &\ \  \text{ if } d(i)  = 0;
\end{aligned}\right.
\end{align*}
where $d(i)$ is the out degree for the node $i = 0, 1, \cdots, 2^{n}-1$.
The Google matrix $G$ is defined as $
G = \alpha P^{T} + (1-\alpha) E
$
where $\alpha = 0.85$ typically and $E$ is a matrix filled with elements ${1}/{2^n}$ \cite{garnerone_adiabatic_2012,albash_adiabatic_2018}. The stationary distribution of $G$, denoted as $\ket{\varphi}$, satisfies $G \ket{\varphi} = \ket{\varphi}$, and it characterizes the importance of these nodes. Similar to previous examples, it can be solved adiabatically, e.g., using the following two Hamiltonians:
\begin{align*}
\h_1 = \id - \ket{+}^{\otimes n} \bra{+}^{\otimes n}, \qquad \h_2 = (\id - G)^\dagger (\id - G).
\end{align*}

We consider the same schedule as in Grover's case with time schedule chosen as in \eqref{eqn::ft} and time scale $T=40$ for experiments, and  $n = 3, 4$ (namely, $8$ nodes and $16$ nodes respectively). The simulation error for various cases (without adiabatic approximation error) is summarized in \figref{fig::page_rank}. The error has been averaged over $15$ randomly generated graphs; we have validated the high fidelity for all such cases $\abs{\bra{\varphi(1)}\ket{\varphi}}^2 \ge 99\%$, where $\ket{\varphi}$ is the ground state of $\h_2$ (namely, the page-rank vector) and $\varphi(1)$ is the approximated state from the adiabatic simulation.
In \figref{fig::page_rank}, one can observe that the generalized HDR algorithm consistently has comparable performance compared with IACS scheme, and it outperforms significantly when equipped with a very efficient time-independent scheme \enquote{Ost4}. Across a variety of precision-level and different weights (from time-independent schemes), the 4th-order HDR algorithm \eqref{eqn::product_td_v2} seems to be the most reliable and efficient one in general.

\section{Summary}

{In this work, we have used \enquote{\SH{}'s continuous clock} to develop quantum algorithms for analog quantum computing and digital quantum computing, which complements recent advances using Sambe's space in \cite{mizuta_optimal_2023,mizuta_optimal_multiperiodic_2023,mizuta_explicit_2024}.}
In particular, for digital quantum computing, we have demonstrated  that \SH{}'s clock offers a way to unify the approach by Suzuki and the approach by \HDR{}.
Moreover, this extra continuous clock helps to resolve  several open questions in literature: (i) the question of developing high-order schemes for \HDR{}'s approach, (ii) the recently discussed problem of finding minimum gate implementation, and (iii) a conjecture in deriving multi-product formulas. We have also discussed how this overarching framework can be used for randomized quantum algorithms (qDrift in particular), and LCU-Taylor based algorithms.
Numerically, our experiments also  demonstrate the efficiency of newly developed high-order HDR algorithms for \DAS{}, which is at least comparable as some Magnus-based algorithms.

\section*{Acknowledgment}
The authors  are supported by NSFC grant No. 12341104, the Shanghai Jiao Tong University 2030 Initiative, the Shanghai Science and Technology Innovation
 Action Plan (24LZ1401200), and the Fundamental Research Funds for the Central Universities. YC is sponsored by Shanghai Pujiang Program (23PJ1404600).
SJ was
 also partially supported by the NSFC grant No. 12426637. NL also acknowledges funding from the Science and Technology Commission of Shanghai Municipality (STCSM) grant no. 24LZ1401200
 (21JC1402900) and the  NSFC grant No.12471411.

\bibliography{SH_Clock_AQC_Ref.bib}

\appendix

\section{Additional proofs}

\subsection{Proof of Theorem \ref{thm::mpf_analog}}
\label{proof::thm::mpf_analog}

The fidelity of $\rho_\omega$ with respect to the target state $\ket{\psi(t)} = \exactU{t}{0} \ket{\psi(0)}$ is given as 
\begin{align*}
\bra{\psi(t)} \rho_\omega(t) \ket{\psi(t)} &= \int\ \dd s\ \abs{G(s-t)}^2\  \bra{\psi(t)} \exactU{s}{s-t} \ketbra{\psi(0)} \exactU{s}{s-t}^{\dagger} \ket{\psi(t)}\\
&= \int\ \dd s\ \abs{G(s-t)}^2\ \abs\big{\bra{\psi(0)} \exactU{0}{t} \exactU{s}{s-t} \ket{\psi(0)}}^2\\
&= \int\ \dd s\ \abs{G(s)}^2\ \underbrace{\abs\big{ \bra{\psi(0)} \exactU{0}{t} \exactU{s+t}{s} \ket{\psi(0)}}^2}_{=:\theta(s)}.
\end{align*}
By Taylor's expansion of $\theta$ near $s\approx 0$, 
\begin{align*}
\bra{\psi(t)} \rho_\omega(t) \ket{\psi(t)} =& \int\ \dd s\ \abs{G(s)}^2 \sum_{\ell=0}^{\infty} \frac{\theta^{(\ell)}(0)}{\ell!} s^\ell \\
=& \sum_{\ell=0}^{\infty} \frac{\theta^{(\ell)}(0)}{\ell!} \big(\int\ \dd s\ \abs{G(s)}^2 s^\ell\Big).
\end{align*}
Since we choose $\abs{G}^2$ to be a Gaussian $\mathcal{N}(0,\omega^2)$, we have 
\begin{align*}
\bra{\psi(t)} \rho_\omega(t) \ket{\psi(t)} =& \sum_{\ell=0,2,4,\cdots} \frac{\theta^{(\ell)}(0)}{\ell!} \frac{2^{\ell/2}\Gamma(\nicefrac{\ell+1}{2})}{\sqrt{\pi}} \omega^\ell \\
=& 1 + \sum_{\ell=1,2,\cdots} \frac{\theta^{(2\ell)}(0)}{(2\ell)!} \frac{2^{\ell}\Gamma(\ell+\nicefrac{1}{2})}{\sqrt{\pi}} \omega^{2\ell}.
\end{align*}
The case $\theta^{(0)}(0) = 1$ is easy to verify by definition.
By choosing various $\alpha_j$ and $k_j$ that satisfies \eqref{eqn::mpf_order}, we have
\begin{align*}
\bra{\psi(t)} \sum_{j=1}^{M}\ \alpha_j \rho_{\omega/k_j}(t)\ \ket{\psi(t)} =& \sum_{j=1}^{M} \alpha_j \Big(1 + \sum_{\ell=1,2,\cdots} \frac{\theta^{(2\ell)}(0)}{(2\ell)!} \frac{2^{\ell}\Gamma(\ell+\nicefrac{1}{2})}{\sqrt{\pi}} \big(\frac{\omega}{k_j}\big)^{2\ell}\Big) \\
=& 1 + \sum_{\ell=1,2,\cdots} \frac{1}{(2\ell)!} \frac{2^{\ell}\Gamma(\ell+\nicefrac{1}{2})}{\sqrt{\pi}} \omega^{2\ell}\  \sum_{j=1}^{M} \alpha_j \big(\frac{1}{k_j}\big)^{2\ell} \theta^{(2\ell)}(0) \\
=& 1 + \frac{\theta^{(2m)}(0)}{(2 m)!} \frac{2^{m}\Gamma(m+\nicefrac{1}{2})}{\sqrt{\pi}} \omega^{2 m}  \sum_{j=1}^{M} \alpha_j \big(\frac{1}{k_j}\big)^{2 m}\  + \order{\omega^{2m+2}} \\
=& 1 - C_m \omega^{2m} + \order{\omega^{2m+2}},
\end{align*}
where 
\begin{align}
\label{eqn::Cm}
C_m = -\frac{2^{m}\Gamma(m+\nicefrac{1}{2})}{(2 m)!\sqrt{\pi}} \big(\sum_{j=1}^{M} \alpha_j \big(\frac{1}{k_j}\big)^{2 m}\big)\ \theta^{(2\ell)}(0) > 0.
\end{align}
Since $\rho_{\omega/k_j}(t)$ is a density matrix, it holds  that $\bra{\psi(t)} \sum_{j=1}^{M} \alpha_j \rho_{\omega/k_j}(t)\ket{\psi(t)}\le 1$ is always true and thus $C_m \ge 0$.
Then
\begin{align*}
&\ \abs\Big{\ \sum_{j=1}^{M} \alpha_j \tr\big(\widehat{O} \rho_{\omega/k_j}(t)\big) - \tr(\widehat{O} \ket{\psi(t)}\bra{\psi(t)})\ } \\
\le&\ \norm\bigg{\ \sum_{j=1}^{M} \alpha_j \rho_{\omega/k_j}(t) - \ket{\psi(t)}\bra{\psi(t)}\ }_1  \\
\le&\ 2 \sqrt{1 - \bra{\psi(t)} \sum_{j=1}^{M} \alpha_j \rho_{\omega/k_j}(t) \ket{\psi(t)} } \\
=&\ 2 \sqrt{C_m}\ \omega^{m} + \order{\omega^{m+2}}.
\end{align*}

\subsection{Proof of  Theorem \ref{thm::omega_analog}}
\label{proof::aqc}

Recall from the error estimates in \cite[Lemma 11]{time_dilation_2023} that 
\begin{align*}
\norm\Big{\ \ketbra{\psi(1)} - \tr_s\big(\ketbra{\Psi(t)}\big)\ }_{\tr} \le \sqrt{C}\omega.
\end{align*}
By the triangle inequality of trace norm, one needs 
\begin{align}
\label{eqn::c_omega_epsilon}
\sqrt{C} \omega \le \sqrt{\epsilon}.
\end{align}
By \cite[Lemma 11]{time_dilation_2023}, the constant $C$ is
\begin{align*}
C &= \expval{H^2(1)} - 2 \text{Re}\Big(\bra{\psi(1)} H(1) \exactU{t}{0} H(0) \ket{\psi(0)}\Big) + \expval{H^2(0)} - \Big(\expval{H(1)} - \expval{H(0)}\Big)^2\\
&= \expval{H^2(1)} - 2 T \lambda_1 \expval{H(1)} + T^2 \lambda_1^2 - \expval{H(1)}^2 + 2 T \lambda_1 \expval{H(1)} - T^2\lambda_1^2\\
&= \expval{H^2(1)} - \expval{H(1)}^2,
\end{align*}
where $\expval{O(t)} := \bra{\psi(t)} O(t) \ket{\psi(t)}$ for any time-dependent observable $O$ and time $t\in [0,1]$. To get the above, we have used the fact that the state $\psi(0)$ is prepared as a ground state of $\h_1$ (with $\h_1 \ket{\psi(0)} = \lambda_1 \ket{\psi(0)}$ and thus $H(0) \ket{\psi(0)} = T \lambda_1 \ket{\psi(0)}$.

Suppose that $\ket{\psi(1)} = \alpha \ket{\exact} + \beta \ket{\exactorth}$ where $\ket{\exactorth}$ is perpendicular to $\ket{\exact}$. By the error assumption in \eqref{eqn::err_aqc}, and normalization of $\ket{\psi(1)}$, 
\begin{align*}
\alpha = 1 - \order{\eps}, \qquad \beta = \order{\sqrt{\eps}}.
\end{align*}
With such a decomposition, one can compute that 
\begin{align*}
 \expval{H^2(1)}  &=  T^2 \bra{\alpha \exact + \beta \exactorth} \h_2^2 \ket{\alpha \exact + \beta \exactorth} \\
 &= T^2\big(\abs{\alpha}^2 \lambda_2^2 + \abs{\beta}^2 \bra{\exactorth} \h_2^2 \ket{\exactorth}\big);\\
  \expval{H(1)} &= T \bra{\alpha \exact + \beta \exactorth} \h_2 \ket{\alpha \exact + \beta \exactorth} \\
  &= T\big(\abs{\alpha}^2 \lambda_2 + \abs{\beta}^2 \bra{\exactorth} \h_2 \ket{\exactorth}\big),
\end{align*}
where $\lambda_2$ is the ground energy of $\h_2$ and $\h_2 \ket{\exact} = \lambda_2 \ket{\exact}$.
Hence,
\begin{align*}
 & \frac{1}{T^2}\big(\expval{H^2(1)}  -   \expval{H(1)}^2\big) \\
 =& (\abs{\alpha}^2 -\abs{\alpha}^4)\lambda_2^2 + \abs{\beta}^2 \bra{\exactorth} \h_2^2 \ket{\exactorth} - \abs{\beta}^4 \bra{\exactorth} \h_2 \ket{\exactorth}^2 - 2 \abs{\alpha}^2 \abs{\beta}^2 \lambda_2 \bra{\exactorth} \h_2 \ket{\exactorth}\\
 =& \order{\eps \lambda_2^2 + \eps \bra{\exactorth} \h_2^2 \ket{\exactorth} + \eps^2 \bra{\exactorth} \h_2 \ket{\exactorth}^2 + \eps \lambda_2 \bra{\exactorth} \h_2 \ket{\exactorth}}\\
=& \order{\eps (\lambda_2^2 + \norm{\h_2}^2) + \eps^2 \norm{\h_2}^2}\\
=& \order{\eps \norm{\h_2}^2}.
\end{align*}
To maintain the same order of error, it is sufficient to pick 
\begin{align*}
\sqrt{C} \omega = \order{T \sqrt{\eps} \norm{\h_2} \omega} \le \sqrt{\eps},
\end{align*}
which leads to \eqref{eqn::omega}.

\subsection{Proof of \lemref{lem::product_1}}
\label{subsec::lem::product_1}

We list without proof some facts, which can be easily validated by definition:
\begin{lemma}
        If the quantum state $\ket{\Psi}$ comes from the augmented space $\augspace$, then for any $s'\in \timedom$, $\tau\in \Real$, any index $k$ and $\varphi\in \hbt$, we have
        \begin{align}
        \label{eqn::v2_basic}
        \begin{aligned}
        \bra{s'} \bra{\varphi} e^{-i \tau H_k(\hat{s})} \ket{\Psi} &= \bra{\varphi} e^{-i \tau H_k(s')}\ket{\Psi(s',\cdot)} = \bra{s'}\bra{\varphi} \id \otimes e^{-i \tau H_k(s')} \ket{\Psi}, \\
        \bra{s'} \bra{\varphi} e^{-i  \tau \hat{p}_s\otimes \id} \ket{\Psi} &= \bra{s'-\tau}\bra{\varphi} \ket{\Psi} \equiv \bra{\varphi}\ket{\Psi(s'-\tau,\cdot)}.
        \end{aligned}
        \end{align}
\end{lemma}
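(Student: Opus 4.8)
The plan is to treat the two displayed identities separately, since each isolates one of the two building blocks of the \SH{} generator $\bigH = \hat{p}_s\otimes\id + H(\hat{s})$. Both are elementary, and the only thing that needs care is \emph{well-definedness}: because $\ket{\Psi}\in\augspace$ is a genuinely smooth function of the clock variable with bounded derivatives of every order, all the operations below make sense even though $\hat{p}_s$ is unbounded.

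For the first identity I would exploit that $H_k(\hat{s}) = f_k(\hat{s})\otimes\h_k$ is a function of the clock position operator alone, hence diagonal in the $s$-representation. Acting on a smooth fibered state, $[H_k(\hat{s})\ket{\Psi}](s,\cdot) = f_k(s)\,\h_k\,\Psi(s,\cdot) = H_k(s)\Psi(s,\cdot)$, so $H_k(\hat{s})$ acts fiberwise over each $s$. Iterating, its $n$-th power acts as $H_k(s)^n$ on the fiber at $s$; since $\h_k$ is bounded on the finite-dimensional $\hbt$, the exponential series converges fiberwise to the bounded operator $e^{-i\tau H_k(s)}$. Thus $[e^{-i\tau H_k(\hat{s})}\ket{\Psi}](s,\cdot) = e^{-i\tau H_k(s)}\Psi(s,\cdot)$, and projecting onto $\bra{s'}$ substitutes $s\mapsto s'$ to give $e^{-i\tau H_k(s')}\Psi(s',\cdot)$; contracting with $\bra{\varphi}$ yields the middle expression. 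The right-hand equality is then immediate, since $\id\otimes e^{-i\tau H_k(s')}$ is by construction the operator whose $\bra{s'}$-fiber is $e^{-i\tau H_k(s')}$.

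For the second identity, the cleanest route is the transport/characteristics argument already used for \eqref{eqn::Psi}, rather than a formal Taylor expansion of $e^{-i\tau\hat{p}_s}$, whose convergence is not guaranteed on $\augspace$ (where the derivative bounds may grow with the order). Writing $V(\tau) := e^{-i\tau\hat{p}_s\otimes\id}$ and $u(\tau,s) := [V(\tau)\ket{\Psi}](s,\cdot)$, the one-parameter group satisfies $\partial_\tau u = -i\hat{p}_s u = -\partial_s u$ with $u(0,\cdot) = \Psi$, since $\hat{p}_s = -i\partial_s$. This is a pure transport equation whose unique smooth solution is $u(\tau,s) = \Psi(s-\tau,\cdot)$. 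Projecting with $\bra{s'}\bra{\varphi}$ then returns $\braket{\varphi}{\Psi(s'-\tau,\cdot)} = \bra{s'-\tau}\bra{\varphi}\ket{\Psi}$, as claimed.

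The main point worth emphasizing is that the only genuine obstacle is well-definedness, not any nontrivial computation: the unbounded operator $\hat{p}_s$ and its exponential are legitimized precisely by the restriction to the smooth space $\augspace$, and the transport formulation sidesteps the convergence question for the translation group entirely. This is exactly the mathematical concern flagged in \secref{sec::pointwise::math}, and it is what allows these two identities to be invoked freely throughout the delocalized derivations.
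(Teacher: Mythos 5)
Your proof is correct. Note that the paper itself supplies no argument for this lemma --- it is introduced with ``we list without proof some facts, which can be easily validated by definition'' --- so your write-up is precisely the omitted verification: $H_k(\hat{s})=f_k(\hat{s})\otimes\h_k$ is diagonal in the $s$-representation and acts fiberwise, with the exponential series converging on each fiber because $\h_k$ is bounded on the finite-dimensional $\hbt$; and $e^{-i\tau\hat{p}_s\otimes\id}$ is the translation group, which you characterize through the transport equation $\partial_\tau u=-\partial_s u$ rather than a Taylor series. That last choice is a genuine (if small) improvement in care over ``validated by definition'': since $\augspace$ only demands each derivative be bounded, not uniformly in the order, the series $\sum_k(-i\tau\hat{p}_s)^k/k!$ need not converge on non-analytic elements of $\augspace$, whereas the group/characteristics characterization (the same device the paper uses for \eqref{eqn::Psi}) is unconditional; this dovetails with the mathematical-rigor discussion in \secref{sec::pointwise::math}. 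One cosmetic remark: translation preserves both smoothness and the uniform derivative bounds defining $\augspace$ (on the torus or on $\Real$), so the group genuinely acts on $\augspace$ --- worth one line if you want the well-definedness claim fully airtight.
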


Next we shall prove the first equation in Lemma \ref{lem::product_1} (also copied below)
\begin{align*}
 &\ \bra{s'} \bra{\varphi}\prod_{k=1}^{\Lambda+1} e^{A_k \theta \dt}\ket{\Psi}\ =\ \bra{s'-\theta\dt} \bra{\varphi} \Big(\id \otimes U_F(s', s' - \theta\dt)\Big) \ket{\Psi},
\end{align*}
and the second one can be similarly validated.
\begin{align*}
&\ \bra{s'} \bra{\varphi}\prod_{k=1}^{\Lambda+1} e^{A_k \theta \dt}\ket{\Psi}\\
=&\ \bra{s'}\bra{\varphi} \prod_{k=1}^{\Lambda'} e^{-i \theta \dt H_k(\hat{s})} e^{-i \theta\dt \hat{p}_s\otimes \id } \prod_{k=\Lambda'+1}^{\Lambda} e^{-i \theta \dt H_k(\hat{s})}\ket{\Psi}\\
\myeq{\eqref{eqn::v2_basic}} &\ \bra{s'}  \big(\bra{\varphi} \prod_{k=1}^{\Lambda'} e^{-i \theta\dt H_k(s')}\big) e^{-i \theta\dt \hat{p}_s\otimes \id } \prod_{k=\Lambda'+1}^{\Lambda} e^{-i \theta \dt H_k(\hat{s})}\ket{\Psi}\\
\myeq{\eqref{eqn::v2_basic}} &\ \bra{s' - \theta\dt}  \big(\bra{\varphi} \prod_{k=1}^{\Lambda'} e^{-i \theta\dt H_k(s')}\big) \prod_{k=\Lambda'+1}^{\Lambda} e^{-i \theta \dt H_k(\hat{s})}\ket{\Psi}\\
\myeq{\eqref{eqn::v2_basic}} &\ \bra{s' - \theta\dt}  \big(\bra{\varphi} \prod_{k=1}^{\Lambda'} e^{-i \theta\dt H_k(s')} \prod_{k=\Lambda'+1}^{\Lambda} e^{-i \theta \dt H_k(s'-\theta\dt)}\big) \ket{\Psi}\\ 
=&\  \big(\bra{\varphi} \prod_{k=1}^{\Lambda'} e^{-i \theta\dt H_k(s')} \prod_{k=\Lambda'+1}^{\Lambda} e^{-i \theta \dt H_k(s'-\theta\dt)}\big) \ket{\Psi(s'-\theta\dt, \cdot)}\\ 
=&\ \bra{s'-\theta\dt}\bra{\varphi} (\id \otimes U_F(s',s'-\theta\dt)) \ket{\Psi}. 
\end{align*}
Thus completes the proof.

\subsection{Explicit form of the HDR scheme with the FRS weight}

When one applies  the Forest-Ruth-Suzki's formula to HDR scheme, one has
\begin{align}
        \label{eqn::HDR_FRS}
        \begin{aligned}
        &\ \mathcal{T}e^{-i\int_{t}^{t+\dt} H_1(s) + H_2(s)\ \dd s} \\
        \approx &\  \hdu_F\big(t+\dt, t+(1-\frac{\gamma}{2})\dt\big) \hdu_B\big(t+(1-\frac{\gamma}{2})\dt, t+(1-\gamma)\dt\big)\\
        &\qquad \hdu_F\big(t+(1-\gamma)\dt, t+\frac{1}{2}\dt\big) \hdu_B\big(t+\frac{1}{2}\dt, t + \gamma\dt\big) \\
        &\qquad \hdu_F\big(t+\gamma\dt, t+\frac{\gamma}{2} \dt\big) \hdu_B\big(t+\frac{\gamma}{2}\dt, t\big) \\
        =&\  \eT{-i \int_{(1-\frac{\gamma}{2})\dt}^{\dt} H_1(t+r)\ \dd r} \eT{-i \int_{(1-\gamma)\dt}^{\dt} H_2(t+r)\ \dd r} \\
        &\qquad \eT{-i \int_{\frac{1}{2}\dt}^{(1-\frac{\gamma}{2})\dt} H_1(t+r)\dd r} 
         \eT{-i \int_{\gamma\dt}^{(1-\gamma)\dt} H_2(t+r)\dd r} \\
         &\qquad \eT{-i\int_{\frac{\gamma}{2}\dt}^{\frac{1}{2}\dt} H_1(t+r)\dd r } 
         \eT{-i \int_{0}^{\gamma\dt} H_2(t+r)\dd r} \\
         &\qquad \eT{-i \int_{0}^{\frac{\gamma}{2}\dt} H_1(t+r)\dd r}.
        \end{aligned}
\end{align}

\subsection{Proofs for \secref{sec::qDrift}}
\label{proof::qDrift}

\begin{proof}[Proof of Lemma \ref{lem::qdrift_indep}]
We shall prove the following result: for any density matrix $\rho$, 
\begin{align}
\label{eqn::error_qDrift}
\begin{aligned}
&\ e^{-i \dt H} \rho\ e^{i \dt H} - \int \dd \omega\ \lambda(\omega) e^{-i \dt H_\omega}\ \rho\  e^{i \dt H_\omega} \\
=&\ -\frac{\dt^2}{2} \begin{aligned} \iint \dd\omega_1\dd\omega_2\ \lambda(\omega_1) \big(\lambda(\omega_2) - \delta(\omega_1 - \omega_2)\big) \comm\big{H_{\omega_1}}{\comm{H_{\omega_2}}{\rho}}\end{aligned}  + \order{\dt^3}.
\end{aligned}
\end{align}
The conclusion of lemma can be readily obtained by applying the triangle inequality to this equation.

One can prove the last equation by direct expansion:
\begin{align*}
&\ e^{-i \dt H} \rho\ e^{i \dt H} - \int \dd \omega\ \lambda(\omega) e^{-i \dt H_\omega}\ \rho\  e^{i \dt H_\omega} \\
=&\ \dt^2 \Big(-\frac{1}{2} \comm\big{H}{\comm{H}{\rho}} + \frac{1}{2} \int\dd\omega\ \lambda(\omega) \comm\big{H_{\omega}}{\comm{H_{\omega}}{\rho}}\Big) + \order{\dt^3} \\
=&\ -\frac{1}{2}\dt^2 \Big(\iint \dd\omega_1\dd\omega_2\ \lambda(\omega_1)\lambda(\omega_2) \comm\big{H_{\omega_1}}{\comm{H_{\omega_2}}{\rho}} - \int\dd\omega\ \lambda(\omega) \comm\big{H_{\omega}}{\comm{H_{\omega}}{\rho}}\Big) + \order{\dt^3}\\
=&\ -\frac{1}{2}\dt^2 \Big(\iint \dd\omega_1\dd\omega_2\ \lambda(\omega_1) \big(\lambda(\omega_2) - \delta(\omega_1 - \omega_2)\big)\ \comm\big{H_{\omega_1}}{\comm{H_{\omega_2}}{\rho}}\Big) + \order{\dt^3}.
\end{align*}
\end{proof}

\begin{proof}[Proof of Lemma \lemref{lem::qdrift_err_1}]
As can be seen, the only error comes from the application of time-independent qDrift algorithm, thus by the proof of Lemma \ref{lem::qdrift_indep} or more specifically by Eq.~\eqref{eqn::error_qDrift},
\begin{align*}
&\ \exactU{t+\dt}{t}\ \rho\ \exactU{t+\dt}{t}^\dagger -  \mathcal{E}_{\lambda,t,\dt}(\rho)\\ 
=&\ \bra{t+\dt}_{s} \Bigg(- \frac{1}{2}\dt^2 \sum_{k_1, k_2} \lambda_{k_1} \big(\lambda_{k_2} - \delta_{k_1, k_2}) \comm\bigg{\wt{H}_{k_1}}{\comm{\wt{H}_{k_2}}{\ket{\one}\bra{\one}\otimes \rho}}\Bigg)\ket{t+\dt}_{s} + \order{\dt^3}\\
=&\ -\frac{1}{2}\dt^2 \sum_{k_1, k_2} \lambda_{k_1} \big(\lambda_{k_2} - \delta_{k_1, k_2}) \bra{t+\dt}_{s} \comm\bigg{\wt{H}_{k_1}}{\comm{\wt{H}_{k_2}}{\ket{\one}\bra{\one}\otimes \rho}} \ket{t+\dt}_{s} + \order{\dt^3}.
\end{align*}

Next, let us estimate
\begin{align*}
&\ \comm\bigg{\wt{H}_{k_1}}{\comm{\wt{H}_{k_2}}{\ket{\one}\bra{\one}\otimes \rho}} \\
=&\ \comm\bigg{\hat{p}_s \otimes \id + \frac{H_{k_1}(\hat{s})}{\lambda_{k_1}}}{\comm{\hat{p}_s \otimes \id + \frac{H_{k_2}(\hat{s})}{\lambda_{k_2}}}{\ket{\one}\bra{\one}\otimes \rho}} \\
=&\ \frac{1}{\lambda_{k_2}}\ \comm\bigg{\hat{p}_s \otimes \id + \frac{H_{k_1}(\hat{s})}{\lambda_{k_1}}}{\ket{f_{k_2}(s)}\bra{\one} \otimes \h_2 \rho - \ket{\one}\bra{f_{k_2}(s)} \otimes \rho \h_2} \\
=&\ \frac{1}{\lambda_{k_2}} \left(\begin{aligned}
& -i \ket{f_{k_2}'(s)}\bra{\one} \otimes \h_2 \rho + \frac{1}{\lambda_{k_1}} \ket{f_{k_1}(s)f_{k_2}(s)}\bra{\one} \otimes \h_1 \h_2 \rho \\
& - \frac{1}{\lambda_{k_1}} \ket{f_{k_2}(s)}\bra{f_{k_1}(s)} \otimes \h_2 \rho \h_1
+ \ket{\one} \bra{ (-i) f_{k_2}'(s)} \otimes \rho \h_2 \\
& - \frac{1}{\lambda_{k_1}} \ket{f_{k_1}(s)}\bra{f_{k_2}(s)} \otimes \h_1 \rho \h_2  + \frac{1}{\lambda_{k_1}} \ket{\one}\bra{f_{k_2}(s)f_{k_1}(s)}\otimes \rho \h_2 \h_1
\end{aligned}\right),
\end{align*}
where the notation $\ket{f_{k}(s)}\equiv \int \dd s\ f_{k}(s) \ket{s}$ and similar conventions apply to other places.
Given such an expansion, and with expressions \eqref{eqn::decompose} and \eqref{eqn::Hk}, one can straightforwardly validate that
\begin{align*}
&\ \exactU{t+\dt}{t} \rho \exactU{t+\dt}{t}^\dagger -  \mathcal{E}_{\lambda,t,\dt}(\rho) \\
=&\ -\frac{\dt^2}{2} \sum_{k_1, k_2} \lambda_{k_1} \big(\lambda_{k_2} - \delta_{k_1, k_2}) \Big(\frac{-i f_{k_2}'(t+\dt)}{\lambda_{k_2}} \comm{\h_{k_2}}{\rho} + \frac{f_{k_1}(t+\dt) f_{k_2}(t+\dt)}{\lambda_{k_1} \lambda_{k_2}} \comm\big{\h_{k_1}}{\comm{\h_{k_2}}{\rho}} \Big) \\
&\qquad + \order{\dt^3}\\
=&\ -\frac{\dt^2}{2} \sum_{k_1, k_2} \lambda_{k_1} \big(\lambda_{k_2} - \delta_{k_1, k_2}) \frac{f_{k_1}(t+\dt) f_{k_2}(t+\dt)}{\lambda_{k_1} \lambda_{k_2}} \comm\big{\h_{k_1}}{\comm{\h_{k_2}}{\rho}} + \order{\dt^3}.
\end{align*}
The third line is obtained by observing that $\sum_{k_1} \lambda_{k_1} (\lambda_{k_2} - \delta_{k_1,k_2}) = \lambda_{k_2} - \lambda_{k_2} = 0$.
Finally, by applying the triangle inequality, one can immediately obtain the conclusion.
\end{proof}

\begin{proof}[Proof of Lemma \ref{lem::qdrift_err_2}]
The proof is not much different from Lemma \ref{lem::qdrift_err_1}. We can similarly verify that for $\omega_1 = (k_1, r_1)$, $\omega_2 = (k_2, r_2)$, by Lemma \ref{lem::qdrift_indep}, or more specifically by Eq.~\eqref{eqn::error_qDrift}, the error term is 
\begin{align*}
&\ \exactU{t+\dt}{t}\ \rho\ \exactU{t+\dt}{t}^\dagger -  \mathcal{E}_{\mu,t,\dt}(\rho) \\
=&\ -\frac{\dt^2}{2} \iint\dd\omega_1\dd\omega_2\ \mu(\omega_1) \big(\mu(\omega_2) - \delta(\omega_1 - \omega_2)\big) 
\left(\begin{aligned} & \frac{-i f_{k_2}'(t+\dt)}{\mu(\omega_2)} \comm{\h_{k_2}}{\rho} \\
&+ \frac{f_{k_1}(t+\dt) f_{k_2}(t+\dt)}{\mu(\omega_1)\mu(\omega_2)} \comm\big{\h_{k_1}}{\comm{\h_{k_2}}{\rho}}\end{aligned}\right) \\
&\qquad + \order{\dt^3}\\
=&\ -\frac{\dt^2}{2} \iint\dd\omega_1\dd\omega_2\ \mu(\omega_1) \big(\mu(\omega_2) - \delta(\omega_1 - \omega_2)\big) 
\left(\begin{aligned} \frac{f_{k_1}(t+\dt) f_{k_2}(t+\dt)}{\mu(\omega_1)\mu(\omega_2)} \comm\big{\h_{k_1}}{\comm{\h_{k_2}}{\rho}}\end{aligned}\right) \\
&\qquad + \order{\dt^3}.
\end{align*}
Finally, one can easily get the upper bound of $C$ via the triangle inequality.
\end{proof}

\begin{proof}[Proof of Lemma \ref{lem::c_qDrift_equiv}]
As we choose Hamiltonians with the form \eqref{eqn::decompose} and \eqref{eqn::Hk},
\begin{align*}
\mathcal{E}_{\mu, t, \dt}(\rho)\ \myeq{\eqref{eqn::cqDrift_v2}}&\ \ \sum_{k} \int_{0}^{1} \dd r\ \mu(k,r)\ \eT{-i \frac{\int_{t}^{t+\dt} H_k(s)\dd s}{\mu(k,r)}}\ \rho\ \eT{i \frac{\int_{t}^{t+\dt} H_k(s)\dd s}{\mu(k,r)}} \\
=&\  \sum_{k} \int_{0}^{1} \dd r\ \mu(k,r)\ e^{-i \dt \frac{\frac{1}{\dt}\int_{t}^{t+\dt} f_k(s)\dd s}{\mu(k,r)} \h_k}\ \rho\ e^{i \dt \frac{\frac{1}{\dt}\int_{t}^{t+\dt} f_k(s)\dd s}{\mu(k,r)} \h_k}.
\end{align*}
For each $k$, define $F_k: [0,1] \to \Real^{+}$ and $\tau_k: [0,1]\to [0,1]$ by
\begin{align}
\label{eqn::tau_k}
F_k(r) = \int_{0}^{r} f_k(t+\dt s)\ \dd s, \qquad \tau_k(r) := F_k^{-1}\big(F_k(1) r\big).
\end{align}
It is easy to verify that $F_k$ and $\tau_k$ are strictly monotone increasing. 
Let us further define a hybrid probability measure $q$ via
\begin{align}
\label{eqn::q}
q\big(k, \tau_k(r)\big) := \frac{\mu(k,r) f_k\big(t+\dt \tau_k(r)\big)}{F_k(1)}.
\end{align}
Then it is easy to verify that
\begin{align*}
q\big(k, \tau_k(r)\big) \tau_k'(r) 
=&\ \frac{\mu(k,r) f_k\big(t+\dt \tau_k(r)\big)}{F_k(1)} \tau_k'(r) \\
=&\ \frac{\mu(k,r) f_k\big(t+\dt \tau_k(r)\big)}{F_k(1)} \frac{F_k(1)}{F_k'\big(F_k^{-1}\big(F_k(1) r\big)\big)} \\
=&\ \frac{\mu(k,r) f_k\big(t+\dt \tau_k(r)\big)}{f_k\big(t+\dt F_k^{-1}\big(F_k(1) r\big)\big)} 
\myeq{\eqref{eqn::tau_k}}\ \mu(k,r).
\end{align*}
Therefore, by change of variables, 
\begin{align*}
 \mathcal{E}_{\mu, t, \dt}(\rho) =&\ \sum_{k} \int_{0}^{1} \dd r\ \mu(k,r)\ e^{-i \dt \frac{F_k(1)}{\mu(k,r)} \h_k}\ \rho\ e^{i \dt \frac{F_k(1)}{\mu(k,r)} \h_k} \\
=&\ \sum_{k} \int_{0}^{1} \dd \tau_k\ q\big(k, \tau_k\big) e^{-i\dt \frac{F_k(1) f_k(t+\dt\tau_k)}{\mu(k,r) f_k(t+\dt\tau_k)} \h_k} \rho e^{i\dt \frac{F_k(1)f_k(t+\dt\tau_k)}{\mu(k,r) f_k(t+\dt\tau_k)} \h_k} \\
\myeq{\eqref{eqn::q}}&\ \sum_{k} \int_{0}^{1} \dd \tau_k\ q\big(k, \tau_k\big) e^{-i\dt \frac{f_k(t+\dt\tau_k)}{q(k,\tau_k)} \h_k} \rho e^{i\dt \frac{f_k(t+\dt\tau_k)}{q(k,\tau_k)} \h_k} \\
\myeq{\eqref{eqn::Hk}}&\ \sum_{k} \int_{0}^{1} \dd \tau_k\ q\big(k, \tau_k\big) e^{-i\dt \frac{H_k(t+\dt\tau_k)}{q(k,\tau_k)}} \rho e^{i\dt \frac{H_k(t+\dt\tau_k)}{q(k,\tau_k)}}.
\end{align*}
Finally, by removing the subscript $k$ in $\tau$ for simplicity of notations,
\begin{align*}
\mathcal{E}_{\mu, t, \dt}(\rho)  = \sum_{k} \int_{0}^{1} \dd \tau\ q\big(k, \tau\big) e^{-i\dt \frac{H_k(t+\dt\tau)}{q(k,\tau)}}\ \rho\ e^{i\dt \frac{H_k(t+\dt\tau)}{q(k,\tau)}}\  \myeq{\eqref{eqn::c-qDrift}}\  \mathcal{E}^{\text{(c-qDrift)}}_{q,t,\dt}(\rho).
\end{align*}
\end{proof}

\section{Supplementary details for numerical experiments}

\subsection{Weights for time-independent splitting schemes}
\label{app::weights}
All the following weights for fourth-order splitting schemes come from a summary in \cite{ostmeyer_optimised_2023}:
\begin{itemize}[leftmargin=\leftmarginnew{}]
\item {\bf Forest-Ruth-Suzuki (FRS)} \cite[Sect.~3.2.3]{ostmeyer_optimised_2023} refers to the splitting scheme in \eqref{eqn::product_2_body} with cycle $q = 3$, where
\begin{align*}
\begin{array}{llllllll}
a_1 &= \frac{\gamma}{2}, & \qquad
b_1 &= \gamma, & \qquad
a_2 &= \frac{1-\gamma}{2},&\qquad
b_2 &= 1-2\gamma,\\
a_3 &= a_2, &\qquad
b_3 &= b_1, &\qquad
a_4 &= a_1,
\end{array}
\end{align*}
and $\gamma = (2-2^{1/3})^{-1} \approx 1.3512071919596578$.

\item {\bf Omelyan’s Forest–Ruth-Type (FRO)} \cite[Sect.~3.2.4]{ostmeyer_optimised_2023} refers to the splitting scheme in \eqref{eqn::product_2_body} with cycle $q = 4$, where
\begin{align*}
\begin{array}{llllllll}
a_1 &= 0.172 086 559 029 5143, & \qquad
b_1 &= 0.591 562 030 755 1568, & \\
a_2 &=-0.161 621 762 210 7222 ,&\qquad
b_2 &= 1/2 - b_1 ,\\
a_3 &= 1-2(a_1+a_2), &\qquad
b_3 &= b_2, &\\
a_4 &=a_2 , &\qquad
b_4 &=b_1 , &\\
a_5 &= a_1
\end{array}
\end{align*}

\item {\bf Suzuki's fourth-order scheme (Suz4)} \cite[Sect.~3.2.7]{ostmeyer_optimised_2023} refers to the splitting scheme in \eqref{eqn::product_2_body} with cycle $q = 5$, where
\begin{align*}
\begin{array}{llllllll}
a_1 &= 0.207 245 385 897 1879, & \qquad
b_1 &= 0.414 490 771 794 3757 , & \\
a_2 &=0.414 490 771 794 3757 ,&\qquad
b_2 &=0.414 490 771 794 3757 ,\\
a_3 &=1/2 - (a_1+a_2), &\qquad
b_3 &= 1 - 2 (b_1 + b_2), &\\
a_4 &=1/2 - (a_1+a_2), &\qquad
b_4 &= b_2, &\\
a_5 &= a_2 &\qquad
b_5 &= b_1\\
a_6 &= a_1 &\qquad
\end{array}
\end{align*}

\item {\bf Ostmeyer's optimized 4th order (Ost4)} \cite[Sect.~3.2.8]{ostmeyer_optimised_2023} refers to the splitting scheme in \eqref{eqn::product_2_body} with cycle $q = 5$, where
\begin{align*}
\begin{array}{llllllll}
a_1 &= 0.092 575 474 731 957 87, & \qquad
b_1 &= 0.254 099 631 552 9392 , & \\
a_2 &= 0.462 716 031 021 0738 ,&\qquad
b_2 &= -0.167 651 724 011 9692 ,\\
a_3 &=1/2 - (a_1+a_2), &\qquad
b_3 &= 1 - 2 (b_1 + b_2), &\\
a_4 &=1/2 - (a_1+a_2), &\qquad
b_4 &= b_2, &\\
a_5 &= a_2 &\qquad
b_5 &= b_1\\
a_6 &= a_1 &\qquad
\end{array}
\end{align*}

\end{itemize}

\begin{figure}
\captionsetup[subfigure]{labelformat=empty}
\subfloat[]{\includegraphics[width=0.9\textwidth]{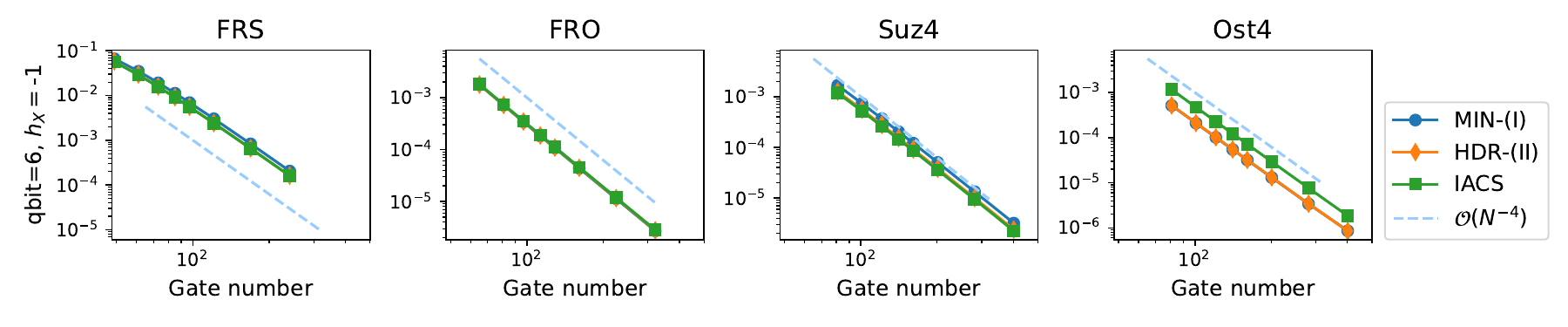}}\\
\subfloat[]{\includegraphics[width=0.9\textwidth]{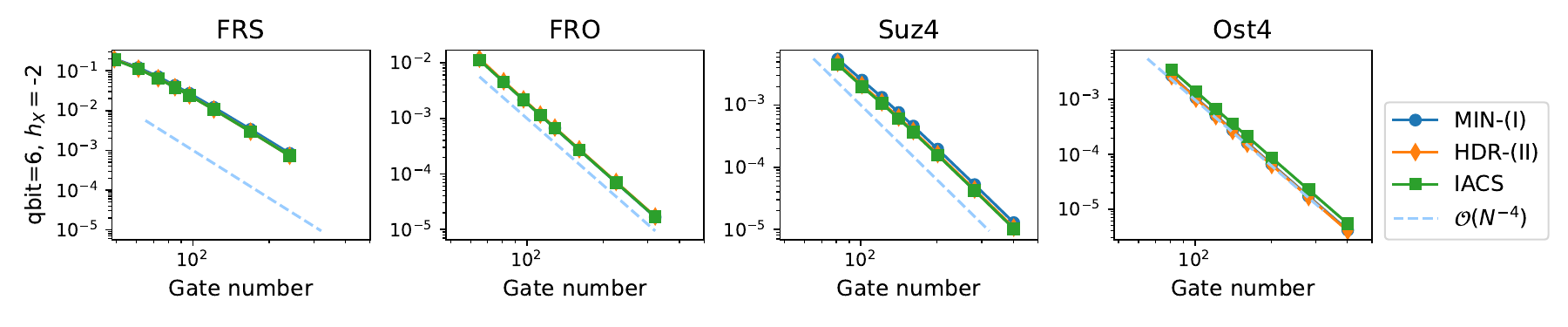}}\\
\subfloat[]{\includegraphics[width=0.9\textwidth]{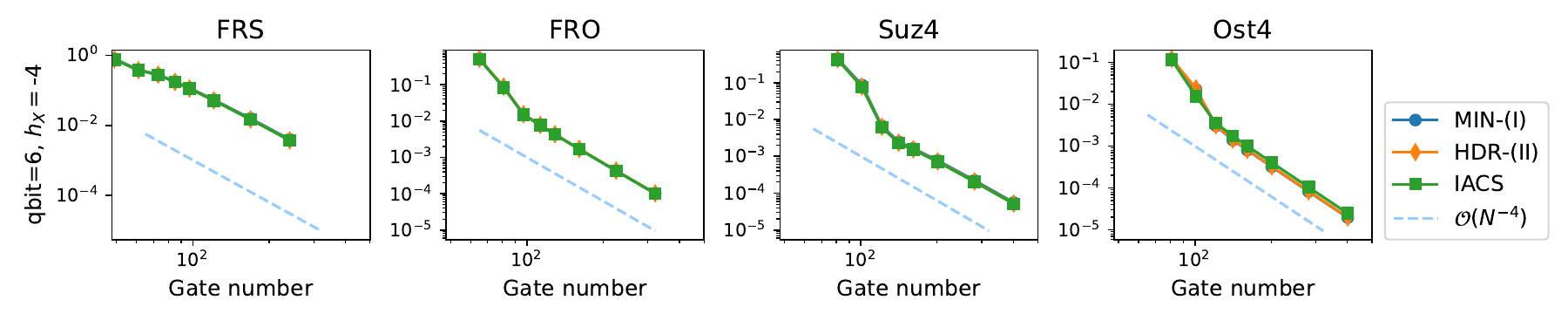}
}
\caption{Averaged simulation error in trace distance for  the quantum Ising model with respect to the gate counts $N$ for various time-dependent schemes. 
\enquote{MIN-(I)} refers to the scheme \eqref{eqn::product_td} with the ordering $H_1, \hat{p}, H_2$. HDR refers to the scheme in \eqref{eqn::product_td_v2}. IACS refers to the scheme in \cite{ikeda_minimum_2023}; see also \eqref{eqn::FRS_Magnus}. 
The blue dashed line indicates the scaling $\order{N^{-4}}$ and is the same for all subplots.}
\label{fig::Ising}
\end{figure}

\subsection{The quantum Ising chain}
\label{Ising}

We also consider the same problem in \cite[Sect.~5]{ikeda_minimum_2023} for $H(t) = f_1(t) \h_1 + f_2(t) \h_2$ where 
\begin{align*}
& \h_1 = \sum_{j=1}^{L} h_X \sigma_X^{(j)} \qquad
\h_2 = \sum_{j=1}^{L} J \sigma_Z^{(j)}\otimes \sigma_Z^{(j+1)} + h_Z \sigma_Z^{(j)}\qquad
 f_1(t) = \pi \sin(\pi t)\qquad f_2(t) = \pi,
\end{align*}
and $J = -1$, $h_Z = 0.2$. The periodic boundary condition is imposed for the  1D Ising chain, and $\sigma_{X,Y,Z}^{(j)}$ denotes the Pauli matrices for the site $j$. It can be easily verified that the maximally entangled state $\ket{+}^{\otimes L}$ is the ground state of $\h_1$ and we will use it as the initial condition. We consider $L = 6$ qubits, and various $h_X = -1, -2, -4$. We notice from \figref{fig::Ising} that when the magnitude of $h_X$ is smaller, the performance of \eqref{eqn::product_td_v2} is slightly better than IACS; overall, it is almost indistinguishable from IACS scheme for this example.

\end{document}